\newcommand{\id}{{\mathbf{1}}}
\newcommand{\Prob}{\operatorname{Pr}}
\newcommand{\kfoldV}{V}
\newcommand{\kfoldv}{v}
\newcommand{\kprodvset}{\mathcal{V}}
\newcommand{\kprodvsetsize}{\abs{\mathcal{V}}}
\newcommand{\kfoldspezv}{\bar{v}}
\newcommand{\nfoldX}{X^{(n)}}
\newcommand{\nfoldx}{x^{(n)}}
\newcommand{\nprodxset}{\mathcal{X}^{n}}
\newcommand{\nfoldY}{Y^{(n)}}
\newcommand{\nfoldy}{y^{(n)}}
\newcommand{\nprodyset}{\mathcal{Y}^{n}}
\newcommand{\nfoldZ}{Z^{(n)}}
\newcommand{\nfoldz}{z^{(n)}}
\newcommand{\nprodzset}{\mathcal{Z}^{n}}
\newcommand{\nprodzsetsize}{\abs{\mathcal{Z}}^{n}}
\newcommand{\nfoldQ}{Q^{(n)}}
\newcommand{\nfoldq}{q^{(n)}}
\newcommand{\lfoldm}{m}
\newcommand{\lprodmset}{\mathcal{M}}
\newcommand{\lprodmsetsize}{\abs{\mathcal{M}}}
\newcommand{\round}{\lceil \nicefrac{\nprodzsetsize}{\kprodvsetsize} \rceil}
\newtheorem{protocol}{Protocol}
\newtheorem{condition}{Condition}
\pgfplotsset{compat=1.18}
\begin{document}

\title{Security for Adversarial Wiretap Channels}

\date{\today}

\author{Esther H\"anggi\inst{1} \and Iy\'an M\'endez Veiga\inst{1,2} \and Ligong Wang\inst{1}}

\institute{Lucerne School of Computer Science and Information Technology, Lucerne University of Applied Sciences and Arts, Rotkreuz, Switzerland
 \and Institute for Theoretical Physics, ETH Zurich, Zurich, Switzerland}

\maketitle

\begin{abstract}
We consider the wiretap channel, where the individual channel uses have memory or are influenced by an adversary. We analyze the explicit and computationally efficient construction of information-theoretically secure coding schemes which use the inverse of an extractor and an error-correcting code. These schemes are known to achieve secrecy capacity on a large class of memoryless wiretap channels. We show that this also holds for certain channel types with memory. In particular, they can achieve secrecy capacity on channels where an adversary can pick a sequence of ``states'' governing the channel's behavior, as long as, given every possible state, the channel is strongly symmetric.
\end{abstract}

\section{The Wiretap Channel}\label{sec:overview}

The goal of the wiretap channel is for two honest parties, a sender and a receiver, connected by a (possibly noisy) communication channel to communicate secretly. The eavesdropper obtains a copy of all the messages sent over the channel, however, in a `noisier' version. 

This classic problem from information theory dates back to the $1970$'s
and was first studied by Wyner~\cite{wyner} and Csisz{\'a}r and K\"orner~\cite{csikor}. The security of the scheme \emph{only} relies on the noise in the communication channel and does not rely on any computational assumption. With the advance of quantum computers and their ability to break~\cite{Shor} {RSA}~\cite{rsa} or Diffie-Hellman~\cite{dh}, information-theoretically secure schemes such as the wiretap channel have seen renewed interest in recent years.

\begin{figure}[h!]
\centering
\begin{tikzpicture}	[line width=1pt]
    \path[clip] (-2,-1.0) rectangle (11.65,2.5);
    \node[align=center, anchor=center, rotate=90] at (-1,1.0) {\small{SENDER}};
    \node[align=center, anchor=center, rotate=-90] at (10.5,1.0) {\small{RECEIVER}};
    \node[align=center, anchor=center] at (0.5,0.5) {message $m$};
    \draw[->, line width=0.2mm, color=black] (0.5,1.0) -- (0.5,2) node[pos=0.5, anchor = west] {$f$};
    \node[align=center, anchor=south] at (-0.3,2.0) {input $\nfoldx{=}f(m)$};
    \draw[-,line width=1pt,double distance=10pt] (1,2.2) -- (8,2.2) node[pos=0.5, anchor = center] {$ChR$};
    \node[align=center, anchor=south, minimum width=1cm] at (9.7,2.0) {output $\nfoldy{=}ChR(\nfoldx)$};
    \draw[->, line width=0.2mm, color=black] (8.5,2.0) -- (8.5,1) node[pos=0.5, anchor = west] {$\bar{f}$};
        \node[align=center, anchor=south] at (8.5,0.5) {message $m'{=}\bar{f}^{-1}(\nfoldy)$};
            \draw[-,line width=1pt,double distance=10pt] (1,2.2) -- (4,0) node[pos=0.5, anchor = center, rotate=-35 ] {$ChA$};
    \node[align=center, anchor=south, minimum width=1cm] at (4.5,-1) {\small{EAVESDROPPER}};
    \node[align=center, anchor=south] at (5.5,-0.5) {output $\nfoldz{=}ChA(\nfoldx)$};
    \end{tikzpicture}
       \caption{\label{fig:wiretap-concept}Concept of the wiretap channel.}
\end{figure}
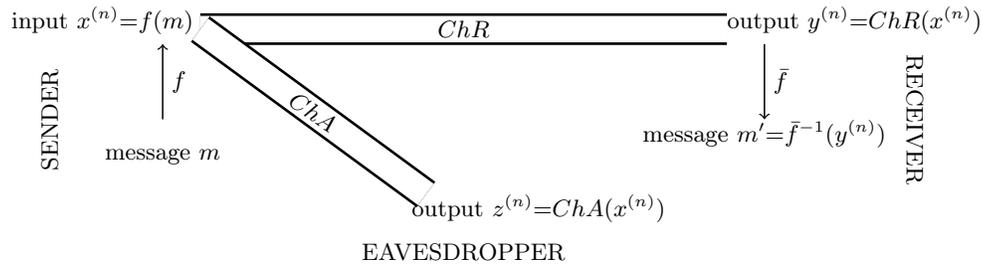

A wiretap scheme should reach two properties: first, the receiver should obtain the correct message, called \emph{correctness} (or \emph{low probability of decoding error}). This property relates to the channel connecting the honest parties $ChR$. The second property, called \emph{secrecy}, means the eavesdropper should not learn the message from the output of the adversarial channel $ChA$. 

The best asymptotic rate at which reliable and secret communication is possible is called \emph{secrecy capacity}. 
This was first studied by Wyner~\cite{wyner}, and  a general formula for the secrecy capacity of a discrete memoryless wiretap channel was given by Csisz\'ar and K\"orner \cite{csikor}: 
\begin{align*}
C_\text{sec}&=\max_{P_{VX}}\left(  I(V;Y) - I(V;Z)\right)\ .
\end{align*}
For the case when both channels are binary symmetric channels this amounts to 
$C_\text{sec}= h(p_A)-h(p_R)$, 
where $p_A$ and $p_R$ are the respective error probabilities of the adversarial and the receiver channel. 

A large body of research has focused on finding the secrecy capacity for the case where the same memoryless channel is applied repeatedly. The main focus was on the \emph{existence} of a wiretap scheme and considering its rate in an \emph{asymptotic setting}. The secrecy capacity has been determined e.g.\ for the case where the individual adversarial channels are a degraded version of the receiver's channel~\cite{wyner} and for Gaussian channels~\cite{gaussian}.

Depending on the context, the precise secrecy and correctness requirements differ. 
In \emph{information theory} the message is typically assumed to be chosen uniformly at random and secrecy and correctness are defined accordingly. \emph{Strong secrecy} is quantified by the \emph{mutual information} between the sender's input and the eavesdropper's output and this quantity is required to vanish asymptotically with a large number of channel uses. Secrecy is called \emph{weak} when the mutual information per channel use vanishes. 

In \emph{cryptography}, the \emph{worst-case} is considered instead of the \emph{average-case}. Correctness and secrecy are expected to hold when the distribution of messages is arbitrary. This prevents the adversary from retrieving any partial information or to distinguish between only two possible messages, e.g. corresponding to a ``yes'' and ``no''. 
This naturally leads to quantifying security by \emph{semantic security} or \emph{distinguishing security}~\cite{goldwassermicali}. 

In~\cite{bellare2012cryptographic}, it is shown that all the security definitions in the cryptographic context (including a generalization in terms of the mutual information) imply each other, up to some factor. Security for arbitrary messages also implies security for uniform random messages. The converse is not true in general, but can be shown to hold for some classes of channels~\cite{bellare2012cryptographic}.

Explicit wiretap schemes are known using specific error-correcting codes such as low-density parity check codes~\cite{Thangarajldpc} or polar codes~\cite{HofShamaiPolar,MahdavifarVardyPolar}. For the closely related task 
of secure key agreement from a shared resource~\cite{MaurerCommonInformation,AhlswedeCsiszar,maurerwolf} give explicit schemes based on \emph{privacy amplification}~\cite{generalizedpa} with an extractor to reach security.
The concept of extractors can also be applied to secure message transmission over the wiretap channel. In this case, public randomness is used as a seed for the extractor and the extractor needs to be invertible~\cite{CheraghchiDidierShokrollahi09,bellare2012cryptographic,hayashimatsumoto}. The schemes are similar in spirit to previously proposed ones using \emph{syndrome coding} or \emph{coset coding}~\cite{wyner,cohenzemor,cohenzemor06}. 

The schemes based on invertible extractors are \emph{constructive}, i.e., give explicit functions to encode/encrypt and decode/decrypt, and they can be combined with essentially any error-correcting code. They reach strong security for a finite number of channel uses as well as secrecy capacity asymptotically for certain types of channels~\cite{bellare2012cryptographic,hayashimatsumoto}.  In~\cite{tessaroarxiv,semantic} it has been shown that the seed for the extractor can be incorporated into the scheme to obtain an unseeded scheme, while still reaching secrecy capacity. The used functions are computationally efficient and the scheme reaches distinguishing security for any message distribution. 

All of the above results consider the case when the channel uses in the wiretap scheme are identical and memoryless. In real communication channels, however, errors often occur in bursts or may be dependent on external conditions such as the temperature or the weather. Since the channel ultimately has to be implemented physically and depends on physical properties, a real channel will not be perfectly identical or memoryless. In the worst case, the parameters governing the channel properties may even be chosen by the adversary. For the security proof to apply to real schemes, it is therefore necessary to remove these assumptions. 

A notable exception to the assumption of a discrete memoryless channel is the \emph{wiretap channel II}~\cite{wiretap2} (see also~\cite{Goldfeldcuffpermuterwt2}) where the attacker is allowed to obtain an adversarially selected fraction of bits. It dates back to Ozarow and Wyner who evaluate its secrecy capacity  and also give an explicit scheme. In~\cite{varying,WangSafaviAVWTC,blochLaneman}, a generalization of this setup to the arbitrarily varying channel and general wiretap channel is analyzed and the secrecy capacity is evaluated. For the case when channels are selected from a set of types and each type occurs with a fixed frequency,~\cite{varying} gives the secrecy capacity in a single-letter formula; however, no explicit constructive scheme is provided.

\subsection{Our Contribution}

In this paper, we analyze the schemes proposed in \cite{bellare2012cryptographic,tessaroarxiv,semantic} using slightly different techniques to show security. The new analysis has two advantages. First, it yields tighter nonasymptotic bounds for memoryless channels. Second, it applies to certain channel models that are not memoryless, for example, channels where the adversary can choose the behavior of $ChA$ subject to certain constraints; we shall elaborate on below. We therefore make progress towards showing the security of realistic channel models. 

The paper concerns \emph{wiretap} channels, where there is an intended receiver and an eavesdropper (adversary). The channel to the receiver is denoted by $ChR\colon \mathcal{X}\rightarrow \mathcal{Y}$, and the channel to the adversary by $ChA\colon \mathcal{X}\rightarrow \mathcal{Z}$. We shall mainly focus on channels with `good' symmetry properties. In particular, we focus on
adversarial channels $ChA$ where all inputs lead to the same output probability distribution upon relabelling of the values. Additionally, we restrict to 
channels 
for which a uniform input distribution achieves both the Shannon capacity on the receiver's channel and the secrecy capacity of the wiretap channel. That is,
\begin{align*}
\max_{P_X} I(X;Z) = I(X;Z)\Big|_{X\sim\textnormal{uniform}}
\end{align*}
and 
\begin{align*}
    \max_{P_{VX}} I(V;Y)-I(V;Z) = I(X;Y)-I(X;Z)\Big|_{X\sim\textnormal{uniform}}\ ,
\end{align*}
where by $\big|_{X\sim\textnormal{uniform}}$ we mean that the quantities are computed for a uniformly distributed $X$. 
Since the scheme can be combined with a variety of error-correcting codes, we do not focus on correctness. However, the above properties will allow us to use a \emph{linear code} to achieve Shannon capacity to the receiver, and, furthermore, to combine this code with an extractor to yield a secret encoding scheme for the wiretap channel that achieves secrecy capacity. 

We will revisit the explicit efficient scheme in~\cite{tessaroarxiv,semantic}, where the sender encrypts (encodes) a message using the `inverse' of an extractor and then applies an error-correcting code. The receiver first decodes the received value and then applies the extractor to obtain the message. We give a new simple security proof of this scheme and show that it remains secure even for channels where the individual channel uses can differ or have a memory between the individual runs. The adversary is allowed to choose the exact channel from a set, in particular, the adversary can always choose the \emph{order} of the channels. 

The following are our main results:
\begin{itemize}
\item We prove security for random-messages for channels where every input leads to the same output distribution (upon relabelling) and where the output distribution follows an asymptotic equipartition property. This is the case for many distributions which are not identical and independent/memoryless. The order of the channel can be chosen by the adversary (Lemma~\ref{lemma:general_security_bound}, p.~\pageref{lemma:general_security_bound}). The scheme can be combined with \emph{any} error-correcting code to ensure correctness. Our proof is easy-to-understand and reaches positive key rates on previously unattained parameter regions, as well as better rates for finite-length schemes than previous bounds (Figure~\ref{fig:comparison-bounds}, p.~\pageref{fig:comparison-bounds}). We also show that we can reach secrecy capacity when the receiver channel and the adversary's channel both reach capacity for uniform inputs (Lemma~\ref{lemma:capacity_reaching}, p.~\pageref{lemma:capacity_reaching}).
\item We give a general reduction from security for uniform random messages to security for arbitrary message distributions. Our reduction applies to schemes with a linear inverter of an extractor combined with a linear error-correcting code. The individual channel runs are on any alphabet of the form $\mathbb{Z}/p$; they are memoryless and symmetric but do not need to be identical (Theorem~\ref{th:distinguishing_security}, p.~\pageref{th:distinguishing_security}).
\item We use this technique to prove security for arbitrary message distributions of the arbitrarily varying wiretap channel with type-constrained states. The adversary is allowed to choose the state sequence. The scheme reaches secrecy capacity for strongly symmetric individual channels (under the condition that the linear error-correcting code reaches Shannon capacity for the receiver channel) (Theorem~\ref{th:securityforavwtc}, p.~\pageref{th:securityforavwtc}).
\end{itemize}

\paragraph{Outline: }
In Section~\ref{sec:preliminaries}, we present some necessary definitions and well-known theorems, explain the security model, and review the schemes. Our main contribution is in Section~\ref{sec:result}, containing the security proof. We provide the theoretical analysis which we tighten step-by-step and, in parallel, show its usefulness by applying it to a variety of examples of specific channels such as the binary symmetric channel, the wiretap channel II and arbitrarily varying wiretap channel. Section~\ref{sec:achievable_rate} analyzes the rate reached this way and shows that it reaches capacity in many cases.

While we consider random-message security in Section~\ref{sec:result}, we remove this condition in Section~\ref{sec:distinguishing_security} and characterize under which conditions this implies security for arbitrary message distributions, generalizing a statement from~\cite{bellare2012cryptographic} to channels which are not binary or identical.
Finally, Section~\ref{sec:conclusion} gives a conclusion and outlook.

\section{Background and Previous Results}\label{sec:preliminaries}

\subsection{Notation}
We denote random variables by capital letters, such as $X$, the sets of their possible values by calligraphic letters, like $\mathcal{X}$, the cardinality of such sets by $\abs{\mathcal{X}}$, and their realizations by lower-case letters like $x$. The probability that the random variable $X$ takes value $x$ is $P_X(x)$. Sometimes we drop the index when the random variable is clear from the context. All the random variables in this paper take values in discrete sets of finite cardinality.

To specifically emphasise that a random variable consists of $n$ symbols from a set $\mathcal{Y}$, we denote it by $\nfoldY{=}(Y_0, \ldots, Y_{n-1})$ and its value by $\nfoldy{=}(y_0, \ldots, y_{n-1})$.

The \emph{joint probability distribution} of two (or more) random variables $X$ and $Y$ is denoted by $P_{XY}(x,y)$. 
The \emph{conditional probability} of $Y=y$  given $X{=}x$ with $P_X(x){>}0$ is  
$P_{Y|X=x}(y){=}\frac{P_{XY}(x,y)}{P_X(x)}$ and the \emph{conditional probability distribution} $P_{Y|X}$ is  
$P_{Y|X}(y,x){=} P_{Y|X=x}(y)$. 
A conditional probability distribution is similar to a \emph{stochastic kernel} taking the random variable $X$ as input and giving a (probabilistic) output $Y$, depending on the input $X=x$. 

Two random variables $X$ and $Y$ are called \emph{independent} if  
$P_{XY}(x,y){=}P_{X}(x)\cdot P_{Y}(y)$ for all $x,y$.

\begin{mdframed}
Let us denote by $\vv{p}(X|Z)$ the vector which contains the probabilities $P_{X|Z}(x,z)$ ordered by values with the first one being the largest. I.e., the vector $\vv{p}(X|Z)$ contains $n=\abs{\mathcal{X}}{\cdot} \abs{\mathcal{Z}}$ elements ${p_i}(X|Z)$, such that ${p_0}(X|Z){:=}\max_{x,z}P_{X|Z}(x,z)$ and ${p_0}(X|Z){\geq} {p_1}(X|Z){\geq }\ldots{\geq} {p_{n-1}}(X|Z)$, where the ordering is over the elements iterating through all random variables. In contrast, the vector $\vv{p}(X|Z{=}z)$ only contains the $\abs{\mathcal{X}}$ elements for this fixed value of $Z{=}z$ and the ordering is only over $x\in \mathcal{X}$.

With this notation, distributions containing the same probabilities, but associated with different symbols, correspond to the same vector.
\end{mdframed}

The \emph{expected value} of  a random variable $X$ is 
$\mathbf{E}_X (X) {=}\sum_{x\in \mathcal{X}}P_X(x)\cdot x$. The \emph{entropy} of a random variable $X$ is $\mathrm{H}(X){=}-\sum_{x\in\mathcal{X}}P_X(x)\log_2 P_X(x)$. The \emph{binary entropy function} of $p$ is the entropy of a Bernoulli distribution with parameter $p$: $h(p){=}-p\log_2p-(1-p)\log_2(1-p)$.

A special probability distribution is the \emph{uniform} distribution, i.e., the one where all possible outcomes are equally likely, defined as $P_U(u) = \frac{1}{|\mathcal{U}|}$. 
We will often use the letter $U$ (for `uniform') to denote a random variable which is uniformly distributed. 
A random variable $F$ which is drawn uniformly at random from a set $\mathcal{F}$ will be denoted by $F\in_R \mathcal{F}$.

\subsection{Properties of Distributions}
A useful measure of how different two distributions $P$ and $Q$ on the same set $\mathcal{X}$ are is the \emph{relative entropy} $D(P||Q){=}\sum_{x\in\mathcal{X}}P_X(x)\log_2\frac{P_X(x)}{Q_X(x)}$. A proper distance for distributions is the \emph{variational distance}, the minimal probability that a random variable drawn from one or the other distribution takes a different value. 
\begin{definition}
Let $P$ and $Q$ be distributions 
over $\mathcal{X}$. The \emph{variational distance} (also called \emph{statistical distance}) between $P$ and $Q$ is
\begin{align}
\nonumber d(P,Q)&=\frac{1}{2} \sum_{x\in \mathcal{X}}\Big|P(x)-Q(x) \Big|\ .
\end{align}
Two distributions $P$ and $Q$ with variational distance at most $\varepsilon$ are called \emph{$\varepsilon$-close} and the set of all $\varepsilon$-close distributions to a certain distribution $P$ is denoted by $\mathcal{P}_P^{\varepsilon}$, i.e., 
\begin{align*}
  \mathcal{P}_P^{\varepsilon}&=\{Q'\;|\; d(P,Q')\leq \varepsilon\}  .
\end{align*}
The \emph{conditional distance} of $P$ and $Q$ given a random variable $W$ is the expectation of the distance over $W$ 
\begin{align}
\nonumber d(P,Q|W)& = \frac{1}{2} \sum_{w\in \mathcal{W}}P_W(w)\left(\sum_{x\in \mathcal{X}}\Big|P_{X|W=w}(x)-Q_{X|W=w}(x) \Big|\right)\ .
\end{align}
\end{definition}
Of particular importance to us is the distance of a distribution $P_V$ from the uniform one. We denote this distance by $d_U(V)$. 
\begin{definition}\label{def:distancefromuniform}
The \emph{distance from uniform} of a random variable $V$ over $\mathcal{V}$ with distribution $P_V$ is the variational distance between $P_V$ and the uniform distribution over  $\mathcal{V}$, i.e.,
\begin{align}
\nonumber d_U(V)&=\frac{1}{2} \sum_{v\in \mathcal{V}}\left|P_V(v)-\frac{1}{|\mathcal{V}|}\right|\ .
\end{align}
\end{definition}

The min-entropy of $V$ given $Z$ is related to the maximal probability that someone receiving the value of $Z$ can correctly guess the value of $V$. The $\varepsilon$-smooth version of it is defined as the \emph{largest} min-entropy of any distribution which is $\varepsilon$-close to the original one.
\begin{definition}[Guessing probability of $V$ given $Z$]
The \emph{guessing probability} of $V$ given $Z$ of a joint distribution $P_{VZ}(v,z)$  is
\begin{align}
\nonumber P_\mathrm{guess}(V|Z) &= \mathbf{E}_Z \max_v  P_{V|Z}(v,Z).
\end{align}
The \emph{$\varepsilon$-guessing probability} of $V$ given $Z$ is the minimal guessing probability of $V$ given $Z$ of all joint distributions $\tilde{P}_{VZ}$ which are $\varepsilon$-close to $P_{VZ}(v,z)$
\begin{align}
\nonumber P^{\varepsilon}_{\mathrm{guess}}(V|Z)&= \min_{\tilde{P}_{VZ}\in \mathcal{P}_P^{\varepsilon}}\left( \mathbf{E}_Z \max_v  \tilde{P}_{V|Z}(v,Z)\right)\ .
\end{align}
\begin{align*}
\end{align*}
\end{definition}

\begin{definition}[min-entropy of $V$ given $Z$]
The \emph{min-entropy} of $V$ given $Z$ of a joint distribution $P_{VZ}(v,z)$  is
\begin{align}
\nonumber \mathrm{H}_{\mathrm{min}}(V|Z)&= -\log_2 P_\mathrm{guess}(V|Z)\ .
\end{align}
The \emph{$\varepsilon$-smooth min-entropy} is 
\begin{align}
\nonumber \mathrm{H}^{\varepsilon}_{\mathrm{min}}(V|Z)&= -\log_2 P^{\varepsilon}_{\mathrm{guess}}(V|Z)\ .
\end{align}
\begin{align*}
\end{align*}
\end{definition}

We shall use the asymptotic equipartition property (AEP) for sequences of i.i.d. random variables. 
\begin{theorem}[Asymptotic equipartition property (see, e.g.~\cite{coverthomas})] Let $Z_1$,\ldots, $Z_n$ be i.i.d. according to $P_Z$. Then, for any $\varepsilon>0$, 
\begin{align}\label{eq:AEP}
\Pr[ \prod_i P(Z_i)> 2^{-n(H(Z)-\varepsilon)}] &\leq \frac{\mathrm{Var}[-\log_2 P_Z(Z)]}{n\varepsilon^2}\ ,  
\end{align}
where $Z$ is distributed according to $P_Z$.
\end{theorem}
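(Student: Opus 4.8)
The plan is to read the left-hand side as a deviation statement for the empirical average of the i.i.d.\ random variables $W_i := -\log_2 P_Z(Z_i)$, $i = 1, \dots, n$, and then apply Chebyshev's inequality (the weak law of large numbers with an explicit rate). First I would record the two elementary identities that make this work: $\mathbf{E}[W_i] = -\sum_{z} P_Z(z)\log_2 P_Z(z) = \mathrm{H}(Z)$, which is just the definition of the entropy, and $\mathrm{Var}[W_i] = \mathrm{Var}[-\log_2 P_Z(Z)]$, the quantity appearing in the bound. Because the $Z_i$ are independent, so are the $W_i$, hence $\mathrm{Var}\!\left[\tfrac{1}{n}\sum_i W_i\right] = \tfrac{1}{n}\,\mathrm{Var}[-\log_2 P_Z(Z)]$; finiteness of this variance is automatic since $\mathcal{Z}$ is finite.

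Next I would rewrite the event. Taking $\log_2$ of both sides, $\prod_i P(Z_i) > 2^{-n(\mathrm{H}(Z) - \varepsilon)}$ is equivalent to $\sum_i \log_2 P(Z_i) > -n(\mathrm{H}(Z) - \varepsilon)$, that is, to $\tfrac{1}{n}\sum_i W_i < \mathrm{H}(Z) - \varepsilon$. In words, the event says precisely that the empirical average of the $W_i$ lies strictly more than $\varepsilon$ below its mean $\mathrm{H}(Z)$, and this is contained in the symmetric deviation event $\left\{\, \left| \tfrac{1}{n}\sum_i W_i - \mathrm{H}(Z) \right| \geq \varepsilon \,\right\}$.

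Finally, Chebyshev's inequality applied to $\tfrac{1}{n}\sum_i W_i$ yields
\[
\Pr\!\left[ \left| \tfrac{1}{n}\sum_i W_i - \mathrm{H}(Z) \right| \geq \varepsilon \right]
\;\leq\; \frac{\mathrm{Var}\!\left[\tfrac{1}{n}\sum_i W_i\right]}{\varepsilon^2}
\;=\; \frac{\mathrm{Var}[-\log_2 P_Z(Z)]}{n\,\varepsilon^2},
\]
and since the event of interest is a subset of this one, the claimed bound follows.

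There is no real obstacle here; this is essentially the textbook proof of the AEP. The only things worth a moment's care are the direction of the inequality when passing through the logarithm (the ``$>$'' inside the probability becomes a one-sided \emph{lower} deviation ``$<$'' for $\tfrac{1}{n}\sum_i W_i$) and the step of bounding that one-sided event by the symmetric one, which is exactly what the stated form requires; a factor-two improvement via Cantelli's one-sided inequality would be available but is not needed.
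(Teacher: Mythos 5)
Your proof is correct and is exactly the standard Chebyshev argument that the paper implicitly invokes by citing Cover and Thomas (the paper gives no proof of its own for this statement). The reduction to the empirical mean of $W_i = -\log_2 P_Z(Z_i)$, the sign bookkeeping when passing through the logarithm, and the containment of the one-sided event in the symmetric one are all handled properly, so nothing is missing.
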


The AEP can be extended to some non-i.i.d.\ cases, for example, where $Z_1,\ldots,Z_n$ is a stationary ergodic process; see, e.g., \cite[Ch.~15]{coverthomas}. In this case, the entropy $\mathrm{H}(Z)$ would be replaced by the \emph{entropy rate} of the random process, i.e., $\lim_{n\rightarrow\infty}\frac{1}{n}\mathrm{H}(Z_1,\dots,Z_n)$. Another useful scenario is where $Z_1,\ldots,Z_n$ are independently drawn according to a conditional distribution $P_{Z|S}$ conditional on a state sequence $s_1,\ldots,s_n$, and where the sequence $s_1,\ldots,s_n$ is a of a given \emph{type}, namely, the number of occurrences of every possible state is fixed and known, but their order may be arbitrary. In such cases, one can write bounds similar to \eqref{eq:AEP}. 

\subsection{Extractors}
\begin{definition}[Strong extractor]
A function \emph{$\text{Ext}\colon\mathcal{V}\times \mathcal{S}\rightarrow \mathcal{M}$} is a \emph{$(k,\varepsilon)$-strong extractor} if for all random variables $V\in \mathcal{V}$ and $Z\in\mathcal{Z}$ with $\mathrm{H}_{\mathrm{min}}(V|Z)\geq k$ and an independent seed $S\in_R \mathcal{S}$ chosen uniformly at random it holds that
\begin{align}
\nonumber d_U(\emph{\text{Ext}}(V,S),S|Z)&\leq \varepsilon\ .
\end{align}
\end{definition}

\begin{definition}[Two-universal function family]
A set of functions $f\colon\mathcal{V}\rightarrow \mathcal{M}$ for $f\in \mathcal{F}$ is called \emph{two-universal} if  for any $v_0\neq v_1 \in \mathcal{V}$
\begin{align*}
\sum_{f\in \mathcal{F}}\frac{1}{\abs{\mathcal{F}}}\mathbbm{1}[f(v_0) = f(v_1)] &\leq \frac{1}{\abs{\mathcal{M}}}\ ,
\end{align*}
where $\mathbbm{1}[\textnormal{statement}]$ equals $1$ when the statement is true and equals zero otherwise.
\end{definition}

Two-universal functions can be seen as extractors where the seed selects a particular function from the family, i.e., $f_s(v){:=}\text{Ext}(v,s)$. Two-universal functions are good strong extractors, as stated by the left-over hash lemma from~\cite{ILL,HILL}. 
We state it here in the version from~\cite{biometrics}.
\begin{theorem}[Left-over hash lemma~\cite{ILL,HILL,biometrics}]\label{thm:lohl}
Let $f_{\mathcal{S}}\colon \mathcal{V}\rightarrow \mathcal{M}$ be a two-universal hash function with $\mathcal{S}$ indicating the set of functions. Let $V$ be a random variable on $\mathcal{V}$, potentially correlated with a second random variable $Z\in \mathcal{Z}$. 
Then, for $S\in_R \mathcal{S}$,
\begin{align*}
d_U(f_S(V),S|Z)&\leq \frac{1}{2}\sqrt{\abs{\mathcal{M}}2^{-\mathrm{H}_{\mathrm{min}}(V|Z)}}\ .
\end{align*}
\end{theorem}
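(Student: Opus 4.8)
The plan is to run the classical collision-probability argument. First I would reduce the variational distance to an $\ell_2$ quantity: for any random variable $W$ on a finite set $\mathcal{W}$, Cauchy--Schwarz against the all-ones vector gives $d_U(W)\le\tfrac12\sqrt{\abs{\mathcal{W}}\sum_{w}P_W(w)^2-1}$. Applying this to $W=(f_S(V),S)$ for each fixed value $Z=z$ and then moving the expectation over $Z$ inside the square root by concavity (Jensen's inequality), I obtain
\begin{align*}
d_U(f_S(V),S|Z)\ \le\ \tfrac12\sqrt{\abs{\mathcal{M}}\abs{\mathcal{S}}\,\Gamma-1}\,,\qquad \Gamma:=\sum_{z}P_Z(z)\sum_{m,s}P_{f_S(V),S|Z=z}(m,s)^2\,,
\end{align*}
so the problem reduces to bounding the conditional collision probability $\Gamma$ of the pair $(f_S(V),S)$.

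The second step uses two-universality. Since the seed $S$ is uniform and independent of $(V,Z)$, for each fixed $z$ the inner sum equals $\tfrac{1}{\abs{\mathcal{S}}}$ times $\Pr[f_S(V)=f_S(V')]$, where $V,V'$ are two independent draws from $P_{V|Z=z}$ and $S$ is a single uniform seed (the factor $1/\abs{\mathcal{S}}$ being the probability that two independent seeds coincide). Splitting this probability according to whether $V=V'$: the $V=V'$ part contributes $\sum_v P_{V|Z=z}(v)^2$, while the $V\neq V'$ part is at most $1/\abs{\mathcal{M}}$ by the defining inequality of a two-universal family. Hence $\Gamma\le\tfrac{1}{\abs{\mathcal{S}}}\bigl(\sum_z P_Z(z)\sum_v P_{V|Z=z}(v)^2+\tfrac{1}{\abs{\mathcal{M}}}\bigr)$. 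Using $\sum_v P_{V|Z=z}(v)^2\le\max_v P_{V|Z=z}(v)$ and averaging over $z$ bounds the first term by the guessing probability, $\sum_z P_Z(z)\sum_v P_{V|Z=z}(v)^2\le P_\mathrm{guess}(V|Z)=2^{-\mathrm{H}_{\mathrm{min}}(V|Z)}$.

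Plugging in, $\abs{\mathcal{M}}\abs{\mathcal{S}}\,\Gamma-1\le\abs{\mathcal{M}}\,2^{-\mathrm{H}_{\mathrm{min}}(V|Z)}+1-1=\abs{\mathcal{M}}\,2^{-\mathrm{H}_{\mathrm{min}}(V|Z)}$, and taking square roots yields the claimed bound. I expect the delicate points to be bookkeeping rather than ideas: getting the direction of Jensen's inequality right (the $Z$-average must be pulled inside the concave square root), tracking carefully the independence of $S$ from $(V,Z)$ so that the seed contributes exactly a factor $1/\abs{\mathcal{S}}$, and checking that the two stray $1/\abs{\mathcal{M}}$ terms cancel exactly — it is precisely this cancellation that produces the clean form of the bound, so I would double-check the constants there. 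Note that the smooth min-entropy plays no role here; the statement as written follows directly from the non-smooth quantities.
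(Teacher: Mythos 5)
Your argument is correct and is essentially the standard proof of the leftover hash lemma that appears in the cited references~\cite{ILL,HILL,biometrics}; the paper itself states Theorem~\ref{thm:lohl} as an imported result without reproving it, so there is no in-paper proof to diverge from. All three steps check out: the Cauchy--Schwarz reduction to collision probability with the correct $-1$ offset, the factoring of $\tfrac{1}{|\mathcal{S}|}$ from the seed coordinate (which you can also see by regarding the collision probability of the pair $(f_S(V),S)$ with an independent copy $(f_{S'}(V'),S')$ — the event $S=S'$ has probability $1/|\mathcal{S}|$, after which one is left with the single-seed collision $f_S(V)=f_S(V')$), the split into $V=V'$ and $V\neq V'$ with two-universality on the latter, the bound $\sum_v P_{V|Z=z}(v)^2\le\max_v P_{V|Z=z}(v)$, and the exact cancellation of the $1/|\mathcal{M}|$ against the $-1$. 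Jensen is applied in the correct direction. Your closing remark is also right: only the plain min-entropy is needed here; the smoothed version \eqref{eq:epsleftoverhash} in the paper is obtained afterward by a triangle-inequality argument over $\varepsilon$-close distributions, not inside this proof.
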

Using the definition of the min-entropy, this bound is directly related to the guessing probability. We will mostly use the left-over hash lemma in terms of the $\varepsilon$-smooth min-entropy: it follows from Theorem~\ref{thm:lohl} that
\begin{align}
\label{eq:epsleftoverhash}
d_U(f_S(V),S|Z)&\leq \frac{1}{2}\sqrt{\abs{\mathcal{M}}2^{-\mathrm{H}^{\varepsilon}_{\mathrm{min}}(V|Z)}}+\varepsilon\ .
\end{align}

We will be interested in hash functions which can be \emph{inverted}, i.e., for which we can efficiently find preimages for a given seed, using additional randomness as input. 
\begin{definition}[Inverter]\label{def:inverter}
Let \emph{$\text{Ext}:\mathcal{V}\times \mathcal{S}\rightarrow \mathcal{M}$} be a strong extractor. Then \emph{$\text{Inv}:\mathcal{M} \times \mathcal{S} \times \mathcal{R} \rightarrow \mathcal{V}$} is an \emph{inverter} of $\mathrm{Ext}$ if for all $m\in \mathcal{M}$, $s\in \mathcal{S}$ and a uniformly chosen $R\in_R \mathcal{R}$, the distribution is uniform over all preimages of $m$, i.e., 
\begin{align*}
    \emph{\text{Inv}}(m,s,R)\in_R \lbrace v\in \mathcal{V}\ |\, \emph{\text{Ext}}(v,s)=m\rbrace
\end{align*}
\end{definition}

In practice, two-universal function families which are often used in the context of the left-over hash lemma include multiplication 
of a bit-string $v$ with a randomly chosen matrix over $\text{GF}(2)$~\cite{carterwegman} or multiplication of the bit-string $v$ with a randomly chosen Toeplitz matrix over $\text{GF}(2)$~\cite{Toeplitzhashing}. In the context of the present use case, we are only interested in two-universal functions for which an inverter exists. This is the case for multiplication with a random element or multiplication with a randomly chosen modified Toeplitz matrix:
\begin{enumerate}
\item Finite field extractor~\cite{carterwegman}: An $l$-bit string can be thought of as an element of the extension field $\text{GF}(2^l)$. For a uniform seed $S\in_R\text{GF}(2^l) \setminus \{0\}$, this extractor outputs the first $\lambda$ bits (denoted by $\big|_\lambda$) of the input times the seed using the finite field multiplication (denoted by $\star$), i.e.,
\begin{align*}
\text{Ext}\colon & \{0,1\}^l \times \{0,1\}^l \rightarrow \{0,1\}^\lambda \\
&(v, s) \mapsto v \star s\Big|_\lambda \ .
\end{align*}
An inverter of this extractor is
\begin{align*}
\text{Inv}\colon & \{0,1\}^\lambda \times \{0,1\}^l \times \{0,1\}^{l-\lambda} \rightarrow \{0,1\}^l \\
&(m, s, r) \mapsto s^{-1} \star (m\,||\,r)\,,
\end{align*}
where $||$ means concatenation, and the inverse of the seed is with respect to the finite field multiplication.   
This extractor and its inverter can be implemented efficiently with complexity $\mathcal{O}(n\cdot\log n\cdot\log\log n)$ using the Sch\"onhage-Strassen algorithm~\cite{Schnhage1971} (see, e.g., \cite{GMP-SS-alg}).
\item Modified Toeplitz hashing~\cite{hayashi-toeplitz}: A seed $s\in\text{GF}(2^{l-1})$ can be used to construct an $\lambda\times(l-\lambda)$ Toeplitz matrix $T(s)$. The extractor is defined as the matrix-vector multiplication of the input with the Toeplitz matrix concatenated with an identity matrix $\id_\lambda$, i.e.,
\begin{align*}
\text{Ext}\colon & \{0,1\}^l \times \{0,1\}^{l-1} \rightarrow \{0,1\}^\lambda \\
&(v,s) \mapsto \begin{bmatrix}T(s) & \id_\lambda\end{bmatrix}v
.\end{align*}
A corresponding inverter can be constructed with additional randomness $r\in\{0,1\}^{l-\lambda}$ as
\begin{align*}
\text{Inv}\colon & \{0,1\}^\lambda \times \{0,1\}^{l-1} \times \{0,1\}^{l-\lambda} \rightarrow \{0,1\}^l \\
&(m, s, r) \mapsto \begin{bmatrix}\id_{l-\lambda} & \mathbf{0} \\ -T(s) & \id_\lambda \end{bmatrix}\begin{bmatrix}r \\ m\end{bmatrix} = \begin{bmatrix}r \\ -T(s)r + m\end{bmatrix}\ .
\end{align*}
We can check that this is indeed the inverter of the above extractor:
\begin{align*}
\text{Ext}\big(\text{Inv}(m,s,r)\big)&=\begin{bmatrix}T(s) & \id_\lambda \end{bmatrix}\begin{bmatrix}r \\ -T(s)r + m\end{bmatrix} = T(s)r-T(s)r+m=m.
\end{align*}
There exist efficient matrix-vector multiplication algorithms to implement this extractor with complexity $\mathcal{O}(n\cdot\log n)$ (see, e.g.\ Sec.~4.8 from \cite{golub2013matrix}).
\end{enumerate}

\subsection{Codes and Channels}
We recall some basic concepts of error-correcting codes. See, e.g.~\cite{coverthomas,loeliger} for more details.
\begin{definition}[Error-correcting code]
Let $\mathcal{A}$ be a nonempty set and let $n$ be a positive integer. An \emph{error-correcting code} over $\mathcal{A}$ is a nonempty subset $C\subseteq \mathcal{A}^n$. The integer $n$ is called \emph{length} of the code. The elements of the code $c \in C$ are called \emph{codewords}. The set $\mathcal{A}$ is called the \emph{alphabet} of the code. 
\end{definition}

\begin{definition}[Linear error-correcting code]
A \emph{linear error-correcting code} over a finite field $F$ is a subspace of the vector space $F^n$. 
\end{definition}

We will use the fact that the number of elements $\abs{F}$ of a finite field $F$ is $p^k$ for some prime number $p$. We will also consider fields of the form $\mathbb{Z}/p$ for a prime $p$, which contain $p$ elements. 
Note that the vector space $F^n$ contains $\abs{F}^n$ elements and the number of elements $\abs{C}$ in a subspace $C$ divides the number of elements of $F^n$. Furthermore, any projection onto the first $k$ dimensions (symbols) of a subspace $C$ results in a subspace of $F^k$, the number of elements therefore divides $\abs{F}^k$.

\begin{definition}[Encoding]
The \emph{encoder} of an error-correcting code $C$ is a bijective map from $\kprodvset\rightarrow C \subseteq \mathcal{A}^n$. 
\end{definition}

\begin{definition}[Decoding] A \emph{decoder} of an error-correcting code $C\subseteq \mathcal{A}^n$ is a function $\mathcal{A}^n\rightarrow \kprodvset$.
\end{definition}

\begin{definition}[Channel] 
A \emph{channel} $ChA:\mathcal{X}\rightarrow \mathcal{Z}$ is described by a conditional probability distribution $P_{Z|X}(z,x)$. This conditional probability distribution is sometimes denoted by a matrix $W$ (or $W(z|x)$) called the \emph{transition matrix} of the channel. 
\end{definition}
To emphasize that the distribution of a random variable, or alternatively an element drawn from this distribution, is obtained from sending an input $x$ through the channel $ChA$, we sometimes denote this by $ChA(x)$.

\begin{definition}[Symmetric channel]
A \emph{channel} $ChA:\mathcal{X}\rightarrow \mathcal{Z}$ is called \emph{strongly symmetric} when all rows and columns of the transition matrix $W$ are permutations of each other. It is called \emph{symmetric} when there exists a partition of the outputs $\mathcal{Z}{=} \bigcup_v \mathcal{Z}_v$ such that each submatrix of $W$ induced by an element of the partition is strongly symmetric, i.e., for every $x\neq x'$ and $z\neq z'$ with $z,z'\in \mathcal{Z}_{\bar{v}}$, there exist permutations $\pi^{x\mapsto x'}:z\mapsto z'$ and 
$\pi^{z\mapsto z'}:x\mapsto x'$ such that 
\begin{align*}
 W(z|x) &= W(\pi^{x\mapsto x'}(z)|x') = W(z'|\pi^{z\mapsto z'}(x)) =  W(z'|x').
\end{align*}
\end{definition}

For a channel that acts on a sequence of $n$ input symbols and gives $n$ output symbols, we sometimes write $ChA^{(n)}$. Of special interest are channels which consist of $n$ channels, each only acting locally on the $i$'th input and output symbol. When additionally all the $n$ channels are the same, this is usually called a \emph{discrete memoryless channel}. We add the specification `identical' or `not necessarily identical' to distinguish the two cases. 

\begin{definition}
An $n$-fold \emph{not necessarily identical memoryless channel} $ChA^{(n)}$ is a channel of the form
\begin{align*}
    P_{\nfoldZ|\nfoldX}(\nfoldz,\nfoldx)&= \prod_{i=0}^{n-1} P_{Z_i|X_i}(z_i,x_i)\ .
\end{align*}
When additionally $P_{Z_i|X_i}(z_i,x_i)=P_{Z_j|X_j}(z_j,x_j)$ for all $i,j$, the channel is called \emph{identical memoryless channel} and we may write $ChA^{(n)}=\bigotimes_i ChA_i$. 
\end{definition}

We will often first apply a (randomized) function $f:\mathcal{M}\rightarrow \nprodxset$, such as an inverter or an error-correcting code, to a message and then apply a channel $ChA^{(n)}:\nprodxset\rightarrow \nprodzset$ to the result of the function. This defines a new channel, which we denote by $Ch{=}ChA^{(n)}\circ f:\mathcal{M}\rightarrow \nprodzset$.

\subsection{Modelling Secure Message Transmission in the Wiretap Scenario}

We will compare our \emph{real} cryptographic system to an \emph{ideal} system which is secure by construction~\cite{pw,bpw,canetti}  using the framework of \emph{random systems}~\cite{Maurer02}. This naturally leads to  \emph{distinguishing security} as metric, however, by~\cite{bellare2012cryptographic}, this is equivalent to other commonly used metrics. 

A \emph{system} is an abstract device taking inputs and
giving outputs at one or more \emph{interfaces} and is characterized by the 
probability distributions of the outputs given the inputs. 
The closeness of two systems $\mathcal{S}_0$ and $\mathcal{S}_1$ is measured by introducing an additional system called \emph{distinguisher}. The distinguisher $\mathcal{D}$ interacts with another system guessing which system it is connected to. 

The \emph{distinguishing advantage between systems $\mathcal{S}_0$ and $\mathcal{S}_1$} is defined in terms 
of the probability of correctly recognizing the system when connected to one of the two at random.  
\begin{definition}
	The \emph{distinguishing advantage between two systems $\mathcal{S}_0$ and $\mathcal{S}_1$ }is 
	\begin{eqnarray}
		\nonumber \delta(\mathcal{S}_0, \mathcal{S}_1)&=& \max_{\mathcal{D}}[P(B=1|\mathcal{S}=\mathcal{S}_1)-P(B=1|\mathcal{S}=\mathcal{S}_0)]\ ,
	\end{eqnarray}
	where the maximum ranges over all distinguishers $\mathcal{D}$ connected to a system $\mathcal{S}$ and where $B$ denotes 
	the output of the distinguisher. 
	Two systems $\mathcal{S}_0$ and $\mathcal{S}_1$ are called \emph{$\epsilon$-indistinguishable} if $\delta(\mathcal{S}_0, \mathcal{S}_1){\leq} \epsilon$.
\end{definition}

The probability of any event $\mathcal{E}$, defined in a scenario involving the ideal system $\mathcal{S}_0$ cannot 
differ by more than this quantity from the probability of a corresponding event in a scenario where $\mathcal{S}_0$ 
has been replaced by the real system $\mathcal{S}_1$ and, therefore, the resulting security is \emph{composable}~\cite{pw,bpw,canetti}. 

The distinguishing advantage is a \emph{pseudo-metric}, in particular, it fulfils the triangle inequality 
\begin{eqnarray}
\nonumber	\label{eq:triangle} \delta(\mathcal{S}_0,\mathcal{S}_1)+\delta(\mathcal{S}_1,\mathcal{S}_2)&\geq & \delta(\mathcal{S}_0,\mathcal{S}_2)\ .
\end{eqnarray}

The ideal system for secure message transmission is one where the sender inputs a message $m$, the receiver receives the same message $m$ and the eavesdropper receives nothing, or, to be precise, receives outputs which are independent from the message transmission system.\footnote{Since the outputs are uncorrelated to the actual system the adversary can simulate these output distributions themselves and does therefore not gain anything compared to attacking a system which outputs nothing.} This setup is depicted in Figure~\ref{fig:wiretap}. In contrast to many other works on wiretap security, we allow the adversary to choose certain channel properties. This is modelled by the input $W$ and reflects the choice of a strategy by the adversary. The adversary then receives the output of the adversarial channel denoted by the random variable $Z$, which will, in general, depend on the input. To indicate this dependency, we will sometimes use superscripts in probability distributions or guessing probabilities, e.g. $P^w(Z)$ or $P^{\varepsilon\ w}_{\mathrm{guess}}(\kfoldV|\nfoldZ)$. The adversary additionally obtains the seed value $S=s$.   

\begin{figure}[ht] \label{fig:ideal}
	\centering
	\begin{tikzpicture}[scale=1.0]   
            \path[clip] (-3.8,-1.0) rectangle (3.8,3.5);
		\node[align=left, anchor=west] at (-3,3.3) {The ideal system:};
		\node[align=right, anchor=south west, rotate=90] at (-3.1,0) {\small{SENDER}};
		\node[align=left, anchor=south east, rotate =-90] at (3.1,0) {\small{RECEIVER}};
		\node[align=left, anchor=west] at (-3,-0.3) {\small{EAVESDROPPER}};
		\draw[line width=1mm] (-3,0) rectangle (3,3);
		\draw[<-, line width=0.2mm, color=black] (-2.2,2.6) -- (-3.2,2.6) node[anchor = east]{$m$};	
		\draw[->, line width=0.2mm, color=black] (2.2,2.6) -- (3.2,2.6) node[anchor = west] {$m$};	
		\draw[|->, line width=0.1mm, color=black] (0.9,-0.2) -- (0.9,-0.5) node[anchor = north] {$Z$};
    \draw[|->, line width=0.1mm, color=gray] (0.5,-0.2) -- (0.5,-0.5) node[anchor = north] {$S$};
        \draw[<-, line width=0.1mm, color=gray] (0.1,0.5) -- (0.1,-0.5) node[anchor = north] {$W$};
	\end{tikzpicture}
	\begin{tikzpicture}[scale=1.0]
             \path[clip] (-3.95,-1.0) rectangle (3.95,3.5);
		\node[align=left, anchor=west] at (-3,3.3) {The real system:};
		\node[align=right, anchor=south west, rotate=90] at (-3.1,0) {\small{SENDER}};
		\node[align=left, anchor=south east, rotate =-90] at (3.1,0) {\small{RECEIVER}};
		\node[align=left, anchor=west] at (-3,-0.3) {\small{EAVESDROPPER}};
		\draw[line width=1mm] (-3,0) rectangle (3,3);
  		\draw[<-, line width=0.2mm, color=black] (-2.2,2.6) -- (-3.2,2.6) node[anchor = east]{$m_A$};	
		\draw[->, line width=0.2mm, color=black] (2.2,2.6) -- (3.2,2.6) node[anchor = west] {$m_B$};
		\draw[line width=0.5mm] (-1,0.3) rectangle (1,0.8)  node[pos=.5, text width=3cm, align=center] {\small{wiretap ch.}};
		\node[align=center, anchor=south] at (-2.0,2.1) {\small{seed ${\in_R} \mathcal{S}$}};
		\draw[->, line width=0.1mm, color=black] (-2.2,2.3) -- (-2.2,2.0) node[anchor = north, inner sep=0em] {\small{$\mathrm{INV}$}};
		\draw[->, line width=0.1mm, color=black] (-2.2,1.7) -- (-2.2,1.4) node[anchor = north, inner sep=0em] {\small{$\mathrm{ENC}$}};
		\draw[->, line width=0.1mm, color=black] (-2.2,1.1) -- (-2.2,0.8) node[anchor = north, inner sep=0em] {$x||s$};		
		\draw[<-, line width=0.1mm, color=black] (-1,0.55) -- (-1.7,0.55);
		\draw[->, line width=0.1mm, color=white] (2.2,2.8) -- (2.2,2.4) node[anchor = north, inner sep=0em] {\textcolor{black}{\small{$\mathrm{EXT}$}}};
\draw[<-, line width=0.1mm, color=black] (2.2,2) -- (2.2,1.6) node[anchor = north, inner sep=0em] {\small{$\mathrm{DEC}$}};
\draw[<-, line width=0.1mm, color=black] (2.2,1.2) -- (2.2,0.8) node[anchor = north, inner sep=0em] {$y||s'$};
		\draw[->, line width=0.1mm, color=black] (1,0.55) -- (1.7,0.55);
		\draw[->, line width=0.1mm, color=black] (0.9,0.3) -- (0.9,-0.5) node[anchor = north] {$Z$};
    \draw[->, line width=0.1mm, color=gray] (0.5,0.3) -- (0.5,-0.5) node[anchor = north] {$S$};
        \draw[<-, line width=0.1mm, color=gray] (0.1,0.3) -- (0.1,-0.5) node[anchor = north] {$W$};
	\end{tikzpicture}
	\caption{\label{fig:wiretap} The \emph{real} (right) and \emph{ideal} (left) message transmission system. The \emph{ideal} system $\mathcal{S}_{\mathrm{ideal}}$ outputs the sender's message to the receiver and nothing to the eavesdropper. }
\end{figure}
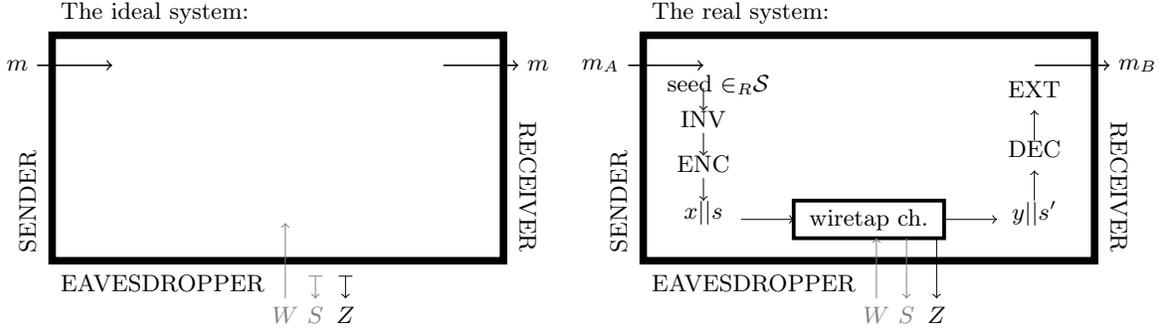

\begin{definition}
    The \emph{ideal message transmission system} with adversarial input takes a message $m\in \mathcal{M}$ as input on the sender's side and outputs the same $m$ on the receiver's side. It takes an input $w\in \mathcal{W}$ on the eavesdropper's side and output's the dummy symbol.
\end{definition}

Due to the triangle inequality on systems, it is possible to introduce an intermediate system and divide the requirements for secure message transmission into two aspects (see Figure~\ref{fig:secandcorr}): 
\begin{itemize}
    \item \emph{correctness} (or \emph{decoding}): the receiver obtains the correct message, and
    \item \emph{secrecy}: the eavesdropper learns nothing about the message.
\end{itemize}
A real system which is $\epsilon_{cor}$-correct and $\epsilon_{sec}$-secret is $\epsilon$-secure, with $\epsilon{=}\epsilon_{cor}{+}\epsilon_{sec}$.

\begin{figure}[ht]
	\centering
	\begin{tikzpicture}[scale=0.6]   
            \path[clip] (-4.5,-1.3) rectangle (22.5,4.2);
		\draw[<->, line width=0.1mm, color=black] (3.2,-0.5) -- (5.8,-0.5) node[pos=0.5,anchor = north]{$\epsilon_{sec}$};   
  		\draw[<->, line width=0.1mm, color=black] (12.2,-0.5) -- (14.8,-0.5) node[pos=0.5,anchor = north]{$\epsilon_{cor}$};
		\node[align=left, anchor=west] at (-4.5,3.5) {Ideal system:};
		\draw[line width=1mm] (-3,0) rectangle (3,3);
		\draw[<-, line width=0.2mm, color=black] (-2,2) -- (-3.5,2) node[anchor = east, inner sep=0em]{$m$};	
		\draw[->, line width=0.2mm, color=black] (2,2) -- (3.5,2) node[anchor = west, inner sep=0em] {$m$};	
		\draw[|->, line width=0.1mm, color=black] (1.7,-0.2) -- (1.7,-0.5) node[anchor = north] {$Z$};
    \draw[|->, line width=0.1mm, color=gray] (0.9,-0.2) -- (0.9,-0.5) node[anchor = north] {$S$};
        \draw[<-, line width=0.1mm, color=gray] (0.1,0.7) -- (0.1,-0.5) node[anchor = north] {$W$};
  \begin{scope}[shift={(9,0)}]
		\node[align=left, anchor=west] at (-4.5,3.5) {Intermediate system:};
		\draw[line width=1mm] (-3,0) rectangle (3,3);
		\draw[<-, line width=0.2mm, color=black] (-2,2) -- (-3.5,2) node[anchor = east, inner sep=0em]{$m$};	
		\draw[->, line width=0.2mm, color=black] (2,2) -- (3.5,2) node[anchor = west, inner sep=0em] {$m$};	
		\draw[->, line width=0.1mm, color=black] (1.7,0.7) -- (1.7,-0.5) node[anchor = north] {$Z$};
    \draw[->, line width=0.1mm, color=gray] (0.9,0.7) -- (0.9,-0.5) node[anchor = north] {$S$};
        \draw[<-, line width=0.1mm, color=gray] (0.1,0.7) -- (0.1,-0.5) node[anchor = north] {$W$};
        \end{scope}
  \begin{scope}[shift={(18,0)}]
		\node[align=left, anchor=west] at (-4.5,3.5) {Real system:};
		\draw[line width=1mm] (-3,0) rectangle (3,3);
		\draw[<-, line width=0.2mm, color=black] (-2,2) -- (-3.5,2) node[anchor = east, inner sep=0em]{$m$};	
		\draw[->, line width=0.2mm, color=black] (2,2) -- (3.5,2) node[anchor = west, inner sep=0em] {$m'$};;	
		\draw[->, line width=0.1mm, color=black] (1.7,0.7) -- (1.7,-0.5) node[anchor = north] {$Z$};
    \draw[->, line width=0.1mm, color=gray] (0.9,0.7) -- (0.9,-0.5) node[anchor = north] {$S$};
        \draw[<-, line width=0.1mm, color=gray] (0.1,0.7) -- (0.1,-0.5) node[anchor = north] {$W$};
        \end{scope}
	\end{tikzpicture}
	\caption{\label{fig:secandcorr} We introduce an intermediate system which replaces the receiver's output by the input.}
\end{figure}
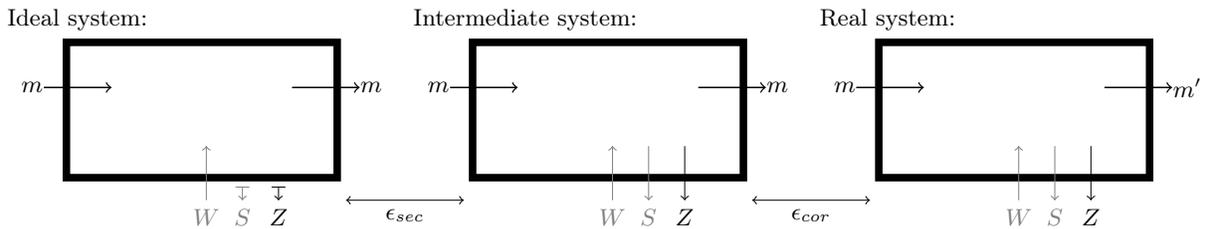

In this paper, we consider secrecy and correctness in two situations: one where the input message $m$ is chosen uniformly at random from the message space $\mathcal{M}$ (indicated by the superscript `rm') and the other where $m$ is chosen arbitrarily from the message space (indicated by the superscript `mt'). Secrecy and correctness are defined accordingly for random messages or taking the worst message (or message distribution). We note that, to model the former, random-message situation, one should consider $m$ as part of the system, instead of as an input to the system; recall Fig.~\ref{fig:ideal}. For brevity, we shall not give a graphic illustration of this system.

Secrecy and correctness can depend on the adversary's strategy. We require security to hold for \emph{any} adversarial strategy. 
\begin{definition}
    A system $\mathcal{S}_{\mathrm{real}}$ is \emph{$\epsilon^{rm}_{cor}$-correct for random messages} when for any adversarial strategy $w\in \mathcal{W}$
    \begin{align*}
        \sum_m \frac{1}{\abs{\mathcal{M}}} \Pr^w[m'\neq m]&\leq \epsilon^{rm}_{cor}\ .
    \end{align*}
    It is \emph{$\epsilon^{rm}_{sec}$-secret for random messages} when 
    \begin{align*}
       \max_{w\in \mathcal{W}} d_U(M|Z,W=w)&\leq \epsilon^{rm}_{sec}\ .
    \end{align*}
\end{definition}

\begin{definition}
    A system $\mathcal{S}_{\mathrm{real}}$ is \emph{$\epsilon^{mt}_{cor}$-correct for arbitrary messages} when for any adversarial strategy $w\in \mathcal{W}$
    \begin{align*}
        \max_m \Pr^w[m'\neq m]&\leq \epsilon^{mt}_{cor}\ .
    \end{align*}
    It is \emph{$\epsilon^{mt}_{sec}$-secret for arbitrary messages} when
    \begin{align*}
     \max_{w\in \mathcal{W}}  \max_{m,\bar{m}\in \mathcal{M}} d\big((Z(m),S),(Z(\bar{m}),S)|W=w\big)&\leq \epsilon^{mt}_{sec}\ ,
    \end{align*}
    where we denoted by $Z(m)$ the output distribution of the adversarial channel upon input message $m$. 
\end{definition}

\subsection{Protocols from \cite{tessaroarxiv} Using an Inverter of an Extractor}

The explicit scheme from~\cite{tessaroarxiv} for which we will show security is depicted in Figure~\ref{fig:wiretap-scheme} and described in Protocols~\ref{prot:prot} and~\ref{prot:unseeded}. 

The protocol uses as building block a \emph{seeded wiretap channel} (see~\cite{tessaroarxiv} and~\cite{hayashimatsumoto}). This variant of a wiretap channel assumes that all parties have access to a public random seed. The sender then applies the inverse of a strong seeded extractor and an error-correcting code to the message before sending it over the channel. The error-correcting code ensures correctness. The strong seeded extractor ensures that the message looks uniform for any adversary with high enough min-entropy about the message. 

The seeded wiretap scheme can be transformed into an unseeded wiretap scheme by sending the seed over the channel~\cite{tessaroarxiv}. As we shall later see, since the seed can be reused several times, this does not affect the asymptotic rate.

\begin{protocol}[Seeded Wiretap scheme]\label{prot:prot}
Precondition: Sender and receiver have access to a seed $s\in_R \mathcal{S}$.
\begin{enumerate}
\item \label{prot:enum:1} The sender chooses a message $\lfoldm\in \lprodmset$ and randomness $r\in_R \mathcal{R}$
\item \label{prot:enum:2} The sender calculates $\nfoldx{=}\mathrm{ECC}(\mathrm{INV}(\lfoldm,s,r))$
\item \label{prot:enum:3} The sender sends $\nfoldx$ over the channel $ChR^{(n)}$.
\item  \label{prot:enum:4} The receiver obtains $\nfoldy$ according to $\nfoldY=ChR^{(n)}(\nfoldx)$.
\item \label{prot:enum:5}The receiver calculates $m'=\mathrm{EXT}(\mathrm{DEC}(\nfoldy),s)$.
\end{enumerate}
\end{protocol}

\begin{figure}[ht]
\centering
\begin{tikzpicture}	[line width=1pt, scale=1]
			\path[clip] (0.9,0.95) rectangle (5.25,6.9);
   		\node[align=right, anchor=south west, inner sep=0em] at (1,6.5) {\textbf{Seeded scheme:}};
			\draw[xstep=0.2cm, ystep=0.2cm,black,thin] (1.0-0.0001,1.4-0.0001) grid (3.0+0.0001,1.6+0.0001);
   		\node[align=left, anchor=south west] at (1.0,0.8) {message $m\in \lprodmset$};
     	\draw[xstep=0.2cm, ystep=0.2cm,black,thin] (2.6-0.0001,2.2-0.0001) grid (3.4+0.0001,2.4+0.0001);
   		\node[align=left, anchor=south west] at (2.05,1.7) {randomness $r{\in_R}\mathcal{R}$};
          \draw[xstep=0.2cm, ystep=0.2cm,black,thin] (2.6-0.0001,3.0-0.0001) grid (3.4+0.0001,3.2+0.0001);
   		\node[align=left, anchor=south west] at (2.05,2.5) {seed $s{\in_R}\mathcal{S}$};
     			\draw[xstep=0.2cm, ystep=0.2cm,black,thin] (1.0-0.0001,4.4-0.0001) grid (4.0+0.0001,4.6+0.0001);
			\draw[->, line width=0.2mm, color=black] (2,1.6) -- (2,4.4) node[pos=0.9, anchor = west] {$\kfoldv{=}\mathrm{INV}(m,s,r)$};
   			\draw[->, line width=0.2mm, color=black] (2.6,2.3) -- (2.0,2.3);
         			\draw[->, line width=0.2mm, color=black] (2.6,3.1) -- (2.0,3.1);
     			\draw[xstep=0.2cm, ystep=0.2cm,black,thin] (1.0-0.0001,6.0-0.0001) grid (4.6+0.0001,6.2+0.0001);
			\draw[->, line width=0.2mm, color=black] (2.2,4.6) -- (2.2,6) node[pos=0.8, anchor = west] {$\nfoldx{=}\mathrm{ECC}(\kfoldv)$};
\end{tikzpicture}
\begin{tikzpicture}	[line width=1pt, scale=1]
			\path[clip] (0.9,0.9) rectangle (10.6,6.9);
   		\node[align=right, anchor=south west, inner sep=0em] at (1,6.5) {\textbf{Unseeded scheme:}};
   		\node[align=right, anchor=south west, inner sep=0em] at (4,3.5) {\Large{\textbf{\dots}}};
			\draw[xstep=0.2cm, ystep=0.2cm,black,thin] (1.0-0.0001,1.4-0.0001) grid (3.0+0.0001,1.6+0.0001);
   		\node[align=left, anchor=south west] at (1.0,0.8) {$m_0\in \lprodmset$};
     	\draw[xstep=0.2cm, ystep=0.2cm,black,thin] (2.6-0.0001,2.2-0.0001) grid (3.4+0.0001,2.4+0.0001);
   		\node[align=left, anchor=south west] at (2.6,1.7) {$r_0{\in_R}\mathcal{R}$};
     			\draw[xstep=0.2cm, ystep=0.2cm,black,thin] (1.0-0.0001,4.4-0.0001) grid (4.0+0.0001,4.6+0.0001);
			\draw[->, line width=0.2mm, color=black] (2,1.6) -- (2,4.4) node[pos=0.9, anchor = west] {$\kfoldv_0$};
   			\draw[->, line width=0.2mm, color=black] (2.6,2.3) -- (2.0,2.3);
         			\draw[->, line width=0.2mm, color=black] (5.8,3.1) -- (2.0,3.1);
     			\draw[xstep=0.2cm, ystep=0.2cm,black,thin] (1.0-0.0001,6.0-0.0001) grid (4.6+0.0001,6.2+0.0001);
			\draw[->, line width=0.2mm, color=black] (2.2,4.6) -- (2.2,6) node[pos=0.8, anchor = west] {$\nfoldx_0$};
\begin{scope}[shift={(3.8,0)}]
			\draw[xstep=0.2cm, ystep=0.2cm,black,thin] (1.0-0.0001,1.4-0.0001) grid (3.0+0.0001,1.6+0.0001);
   		\node[align=left, anchor=south west] at (1.0,0.8) {$m_{t-1}\in\lprodmset$};
     	\draw[xstep=0.2cm, ystep=0.2cm,black,thin] (2.6-0.0001,2.2-0.0001) grid (3.4+0.0001,2.4+0.0001);
   		\node[align=left, anchor=south west] at (2.6,1.7) {$r_{t-1}{\in_R}\mathcal{R}$};
          \draw[xstep=0.2cm, ystep=0.2cm,black,thin] (4.8-0.0001,3.0-0.0001) grid (5.6+0.0001,3.2+0.0001);
   		\node[align=left, anchor=south west] at (4.8,2.5) {$s{\in_R}\mathcal{S}$};
     			\draw[xstep=0.2cm, ystep=0.2cm,black,thin] (1.0-0.0001,4.4-0.0001) grid (4.0+0.0001,4.6+0.0001);
			\draw[->, line width=0.2mm, color=black] (2,1.6) -- (2,4.4) node[pos=0.9, anchor = west] {$\kfoldv_{t-1}$};
   			\draw[->, line width=0.2mm, color=black] (2.6,2.3) -- (2.0,2.3);
         			\draw[->, line width=0.2mm, color=black] (4.8,3.1) -- (2.0,3.1);
     			\draw[xstep=0.2cm, ystep=0.2cm,black,thin] (1.0-0.0001,6.0-0.0001) grid (4.6+0.0001,6.2+0.0001);
			\draw[->, line width=0.2mm, color=black] (2.2,4.6) -- (2.2,6) node[pos=0.8, anchor = west] {$\nfoldx_{t_1}$};
   			\draw[->, line width=0.2mm, color=black] (5.2,3.2) -- (5.2,6) node[pos=0.85, anchor = west] {$\mathrm{ECC}(s)$};
          \draw[xstep=0.2cm, ystep=0.2cm,black,thin] (4.8-0.0001,6.0-0.0001) grid (5.8+0.0001,6.2+0.0001);
\end{scope}
        \end{tikzpicture}
       \caption{\label{fig:wiretap-scheme}The protocol on the sender's side of the seeded wiretap scheme (left) and the unseeded wiretap scheme (right). The seed and additional randomness are chosen uniformly at random.}
\end{figure}
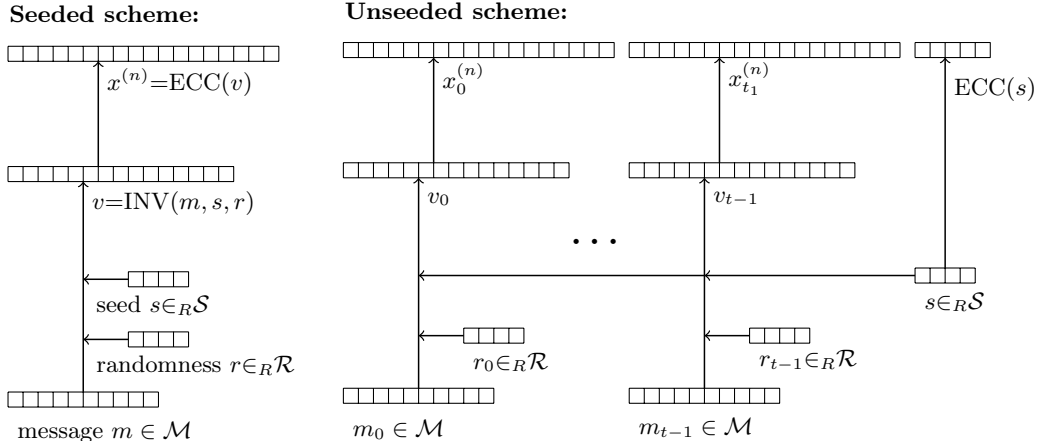

Security and correctness of this scheme are analyzed in~\cite{tessaroarxiv} where the following properties are shown. We explicitly include the adversary's strategy here.
Correctness relies \emph{only} on the error-correcting code and the protocol inherits the correctness properties of the code. 
\begin{lemma}
Protocol~\ref{prot:prot} is $\epsilon^{mt}_{cor}$-correct if for any adversarial strategy $w\in\mathcal{W}$
\begin{align*}
\max_{\kfoldv}\Prob^w[\mathrm{DEC}(ChR(\mathrm{ECC}(\kfoldv)))\neq \kfoldv]\leq \epsilon^{mt}_{cor}\ .
\end{align*}
Protocol~\ref{prot:prot} is $\epsilon^{rm}_{cor}$-correct if for any adversarial strategy $w\in \mathcal{W}$
\begin{align*}
\sum_{\kfoldv\in \kfoldV}\frac{1}{\kprodvsetsize} \Prob^w [\mathrm{DEC}(ChR(\mathrm{ECC}(\kfoldv)))\neq \kfoldv]\leq \epsilon^{rm}_{cor}\ .
\end{align*}
\end{lemma}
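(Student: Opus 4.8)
The plan is to unroll Protocol~\ref{prot:prot} step by step and observe that the only stage at which the message can be corrupted is the error-correcting code: once the channel output has been produced, the remaining operations $\mathrm{DEC}$ and $\mathrm{EXT}$ are deterministic. Concretely, I would fix an adversarial strategy $w\in\mathcal{W}$, a seed $s\in\mathcal{S}$, a message $m\in\lprodmset$ and randomness $r\in\mathcal{R}$, and set $\kfoldv=\mathrm{INV}(m,s,r)$. By Definition~\ref{def:inverter}, $\kfoldv$ is a preimage of $m$ under $\mathrm{EXT}(\cdot,s)$, i.e.\ $\mathrm{EXT}(\kfoldv,s)=m$. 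Following steps~\ref{prot:enum:2}--\ref{prot:enum:5}, the receiver outputs $m'=\mathrm{EXT}\bigl(\mathrm{DEC}(ChR^{(n)}(\mathrm{ECC}(\kfoldv))),s\bigr)$, so whenever $\mathrm{DEC}(ChR^{(n)}(\mathrm{ECC}(\kfoldv)))=\kfoldv$ we also get $m'=\mathrm{EXT}(\kfoldv,s)=m$. Writing $g_w(\kfoldv):=\Prob^w[\mathrm{DEC}(ChR^{(n)}(\mathrm{ECC}(\kfoldv)))\neq\kfoldv]$ for the code's decoding-error probability, this yields
\[
\Prob^w[m'\neq m\mid S{=}s,\,R{=}r]\;\le\;g_w\bigl(\mathrm{INV}(m,s,r)\bigr)\ .
\]

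For the arbitrary-message statement I would average the left-hand side over the uniform seed $S$ and randomness $R$ that are internal to the protocol. Since $\mathrm{INV}(m,s,r)\in\kprodvset$ for every $(s,r)$, the average is bounded by $\max_{\kfoldv}g_w(\kfoldv)$, hence $\Prob^w[m'\neq m]\le\max_{\kfoldv}g_w(\kfoldv)$ for every $m$; taking the maximum over $m$ and invoking the hypothesis for every $w$ gives $\epsilon^{mt}_{cor}$.

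For the random-message statement I would use the same conditional inequality and then average also over the uniform message:
\[
\sum_{m}\frac{1}{\lprodmsetsize}\,\Prob^w[m'\neq m]\;\le\;\mathbf{E}_{S}\!\left[\,\sum_{m}\frac{1}{\lprodmsetsize}\,\mathbf{E}_{R}\bigl[g_w(\mathrm{INV}(m,S,R))\bigr]\right].
\]
The one point requiring care is the distribution of $\mathrm{INV}(M,s,R)$ when $M$ is uniform on $\lprodmset$, $R$ is uniform on $\mathcal{R}$ and the seed $s$ is fixed. For the inverters of interest this map is in fact a \emph{bijection} $\lprodmset\times\mathcal{R}\to\kprodvset$ (multiplication by $s^{-1}$ in $\mathrm{GF}(2^l)$ for the finite-field extractor, and $(m,r)\mapsto(r,-T(s)r+m)$ for modified Toeplitz hashing), so $\mathrm{INV}(M,s,R)$ is uniform on $\kprodvset$; more abstractly, it suffices that $\mathrm{EXT}(\cdot,s)$ be \emph{regular}, i.e.\ that all its fibres have equal size. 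Under this, the inner double average equals $\sum_{\kfoldv\in\kfoldV}\frac{1}{\kprodvsetsize}\,g_w(\kfoldv)$ for every $s$, hence so does the entire right-hand side, which is $\le\epsilon^{rm}_{cor}$ by hypothesis for every $w$.

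I do not expect a genuine obstacle here: the argument is essentially bookkeeping resting on the inverter identity $\mathrm{EXT}(\mathrm{INV}(m,s,r),s)=m$ together with the determinism of $\mathrm{DEC}$ and $\mathrm{EXT}$, so that no error is introduced after the channel. The only step deserving explicit attention is the passage ``uniform message $\Rightarrow$ uniform $\kfoldv$'' in the random-message case (and the harmless reordering of the finite averages over $m$, $s$ and $r$), which is why I would record the bijectivity/regularity of the inverter as part of the proof rather than leave it implicit.
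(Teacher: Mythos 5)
The paper does not present a proof of this lemma (it attributes it to \cite{tessaroarxiv}), so there is no argument in the paper to compare against; I can only assess your proposal on its own terms, and it is correct. The crucial observations are the ones you identify: (i) the event $m'\neq m$ implies the event $\mathrm{DEC}(ChR^{(n)}(\mathrm{ECC}(\kfoldv)))\neq\kfoldv$, since $\mathrm{DEC}$ and $\mathrm{EXT}$ are deterministic and $\mathrm{EXT}(\mathrm{INV}(m,s,r),s)=m$ by the inverter property, and (ii) for the random-message bound one needs $\mathrm{INV}(M,s,R)$ to be uniform on $\kprodvset$ when $M$ and $R$ are uniform. You are right to call out (ii) as the only non-mechanical step: Definition~\ref{def:inverter} by itself only guarantees uniformity within each fibre, so without regularity of $\mathrm{EXT}(\cdot,s)$ (equal-sized fibres) the pushforward of uniform $M$ need not be uniform on $\kprodvset$; your verification that both inverters actually used in the paper are bijections $\lprodmset\times\mathcal{R}\to\kprodvset$ for each fixed seed closes this. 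One could equally well note that both constructions are linear, so the fibres of $\mathrm{EXT}(\cdot,s)$ are cosets of its kernel and hence equal in size, which is a slightly cleaner way to state the same fact. The remaining steps (taking the maximum over $m$ for the mt bound, averaging over $m$ for the rm bound, and then maximizing over $w$) are exactly the routine bookkeeping you describe.
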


Since the scheme can be combined with any error-correcting code, we will not focus on correctness, but simply show the secrecy of different setups under the assumption that a reasonably good and efficient error-correcting code is known.

\begin{lemma}\label{lemma:definition_of_secrecy}
For any $\epsilon>0$ and adversarial strategy $w\in \mathcal{W}$, Protocol~\ref{prot:prot} reaches secrecy 
\begin{align}\label{eq:securityformula}
\epsilon_{sec}^{rm}&\leq \frac{1}{2}\sqrt{\lprodmsetsize P^{\varepsilon\ w}_{\mathrm{guess}}(\kfoldV|\nfoldZ)}+\varepsilon\ .
\end{align} 
\end{lemma}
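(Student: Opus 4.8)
The plan is to observe that, for a uniformly random message, the adversary is in exactly the setting governed by the (smooth) left-over hash lemma: its view is a two-universal hash of a source $V$ together with the hash seed, plus channel-derived side information about $V$. First I would unwind Protocol~\ref{prot:prot} in the random-message case. With $M\in_R\mathcal{M}$, $R\in_R\mathcal{R}$ and $S\in_R\mathcal{S}$ mutually independent, the sender transmits $\nfoldx=\mathrm{ECC}(V)$ with $V=\mathrm{INV}(M,S,R)$, and, having fixed a strategy $w$, the adversary holds the pair $(\nfoldZ,S)$, where $\nfoldZ$ is the image of $\nfoldx$ under $ChA^{(n)}$ with its transition law selected by $w$, so that the random-message secrecy quantity is $d_U(M\mid \nfoldZ,S,W=w)$ (the seed being part of the adversary's view). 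Two facts do most of the work: (i) by Definition~\ref{def:inverter}, $\mathrm{INV}(m,s,r)\in\{v:\mathrm{Ext}(v,s)=m\}$, hence $\mathrm{Ext}(V,S)=M$ holds identically — the message is literally the extractor applied to $V$ with seed $S$; and (ii) because the inverter returns a uniformly random preimage and $M$ is uniform, $V$ is uniform on $\mathcal{V}$ for every fixed value of the seed, so $V$ is independent of $S$, and hence so is $\nfoldZ$, a channel image of $V$.

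Next I would rewrite the secrecy quantity. Expanding $d_U(M\mid \nfoldZ,S,W=w)$ and using that $S$ is uniform and, given $W=w$, independent of $\nfoldZ$, it equals $d_U\big(\mathrm{Ext}(V,S),S\,\big|\,\nfoldZ,W=w\big)$; the two expressions differ only by the factor $P_{S\mid \nfoldZ}(s)=1/\abs{\mathcal{S}}$. Now I would invoke the $\varepsilon$-smooth left-over hash lemma~\eqref{eq:epsleftoverhash} with the two-universal family $f_s(\cdot)=\mathrm{Ext}(\cdot,s)$ (two-universality of the concrete finite-field and modified-Toeplitz constructions was recorded in Section~\ref{sec:preliminaries}), source $V$, and side information $\nfoldZ$ in the world $W=w$, to obtain
\[
d_U\big(\mathrm{Ext}(V,S),S\,\big|\,\nfoldZ,W=w\big)\;\le\;\tfrac12\sqrt{\lprodmsetsize\,2^{-\mathrm{H}^{\varepsilon}_{\mathrm{min}}(V\mid\nfoldZ,W=w)}}\;+\;\varepsilon .
\]
Finally, $2^{-\mathrm{H}^{\varepsilon}_{\mathrm{min}}(V\mid\nfoldZ,W=w)}=P^{\varepsilon\ w}_{\mathrm{guess}}(V\mid\nfoldZ)$ by the definition of the smooth min-entropy, and taking the maximum over $w$ yields~\eqref{eq:securityformula}.

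The step I expect to need the most care — really the only genuine subtlety — is the independence bookkeeping required by the left-over hash lemma: it needs the seed $S$ to be uniform and independent of the source, whereas in the protocol $V=\mathrm{INV}(M,S,R)$ is itself constructed using $S$. This is resolved precisely by observation (ii): since $\mathrm{INV}$ returns a uniform preimage and $M$ is uniform, the conditional law of $V$ given $S=s$ is the uniform distribution on $\mathcal{V}$ for every $s$, so $V$ and $S$ are in fact independent. I would also check that conditioning on the adversarial strategy $w$ — which alters only the channel producing $\nfoldZ$ and not the joint law of $(V,S)$ — commutes with each step, so that the bound holds uniformly in $w$ and hence for $\max_{w}$.
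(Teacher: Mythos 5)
Your proof is correct and follows essentially the same route the paper sketches (and attributes to \cite{tessaroarxiv}): uniform $M$ together with the inverter yields a uniform $V$ independent of $S$, the secrecy quantity is then exactly the distance bounded by the smooth left-over hash lemma~\eqref{eq:epsleftoverhash}, and $2^{-\mathrm{H}^{\varepsilon}_{\mathrm{min}}(V\mid\nfoldZ)}=P^{\varepsilon}_{\mathrm{guess}}(V\mid\nfoldZ)$ finishes it. Your careful handling of the $V$--$S$ independence and the rewriting of $d_U(M\mid\nfoldZ,S,W=w)$ as $d_U(\mathrm{Ext}(V,S),S\mid\nfoldZ,W=w)$ is exactly the subtlety the paper leaves implicit.
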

The proof~\cite{tessaroarxiv} uses the fact that choosing the message uniformly at random $m\in_R \lprodmset$ leads to a uniform distribution of $\kfoldv$. It then applies the definition of the min-entropy in terms of the guessing probability and applies the left-over hash lemma for the smooth min-entropy.

Security for arbitrary messages can be related to the security for random messages for certain schemes and channels~\cite{tessaroarxiv}. We give a generalization of this reduction in Section~\ref{sec:distinguishing_security}.

The above scheme uses a \emph{strong extractor} which takes a random seed as second input. Such a seed is necessary in an approach which bases security \emph{only} on the min-entropy
and does not take into account the exact structure of the channel or code. 
Since the extractor is strong, the seed can, however, be reused and it can be leaked to the adversary. 

In the literature, the seed is usually treated in one of two ways: it is considered to be previously fixed and known to all parties~\cite{hayashimatsumoto}; e.g., it could be chosen and hardcoded once and for all upon manufacturing of the communication devices. Alternatively, it can be communicated over the communication channel~\cite{tessaroarxiv}. Since the same seed can be used for several message blocks, this does not affect the asymptotic rate. 

Note, however, that the seed is required to be independent from the information the adversary receives, i.e., the random variable $\nfoldZ$. In situations where the eavesdropper cannot influence or change the wiretap channel, this is always fulfilled. Here, on the other hand, we  allow the adversary to choose the channel within some limits. To ensure the independence from the seed value, the seed could be sent over the channel \emph{last}, i.e., after all messages have been sent. This would ensure that the adversary cannot adapt the channel properties depending on the observed seed value. 

\begin{protocol}[Unseeded Wiretap scheme]\label{prot:unseeded} \phantom{a}
\begin{enumerate}
\item The sender chooses a seed $s$ uniformly at random, $s\in_R\mathcal{S}$.
\item Repeat $t$ times the seeded wiretap scheme:
\begin{enumerate}
\item \label{unseed:enum:1} The sender chooses a message $\lfoldm\in \lprodmset$ and randomness $r\in_R \mathcal{R}$.
\item \label{unseed:enum:2} The sender calculates $\nfoldx{=}\mathrm{ECC}(\mathrm{INV}(\lfoldm,s,r))$.
\item \label{unseed:enum:3} The sender sends $\nfoldx$ over the channel $ChR^{(n)}$.
\item  \label{unseed:enum:4} The receiver obtains $\nfoldy$ with  $\nfoldY{=}ChR^{(n)}(\nfoldx)$.
\end{enumerate}
\item The sender calculates $a{=}\mathrm{ECC}(s)$ and sends it over the channel.
\item The receiver obtains $a'$ according to $ChR(a)$ and calculates $s'{=}\mathrm{DEC}(a')$.
\item \label{unseed:enum:5} For all $t$ values $\nfoldy$, the receiver calculates ${\lfoldm}'{=}\mathrm{EXT}(\mathrm{DEC}(\nfoldy),s')$.
\end{enumerate}
\end{protocol}

Both correctness and secrecy `behave well' under this composition~\cite{tessaroarxiv} as stated by the following lemmas. They remain valid when including the adversary's strategy because of the convexity of maximizing. 
\begin{lemma}
Let the (seeded) Protocol~\ref{prot:prot} be $\epsilon_{cor}^{mt}$-correct for arbitrary messages and $\epsilon_{cor}^{rm}$-correct for random messages. Then the (unseeded) Protocol~\ref{prot:unseeded} is correct with
\begin{align*}
    \epsilon_{cor}^{mt\ \mathrm{unseeded}}&\leq  t\cdot \epsilon_{cor}^{mt} + \max_{w\in \mathcal{W}}\sum_{s\in \mathcal{S}}\frac{1}{\abs{\mathcal{S}}} \Prob\Big[\mathrm{DEC}(ChR(\mathrm{ECC}(s)))\neq s\Big]\\
    \epsilon_{cor}^{rm\ \mathrm{unseeded}}&\leq  t\cdot \epsilon_{cor}^{rm} + \max_{w\in \mathcal{W}}\sum_{s\in \mathcal{S}}\frac{1}{\abs{\mathcal{S}}}\Prob\Big[\mathrm{DEC}(ChR(\mathrm{ECC}(s)))\neq s\Big] \ .  
\end{align*}
\end{lemma}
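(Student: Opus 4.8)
The plan is to reduce correctness of the unseeded scheme (Protocol~\ref{prot:unseeded}) to correctness of the seeded scheme (Protocol~\ref{prot:prot}) by a single union bound, the only genuinely new failure mode introduced by the composition being an incorrectly decoded seed. First I would fix an arbitrary adversarial strategy $w\in\mathcal{W}$ and introduce the events $E_0=\{s'\neq s\}$ (the receiver recovers the wrong seed from $a'=ChR(\mathrm{ECC}(s))$) and $E_i=\{m_i'\neq m_i\}$ for $i=0,\dots,t-1$. Since each $m_i'$ is computed as $\mathrm{EXT}(\mathrm{DEC}(\nfoldy_i),s')$, one has the set inclusion $\bigcup_i E_i\subseteq E_0\cup\bigcup_i(E_i\cap E_0^{c})$, hence $\Prob^w[\bigcup_i E_i]\leq\Prob^w[E_0]+\sum_i\Prob^w[E_i\cap E_0^{c}]$; the task is then to bound the two groups of terms.

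For $\Prob^w[E_0]$ I would use that the seed $s$ is drawn uniformly from $\mathcal{S}$ and that its reception is exactly the map $s\mapsto\mathrm{DEC}(ChR(\mathrm{ECC}(s)))$, so $\Prob^w[E_0]=\sum_{s\in\mathcal{S}}\frac{1}{\abs{\mathcal{S}}}\Prob^w[\mathrm{DEC}(ChR(\mathrm{ECC}(s)))\neq s]\leq\max_{w\in\mathcal{W}}\sum_{s\in\mathcal{S}}\frac{1}{\abs{\mathcal{S}}}\Prob[\mathrm{DEC}(ChR(\mathrm{ECC}(s)))\neq s]$, which already matches the last term of both claimed bounds. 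For each $\Prob^w[E_i\cap E_0^{c}]$ I would observe that on $E_0^{c}$ we have $s'=s$, so the $i$-th block is literally an execution of Protocol~\ref{prot:prot} with the correct seed; since the inverter guarantees $\mathrm{EXT}(\mathrm{INV}(m_i,s,r_i),s)=m_i$, a correctly decoded codeword forces $m_i'=m_i$, whence $E_i\cap E_0^{c}\subseteq\{\mathrm{DEC}(ChR^{(n)}(\mathrm{ECC}(\kfoldv_i)))\neq\kfoldv_i\}$ with $\kfoldv_i=\mathrm{INV}(m_i,s,r_i)$. The correctness statement for Protocol~\ref{prot:prot} then bounds this probability by $\epsilon_{cor}^{mt}$ for the worst-case message (and by $\epsilon_{cor}^{rm}$ once $m_i$ is averaged uniformly over $\mathcal{M}$, using that a uniform $m_i$ induces a uniform $\kfoldv_i$). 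Summing over the $t$ blocks gives $t\,\epsilon_{cor}^{mt}$ (respectively $t\,\epsilon_{cor}^{rm}$), and combining with the bound on $\Prob^w[E_0]$ and taking the maximum over $w$ on the left (the right-hand side no longer depending on $w$) yields the two inequalities; the ``rm'' line additionally requires averaging the whole argument over the uniform choice of $(m_0,\dots,m_{t-1})$, which is harmless by linearity of expectation.

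I expect no serious obstacle here; the one point that needs care rather than computation is to ensure that conditioning on the seed being decoded correctly does not inflate the per-block error probability and that every estimate is uniform in the adversary's (possibly adaptive) choice. I would handle this by working throughout with the intersections $E_i\cap E_0^{c}$ rather than with conditional probabilities, and by noting that $\mathcal{W}$ already subsumes any adaptation of the channel over the $t+1$ transmissions, so the seeded-scheme bound may be invoked block by block with the same $w$. In particular, one never needs independence among the blocks or between the blocks and the seed transmission—the plain union bound suffices—nor does one need the ``send $\mathrm{ECC}(s)$ last'' convention, which is relevant only for the independence used later in the secrecy analysis.
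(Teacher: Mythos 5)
Your proof is correct and takes essentially the same route as the paper, which dispatches the lemma with the single remark that the extra term is just the probability of mis-transmitting the (uniformly random) seed; your union bound over the seed event $E_0$ and the $t$ block events $E_i$, together with the inclusion $E_i\cap E_0^c\subseteq\{\mathrm{DEC}(ChR^{(n)}(\mathrm{ECC}(\kfoldv_i)))\neq\kfoldv_i\}$, is exactly the calculation the paper leaves implicit. The only genuinely careful point — that working with intersections rather than conditional probabilities avoids any inflation of the per-block error, and that $\mathcal W$ already encodes the adversary's behaviour across all $t+1$ transmissions so the seeded bound applies per block with the same $w$ — is handled correctly.
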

The second term is simply the probability of correctly transmitting the seed, which is always chosen at random.

Since the seed can be public, secrecy of the unseeded wiretap protocol is bounded by $t$ times the secrecy of the seeded protocols.
\begin{lemma}
Let the (seeded) Protocol~\ref{prot:prot} be $\epsilon_{sec}^{rm}$-secure for random messages.
Then the (unseeded) Protocol~\ref{prot:unseeded} is $\epsilon_{sec}^{rm\ \mathrm{unseeded}}$-secret with
\begin{align*}
     \epsilon_{sec}^{rm\ \mathrm{unseeded}}&\leq  t\cdot \epsilon_{sec}^{rm} \ . 
\end{align*}
\end{lemma}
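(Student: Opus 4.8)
The plan is to fix an arbitrary adversarial strategy $w\in\mathcal W$ (it suffices to bound the secrecy for each such $w$ and then take the maximum over $w$) and to run a \emph{hybrid argument} over the $t$ message blocks, in the spirit of~\cite{tessaroarxiv}. Write $M_0,\dots,M_{t-1}$ for the $t$ independent uniform messages, $Z_0,\dots,Z_{t-1}$ for the eavesdropper's channel outputs in the corresponding blocks, $S$ for the seed, and $Z_S$ for the eavesdropper's output obtained from the transmission of $\mathrm{ECC}(s)$ in step~3 of Protocol~\ref{prot:unseeded}. First I would argue that $Z_S$ can be dropped: in the secrecy model the adversary is handed the value of $S$ directly, and since the seed is sent \emph{after} all message blocks, the per-block channels selected by $w$ cannot depend on it, so conditioned on $S$ the variable $Z_S$ is independent of $(M_0,\dots,M_{t-1})$. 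Hence the quantity to bound is $d_U\bigl((M_0,\dots,M_{t-1})\mid Z_0,\dots,Z_{t-1},S,W{=}w\bigr)$.

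The key structural fact I would exploit is that, \emph{conditioned on} $S=s$, the $t$ pairs $(M_i,Z_i)$ are mutually independent, since block $i$'s eavesdropper output depends only on $M_i$, the local randomness $r_i$, and the (fixed) channel used in that block. I would then introduce, for $j=0,\dots,t$, a hybrid distribution $Q_j$ on $(m_0,\dots,m_{t-1},z_0,\dots,z_{t-1},s)$ that keeps the true conditional law of blocks $j,\dots,t-1$ but replaces blocks $0,\dots,j-1$ by ``ideal'' ones, in which $M_i$ is uniform and independent of everything else while $Z_i$ follows its true marginal $P^w_{Z_i|S}$. Then $Q_0$ is the real joint distribution while $Q_t$ has $(M_0,\dots,M_{t-1})$ uniform and independent of the eavesdropper's view, so $d_U\bigl((M_i)_i\mid (Z_i)_i,S,w\bigr)=d(Q_0,Q_t)\le\sum_{j=0}^{t-1}d(Q_j,Q_{j+1})$ by the triangle inequality. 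For the individual terms: $Q_j$ and $Q_{j+1}$ differ only in the description of block $j$, so using the conditional independence to factor out the identical contributions of $S$ and of all blocks $i\ne j$ (each of which sums to one) I would obtain $d(Q_j,Q_{j+1})=\sum_s P_S(s)\,d\bigl(P^w_{M_jZ_j|S=s},\,U_{\mathcal M}\otimes P^w_{Z_j|S=s}\bigr)=d_U(M_j\mid Z_j,S,W{=}w)$, which is exactly the random-message secrecy of a single run of the seeded Protocol~\ref{prot:prot} under the strategy $w$ induces on block $j$ (recall that that guarantee, Lemma~\ref{lemma:definition_of_secrecy}, already lets the eavesdropper learn $S$), hence at most $\epsilon^{rm}_{sec}$. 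Summing the $t$ terms gives $d_U(\cdots)\le t\,\epsilon^{rm}_{sec}$, and taking the maximum over $w$ concludes.

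I expect the main obstacle to be purely in the bookkeeping: the $t$ blocks are \emph{not} unconditionally independent, since they share the seed, so every step of the hybrid argument must be carried out conditionally on $S$, and one must check carefully that collapsing block $j$ from its real to its ideal form is \emph{precisely} one invocation of the seeded secrecy guarantee — whose definition already includes giving the seed to the adversary, which is exactly why reusing the same $S$ across all $t$ blocks costs nothing asymptotically. Once this correspondence is pinned down, the remaining steps are the routine triangle-inequality and marginalization identities for the variational distance.
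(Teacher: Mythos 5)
Your hybrid argument is correct and is exactly the standard composability proof the paper is implicitly invoking when it asserts, citing~\cite{tessaroarxiv}, that ``secrecy of the unseeded wiretap protocol is bounded by $t$ times the secrecy of the seeded protocols.'' The paper gives no proof beyond that one sentence; your write-up supplies the intended one — fix $w$, condition on $S$, exploit the conditional independence of the $t$ blocks, hybrid block by block, identify each hybrid step with one invocation of the seeded guarantee (which already hands $S$ to the adversary), and observe that $Z_S$ is conditionally independent of the messages given $S$ so it can be dropped from the conditioning. This is the same approach, just spelled out.

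One small remark worth flagging: when you ``drop $Z_S$,'' the correct justification is that conditioned on $(S,Z_0,\dots,Z_{t-1})$ the variable $Z_S$ is independent of $(M_0,\dots,M_{t-1})$ (it is a function of $S$ and fresh channel noise only), and that conditioning on an additional variable that is conditionally independent of the target does not change the distance from uniform. You state the right independence fact; just be explicit that this is why the drop is an equality and not merely an inequality. This matches the paper's remark that the seed is sent last precisely to ensure the adversary cannot adapt the channel to the observed seed value, which is what keeps $w$ fixed and non-adaptive throughout your hybrid.
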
 

The above argument allows us in the following Section~\ref{sec:result} to focus on the seeded wiretap channel with random message input.

\section{Security Bound}\label{sec:result}

This section contains our main result, giving a simple but effective way to bound random-message security for the wiretap channel. Our approach is inherently agnostic to the \emph{order} in which the adversarial channels are applied and therefore holds even if the adversary can choose this. 

To prove security, we will proceed in steps. We first (Section~\ref{sec:simplest_bound}) give a very simple security bound by directly applying the definition of the min-entropy of a random variable as the guessing probability. As an example, Section~\ref{sec:ex_wiretap2} shows that this allows to reach secrecy capacity for the wiretap channel II.  We then refine our method by introducing `smoothing' into our analysis (Section~\ref{sec:smoothing}), which allows us to reach capacity for a large(r) class of channels, including the binary symmetric channel, as shown as example in Section~\ref{sec:ex_bsc_capacity}. 
In Section~\ref{sec:general_bound}, we give a general formula for the security which bases on the asymptotic equipartition property.  The channel uses need to be neither the same (in terms of transition probabilities) nor independent. As an example, we show security for a special type of arbitrarily varying wiretap channel (Section~\ref{sec:ex_arbitrarily_varying}), where the different channel types have a fixed frequency, but the order can be chosen by the adversary.

\subsection{Simple Security Bound}\label{sec:simplest_bound}

To show security, by Lemma~\ref{lemma:definition_of_secrecy}, the main task is to bound the min-entropy, i.e., guessing probability, of $\kfoldV$ given $\nfoldZ$ for random input $\kfoldV$ and any adversarial strategy $w\in \mathcal{W}$
\begin{align}
\nonumber	P^w_\mathrm{guess}(\kfoldV|\nfoldZ)
 &:=\sum_{\nfoldz\in \nprodzset}\max_{\kfoldv\in \kprodvset} 
 \Pr^w [\kfoldV=\kfoldv \land \nfoldZ=\nfoldz]\ .
\end{align}
The idea is that we can simply take the $\nprodzsetsize$ highest probabilities of $\Pr^w [\kfoldV{=}\kfoldv \land \nfoldZ{=}\nfoldz]$. (Recall that, when we write $p_i$, the probabilities are ranked in descending order.)
\begin{lemma}
The guessing probability of $\kfoldV$ given $\nfoldZ$ for random input $\kfoldV$ and an adversarial strategy $w\in \mathcal{W}$ is bounded by 
\begin{align}
P^w_\mathrm{guess}(\kfoldV|\nfoldZ)
\label{eq:highest} &\leq \sum_{i=0}^{\nprodzsetsize-1} {p^w}_i (\kfoldV,\nfoldZ) \ .
\end{align}
\end{lemma}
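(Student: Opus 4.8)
The statement is essentially a counting observation, so the proof will be short. First I would write out the definition of the guessing probability explicitly as a sum indexed by the $N := \nprodzsetsize$ possible values of $\nfoldz$:
\begin{align*}
P^w_\mathrm{guess}(\kfoldV|\nfoldZ)&=\sum_{\nfoldz\in \nprodzset}\max_{\kfoldv\in \kprodvset}\Pr^w[\kfoldV=\kfoldv\land \nfoldZ=\nfoldz]\ .
\end{align*}
For each $\nfoldz$ fix a maximizer $\kfoldv^\ast(\nfoldz)\in\argmax_{\kfoldv}\Pr^w[\kfoldV=\kfoldv\land\nfoldZ=\nfoldz]$, so that the right-hand side equals $\sum_{\nfoldz}\Pr^w[\kfoldV=\kfoldv^\ast(\nfoldz)\land\nfoldZ=\nfoldz]$.

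The key point is then that the $N$ pairs $\big(\kfoldv^\ast(\nfoldz),\nfoldz\big)$, as $\nfoldz$ ranges over $\nprodzset$, are pairwise distinct, simply because their second coordinates are. Hence $P^w_\mathrm{guess}(\kfoldV|\nfoldZ)$ is a sum of $N$ entries of the joint probability vector $\vv{p}^w(\kfoldV,\nfoldZ)$ taken at $N$ \emph{distinct} indices. Since all entries of this vector are nonnegative and it is sorted in nonincreasing order (${p^w}_0\geq {p^w}_1\geq\cdots$), the sum of the entries at any $N$ distinct indices is at most the sum of the $N$ largest entries, i.e.\ $\sum_{i=0}^{N-1}{p^w}_i(\kfoldV,\nfoldZ)$. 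This yields \eqref{eq:highest}.

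I do not expect any real obstacle here; the only thing to be careful about is to make explicit that the $N$ summands land at distinct positions of the sorted vector (otherwise one could not bound them by the $N$ largest entries), which is exactly what the distinctness of the second coordinates $\nfoldz$ gives. One could equivalently phrase it as: $P^w_\mathrm{guess}(\kfoldV|\nfoldZ)=\sum_{\nfoldz}\Pr^w[\ldots]\le \max_{A\subseteq\kprodvset\times\nprodzset,\;|A|=N}\sum_{(\kfoldv,\nfoldz)\in A}\Pr^w[\kfoldV=\kfoldv\land\nfoldZ=\nfoldz]=\sum_{i=0}^{N-1}{p^w}_i(\kfoldV,\nfoldZ)$, the last equality being the defining property of a sorted vector.
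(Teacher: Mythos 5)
Your proof is correct and follows essentially the same route as the paper's: both rewrite the guessing probability as $\sum_{\nfoldz} p^w_0(\kfoldV,\nfoldZ=\nfoldz)$ and then bound this sum of $\nprodzsetsize$ joint probabilities at distinct indices by the sum of the $\nprodzsetsize$ largest entries of the sorted joint vector. You are merely more explicit than the paper in spelling out why the final inequality holds (the $\nfoldz$-coordinates are distinct, so the chosen entries land at distinct positions of the vector), which is a helpful clarification but not a different argument.
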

\begin{proof}
\begin{align}
\nonumber P^w_\mathrm{guess}(\kfoldV|\nfoldZ)&=\sum_{\nfoldz\in \nprodzset}\max_{\kfoldv\in \kprodvset} 
 \Pr^w [\kfoldV=\kfoldv \land \nfoldZ=\nfoldz]\\
 \nonumber &=
 \sum_{\nfoldz\in \nprodzset}
 p^w_0(\kfoldV,\nfoldZ=\nfoldz)
 \leq 
 \sum_{i=0}^{\nprodzsetsize-1} p^w_i (\kfoldV,\nfoldZ)\ .
\end{align}    
\end{proof}

Under the condition that all inputs to the (adversarial) channel lead to exactly the same output distribution upon relabelling of the output symbols we can further simplify this condition and express it in terms of the probabilities given any specific input. The procedure is illustrated in Figure~\ref{fig:example-bound-guessing-prob}. 
\begin{lemma}\label{lemma:guessing_for_uniform_and_equal_prob}
The guessing probability of $\kfoldV$ given $\nfoldZ$ for uniformly random input $\kfoldV$, adversarial strategy $w\in  \mathcal{W}$, and for channels $ChA^{(n)}\circ ECC:\kprodvset\rightarrow \nprodzset$ such that $\vv{p^w}(\nfoldZ|\kfoldV{=}\kfoldv)=\vv{p^w}(\nfoldZ|\kfoldV{=}\kfoldspezv)$ for all $\kfoldv$ is bounded by 
\begin{align*}
P^w_\mathrm{guess}(\kfoldV|\nfoldZ) &\leq \sum_{i=0}^{\round-1} {p^w}_i ( \nfoldZ|\kfoldV=\kfoldspezv)\ .
\end{align*}
\end{lemma}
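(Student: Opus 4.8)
The plan is to start from the bound already established in the preceding lemma, namely $P^w_\mathrm{guess}(\kfoldV|\nfoldZ) \le \sum_{i=0}^{\nprodzsetsize-1} p^w_i(\kfoldV,\nfoldZ)$, i.e.\ the sum of the $\nprodzsetsize$ largest \emph{joint} probabilities $\Prob^w[\kfoldV=\kfoldv \land \nfoldZ=\nfoldz]$, and to show that the symmetry hypothesis collapses this to a sum of only $\round$ \emph{conditional} probabilities for a fixed input. Since we only need an upper bound on $P^w_\mathrm{guess}(\kfoldV|\nfoldZ)$, it suffices to bound the right-hand side of \eqref{eq:highest}.

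First I would use that $\kfoldV$ is uniform, so every joint probability factors as $\Prob^w[\kfoldV=\kfoldv \land \nfoldZ=\nfoldz] = \frac{1}{\kprodvsetsize}\, P^w_{\nfoldZ|\kfoldV=\kfoldv}(\nfoldz)$. Next I would invoke the hypothesis $\vv{p^w}(\nfoldZ|\kfoldV{=}\kfoldv)=\vv{p^w}(\nfoldZ|\kfoldV{=}\kfoldspezv)$ for every $\kfoldv$: the sorted list of conditional output probabilities is the same for every input. Writing $q_j := p^w_j(\nfoldZ|\kfoldV=\kfoldspezv)$ for $j=0,\dots,\nprodzsetsize-1$ for this common sorted vector, it follows that the full multiset of all $\kprodvsetsize\cdot\nprodzsetsize$ joint probabilities $\{\Prob^w[\kfoldV=\kfoldv \land \nfoldZ=\nfoldz]\}_{\kfoldv,\nfoldz}$ consists of exactly $\kprodvsetsize$ copies of each value $q_j/\kprodvsetsize$. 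This multiset reformulation is the conceptual heart of the argument: it converts the statement about two a priori unrelated sortings (joint over all $(\kfoldv,\nfoldz)$ versus conditional for a fixed $\kfoldv$) into a simple counting problem.

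The key combinatorial step is then to bound the sum of the $\nprodzsetsize$ largest elements of this multiset. Sorted in descending order it reads $q_0/\kprodvsetsize$ repeated $\kprodvsetsize$ times, then $q_1/\kprodvsetsize$ repeated $\kprodvsetsize$ times, and so on; hence among the top $\nprodzsetsize$ entries only the values $q_0,\dots,q_{r-1}$ with $r=\round$ can occur, each at most $\kprodvsetsize$ times. Summing yields $\sum_{i=0}^{\nprodzsetsize-1} p^w_i(\kfoldV,\nfoldZ) \le \sum_{j=0}^{r-1}\kprodvsetsize\cdot q_j/\kprodvsetsize = \sum_{j=0}^{\round-1} p^w_j(\nfoldZ|\kfoldV=\kfoldspezv)$, with equality precisely when $\kprodvsetsize$ divides $\nprodzsetsize$. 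Chaining this with \eqref{eq:highest} gives the claim.

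I expect the only real (and minor) obstacle to be the bookkeeping around the ceiling: when $\kprodvsetsize \nmid \nprodzsetsize$ the top $\nprodzsetsize$ entries include only $\nprodzsetsize-(r-1)\kprodvsetsize < \kprodvsetsize$ copies of $q_{r-1}$, and one must observe that replacing this partial block by a full block of $\kprodvsetsize$ copies only increases the sum, which is exactly where the inequality (rather than an equality) enters. The secondary point to make carefully is that the hypothesis speaks of sorted vectors of conditional probabilities, so the indices on the two sides of the inequality refer to different sortings; the multiset description above is what makes the passage between them rigorous.
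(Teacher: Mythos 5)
Your proof is correct and takes essentially the same approach as the paper's: start from the bound of the preceding lemma, factor each joint probability via uniformity of $\kfoldV$, use the hypothesis that all inputs give the same sorted conditional vector to view the $\kprodvsetsize\cdot\nprodzsetsize$ joint probabilities as $\kprodvsetsize$ identical blocks, and bound the sum of the top $\nprodzsetsize$ entries by $\round$ full blocks. The paper states this in one chain of inequalities without spelling out the multiset counting and ceiling bookkeeping; you have made those steps explicit, which is a presentational rather than a substantive difference.
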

\begin{proof}
\begin{align*}
P^w_\mathrm{guess}(\kfoldV|\nfoldZ)
&\leq \sum_{i=0}^{\nprodzsetsize-1} p^w_i (\kfoldV|\nfoldZ)\nonumber \leq \sum_{i=0}^{\round -1 } \kprodvsetsize\cdot  p^w_i (\kfoldV=\kfoldspezv,\nfoldZ) \\
&\leq \sum_{i=0}^{\lceil \nprodzsetsize/\kprodvsetsize\rceil -1} p^w_i (\nfoldZ|\kfoldV=\kfoldspezv)\ ,
\end{align*}    
since with uniform inputs $\kfoldV$,  $ p^w_i (\kfoldV{=}\kfoldspezv,\nfoldZ){=} P_{\kfoldV}(\kfoldV{=}\kfoldspezv) p^w_i (\nfoldZ|\kfoldV{=}\kfoldspezv) {=} \frac{1}{\kprodvsetsize} p^w_i (\nfoldZ|\kfoldV{=}\kfoldspezv)$.
\end{proof}

\begin{figure}[ht]
    \centering
    \includegraphics[width=\textwidth]{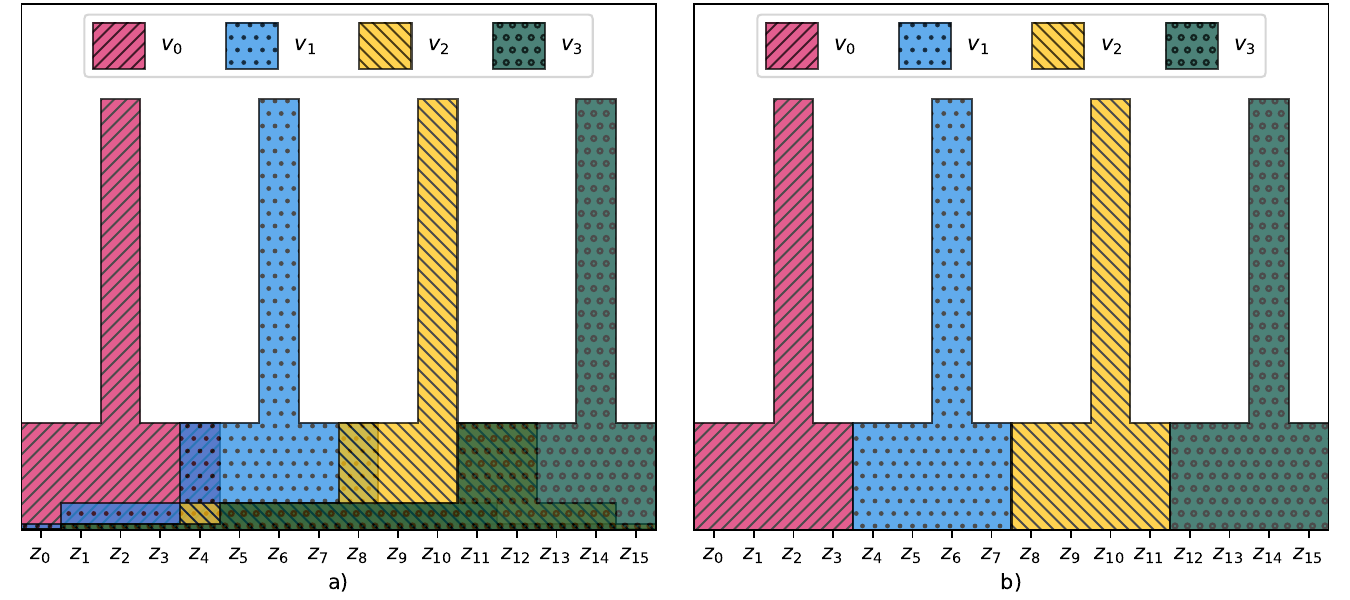}
    \caption{The largest probabilities allow to bound the guessing probability. The example shows the probability distribution of a binary symmetric channel with crossover probability $p_A=0.2$ with $4$ values of $\kfoldV$ (denoted by different colors) and $16$ output values $\nfoldZ$. The histogram on the left depicts the output probabilities from different inputs, the one on the right shows the selected $|\mathcal{Z}|^n$ highest probabilities, $4=16/4$ from each input $\kfoldV$, which allow to bound the guessing probability.}
    \label{fig:example-bound-guessing-prob}
\end{figure}

The bound on the guessing probability allows to bound the distance from uniform from the adversaries' point of view and therefore the secrecy of the scheme.
\begin{lemma}\label{lemma:random_message_security_in_terms_of_guessing_probability} 
Consider a channel $ChA^{(n)}\circ \mathrm{ECC}:\kprodvset\rightarrow \nprodzset$ such that $\vv{p^w}(\nfoldZ|\kfoldV=\kfoldv)=\vv{p^w}(\nfoldZ|\kfoldV=\kfoldspezv)$ for all $\kfoldv$ and $w\in \mathcal{W}$. Then Protocol~\ref{prot:prot} achieves 
\begin{align*}
\epsilon_{sec}^{rm}&\leq \frac{1}{2}\max_{w\in \mathcal{W}}\sqrt{\lprodmsetsize\cdot\sum_{i=0}^{\round -1} p^w_i (\nfoldZ|\kfoldV=\kfoldspezv)}\ .
\end{align*}
\end{lemma}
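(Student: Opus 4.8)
The plan is to chain together the results already established in the excerpt. By Lemma~\ref{lemma:definition_of_secrecy}, Protocol~\ref{prot:prot} achieves, for every $\varepsilon>0$ and every adversarial strategy $w\in\mathcal{W}$,
\begin{align*}
\epsilon_{sec}^{rm}&\leq \frac{1}{2}\sqrt{\lprodmsetsize\, P^{\varepsilon\ w}_{\mathrm{guess}}(\kfoldV|\nfoldZ)}+\varepsilon\ ,
\end{align*}
and this holds in particular for $\varepsilon=0$, giving $\epsilon_{sec}^{rm}\leq \frac12\sqrt{\lprodmsetsize\, P^{w}_{\mathrm{guess}}(\kfoldV|\nfoldZ)}$. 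Since this must hold for every $w$, we may take the maximum over $w\in\mathcal{W}$ on the right-hand side.

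The remaining step is simply to substitute the bound on the guessing probability from Lemma~\ref{lemma:guessing_for_uniform_and_equal_prob}. That lemma applies precisely under the hypothesis of the present statement (uniform input $\kfoldV$, which is what random-message input induces, and $\vv{p^w}(\nfoldZ|\kfoldV{=}\kfoldv)=\vv{p^w}(\nfoldZ|\kfoldV{=}\kfoldspezv)$ for all $\kfoldv$ and all $w$), and yields
\begin{align*}
P^w_\mathrm{guess}(\kfoldV|\nfoldZ) &\leq \sum_{i=0}^{\round-1} {p^w}_i ( \nfoldZ|\kfoldV=\kfoldspezv)\ .
\end{align*}
Plugging this into the square root and using monotonicity of $\sqrt{\cdot}$ gives
\begin{align*}
\epsilon_{sec}^{rm}&\leq \frac{1}{2}\sqrt{\lprodmsetsize\cdot\sum_{i=0}^{\round -1} p^w_i (\nfoldZ|\kfoldV=\kfoldspezv)}
\end{align*}
for each $w$, and taking $\max_{w\in\mathcal{W}}$ over the right-hand side finishes the proof.

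There is essentially no obstacle here; the lemma is a packaging of the two preceding lemmas. The only points deserving a line of care are: (i) noting that a uniformly random message $m\in_R\lprodmset$ induces a uniform $\kfoldV$ (because $\mathrm{INV}(\cdot,s,\cdot)$ maps the uniform distribution on $\lprodmset\times\mathcal{R}$ onto the uniform distribution on $\kprodvset$, by the defining property of an inverter, Definition~\ref{def:inverter}), so that Lemma~\ref{lemma:guessing_for_uniform_and_equal_prob} is indeed applicable; and (ii) justifying that the maximum over $w$ may be pulled outside the square root — which is immediate since each of the finitely many (or, in general, arbitrary) strategies $w$ individually satisfies the bound and $\sqrt{\cdot}$ is increasing. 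I would write the proof as a two-line display: invoke Lemma~\ref{lemma:definition_of_secrecy} with $\varepsilon=0$, then Lemma~\ref{lemma:guessing_for_uniform_and_equal_prob}, then maximize over $w$.
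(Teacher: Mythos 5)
Your proof is correct and follows essentially the same route as the paper's: invoke Lemma~\ref{lemma:definition_of_secrecy} with $\varepsilon=0$, then substitute the bound from Lemma~\ref{lemma:guessing_for_uniform_and_equal_prob}, and maximize over $w$. The paper's own proof is precisely this two-step chain (with the $\varepsilon=0$ specialization left implicit), and your two additional remarks --- that the inverter ensures uniform $\kfoldV$ and that the maximum can be pulled outside the square root --- are sound and only make the argument more explicit.
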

\begin{proof} By Lemma~\ref{lemma:definition_of_secrecy}, 
\begin{align*}
\epsilon_{sec}^{rm}&\leq \frac{1}{2} \max_{w\in \mathcal{W}}
\sqrt{\lprodmsetsize\cdot {P^w}_{guess}(\kfoldV|\nfoldZ)} \leq \frac{1}{2}\max_{w\in \mathcal{W}}\sqrt{\lprodmsetsize\cdot\sum_{i=0}^{\round -1} p^w_i (\nfoldZ|\kfoldV=\kfoldspezv)}\ .
\end{align*}
\end{proof}

This straight-forward approach to bound the security of the scheme is able to reach secrecy capacity for the Wiretap Channel II~\cite{wiretap2}, as we see in Section~\ref{sec:ex_wiretap2}. While we have to further refine the approach to reach capacity, e.g., for the binary symmetric channel (which we do in Section~\ref{sec:smoothing}), the approach already reaches a good security bound in the non-asymptotic setting with short finite-length messages (Figure~\ref{fig:comparison-bounds}).

\begin{mdframed}
\subsection{Example: Adversarially Selected Set of Bits (Wiretap II)}\label{sec:ex_wiretap2}
Let us consider a simple example where the channels $ChA_{i}$ are not equal and their type may be selected adversarially: the channel where the adversary can select to receive a certain number of bits of the codeword (but not all), e.g. $q$ out of the $n$ symbols~\cite{wiretap2}. For all other symbols, the adversary receives a random symbol. This corresponds to selecting between an error-free binary symmetric channel and a completely noisy symmetric channel. The adversary's input $W$ in Figure~\ref{fig:wiretap} is therefore a bit-string with $q$ $1$'s denoting the selected bits, $w\in \mathcal{W}{=}\{\{0,1\}^n|w_H{=}
q\}$, where $w_H$ denotes the Hamming weight. Note that the adversary can also select to receive fewer symbols, however, this simply corresponds to `forgetting' some of the received ones and the bound still holds. All the considered inputs and outputs are bit-strings, more precisely $\lprodmset{=}\{0,1\}^\ell$, $\kprodvset{=}\{0,1\}^k$ and $\nprodyset{=}\nprodzset{=}\{0,1\}^n$.

The output probability distribution for any specific $w\in \mathcal{W}$ is
\begin{align*}
\Pr^w [ \nfoldZ=\nfoldz|\kfoldV=\kfoldspezv] &= \delta(\nfoldz_{{q}}, ECC(\kfoldspezv)_{{q}})\cdot 2^{q-n}
\end{align*}
where we denoted by $z_{{q}}$ the values the adversary chose to receive and $\delta(\nfoldz_{{q}},ECC(\kfoldspezv)_{{q}})$ is the Kronecker delta. This means, for all $\kfoldv$, 
\begin{align*}
\vv{p^w}_i (\nfoldZ|\kfoldV=\kfoldv)&= 
\begin{cases}
2^{-(n-q)} & \text{for }i=0,\ldots 2^{n-q}-1\\
0 & \text{for }i= 2^{n-q},\ldots 2^n
\end{cases}   
\end{align*}
The guessing probability for any $w\in \mathcal{W}$ is bounded by Lemma~\ref{lemma:guessing_for_uniform_and_equal_prob}
\begin{align*}
	P^w_\mathrm{guess}(\kfoldV|\nfoldZ)
 &\leq 
2^{n-k}\cdot 2^{q-n} = 2^{q-k}\ ,
\end{align*}
and Protocol~\ref{prot:prot} is  $\epsilon_{sec}^{rm}$-secret for the wiretap channel II with
\begin{align*}
\epsilon_{sec}^{rm}=\frac{1}{2} \max_{w\in \mathcal{W}} \sqrt{2^{\ell}\cdot {P^w}_\mathrm{guess}(\kfoldV|\nfoldZ)}
&=
\frac{1}{2} \sqrt{2^{\ell}\cdot 2^{q-k}}\ .
\end{align*}
When the error-correcting code reaches capacity on the receiver channel, i.e. $k\approx n\cdot(1- h(p_R))$, and when $q$ is some fixed frequency $f=q/n$ this allows to reach secrecy capacity~\cite{wiretap2}, i.e., the achieved rate is approximately 
\begin{align*}
    r&\approx 1-h(p_R)-f\ .
\end{align*} 
\end{mdframed}

\subsection{Smoothing}\label{sec:smoothing}

The above calculation only reaches capacity for some types of channels. To improve the bound and to reach capacity for a larger class of channel types, including the binary symmetric channel, we will replace the calculation of the min-entropy by the $\varepsilon$-smooth version of it (see e.g.~\cite{Cachin97a}). For a specific probability distribution, we can simply `cut' the largest probability to obtain an $\varepsilon$-close version of it. 

\begin{figure}[ht]
    \centering
    \includegraphics[width=\textwidth]{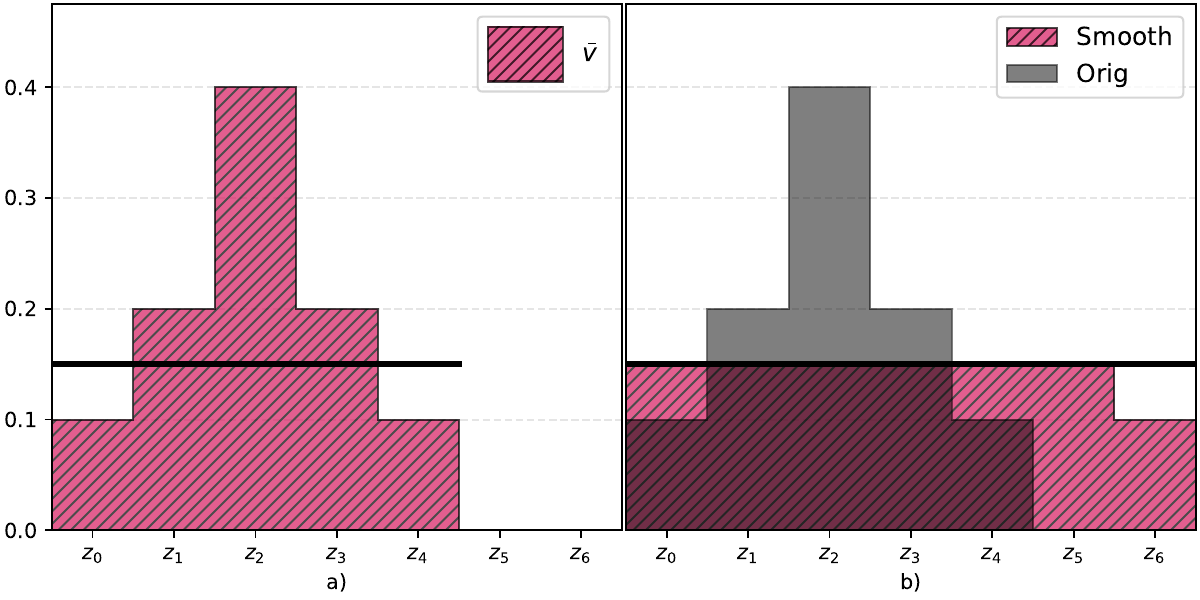}
    \caption{ 
    In a) the output distribution for a specific input $\kfoldspezv$ is shown. The black line at $p=0.15$ shows a possible cutting line to smooth the distribution. In b) an $\epsilon$-smooth version of the same distribution is shown. The area above the cutting line is redistributed across symbols with lower probabilities.}
    \label{fig:example-smoothing}
\end{figure}

The following lemma states that we do not need to `smooth' the complete joint probability distribution, but we can focus on the conditional distribution given one specific input. 
\begin{lemma}\label{lemma:varepsilon-guessing}
The $\varepsilon$-guessing probability of $\kfoldV$ given $\nfoldZ$ for random input $\kfoldV$, adversarial strategy $w\in \mathcal{W}$ and for channels $ChA^{(n)}\circ \mathrm{ECC}:\kprodvset \rightarrow \nprodzset$ such that $\vv{p^w}(\nfoldZ|\kfoldV{=}\kfoldv){=}\vv{p^w}(\nfoldZ|\kfoldV{=}\kfoldspezv)$ for all $\kfoldv$ is bounded by 
\begin{align}\label{eq:pgeps}
P^{\varepsilon\ w}_\mathrm{guess}(\nfoldZ|\kfoldV) &\leq \sum_{i=0}^{\round -1} {\tilde{p}^w}_i (\nfoldZ|\kfoldV=\kfoldspezv)\ ,
\end{align}
where $\tilde{P}^w_{\nfoldZ|\kfoldV=\kfoldspezv}\in \mathcal{P}^{\varepsilon}(P^w_{\nfoldZ|\kfoldV=\kfoldspezv})$
\end{lemma}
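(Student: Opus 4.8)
The plan is to produce a single joint distribution $\tilde P^w_{\kfoldV\nfoldZ}$ that is $\varepsilon$-close to the true $P^w_{\kfoldV\nfoldZ}$ and whose \emph{ordinary} guessing probability is already at most $\sum_{i=0}^{\round-1}\tilde p^w_i(\nfoldZ|\kfoldV=\kfoldspezv)$; since $P^{\varepsilon\ w}_{\mathrm{guess}}(\kfoldV|\nfoldZ)$ is by definition the minimum of the guessing probability over all $\varepsilon$-close joint distributions, this gives the bound at once. First I fix $w\in\mathcal{W}$ and the $\varepsilon$-close conditional $\tilde P^w_{\nfoldZ|\kfoldV=\kfoldspezv}$ named in the statement. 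The hypothesis $\vv{p^w}(\nfoldZ|\kfoldV=\kfoldv)=\vv{p^w}(\nfoldZ|\kfoldV=\kfoldspezv)$ says $P^w_{\nfoldZ|\kfoldV=\kfoldv}$ and $P^w_{\nfoldZ|\kfoldV=\kfoldspezv}$ have identical sorted probability vectors, hence there is a relabelling $\pi_\kfoldv\colon\nprodzset\to\nprodzset$ with $P^w_{\nfoldZ|\kfoldV=\kfoldv}(\nfoldz)=P^w_{\nfoldZ|\kfoldV=\kfoldspezv}(\pi_\kfoldv(\nfoldz))$ for all $\nfoldz$. I then transport the single-input smoothing to every input by setting $\tilde P^w_{\nfoldZ|\kfoldV=\kfoldv}(\nfoldz):=\tilde P^w_{\nfoldZ|\kfoldV=\kfoldspezv}(\pi_\kfoldv(\nfoldz))$ and $\tilde P^w_{\kfoldV\nfoldZ}(\kfoldv,\nfoldz):=\tfrac{1}{\kprodvsetsize}\tilde P^w_{\nfoldZ|\kfoldV=\kfoldv}(\nfoldz)$, keeping the uniform input marginal.

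The second step is to verify $\tilde P^w_{\kfoldV\nfoldZ}\in\mathcal{P}^{\varepsilon}(P^w_{\kfoldV\nfoldZ})$. Because the variational distance is invariant under relabelling the alphabet, $d\big(\tilde P^w_{\nfoldZ|\kfoldV=\kfoldv},P^w_{\nfoldZ|\kfoldV=\kfoldv}\big)=d\big(\tilde P^w_{\nfoldZ|\kfoldV=\kfoldspezv},P^w_{\nfoldZ|\kfoldV=\kfoldspezv}\big)\leq\varepsilon$ for every $\kfoldv$; and since $\tilde P^w_{\kfoldV\nfoldZ}$ and $P^w_{\kfoldV\nfoldZ}$ share the (uniform) $\kfoldV$-marginal, the joint distance coincides with the conditional distance given $\kfoldV$:
\begin{align*}
d\big(\tilde P^w_{\kfoldV\nfoldZ},P^w_{\kfoldV\nfoldZ}\big)=\sum_{\kfoldv\in\kprodvset}\tfrac{1}{\kprodvsetsize}\,d\big(\tilde P^w_{\nfoldZ|\kfoldV=\kfoldv},P^w_{\nfoldZ|\kfoldV=\kfoldv}\big)\leq\varepsilon .
\end{align*}
Thus $\tilde P^w_{\kfoldV\nfoldZ}$ is a feasible point in the minimisation defining $P^{\varepsilon\ w}_{\mathrm{guess}}(\kfoldV|\nfoldZ)$.

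It remains to bound the ordinary guessing probability of $\tilde P^w_{\kfoldV\nfoldZ}$, and here I just repeat the chain of inequalities in the proof of Lemma~\ref{lemma:guessing_for_uniform_and_equal_prob}, which uses only that all conditionals share one sorted vector and that $\kfoldV$ is uniform. By construction every $\tilde P^w_{\nfoldZ|\kfoldV=\kfoldv}$ has the sorted vector of $\tilde P^w_{\nfoldZ|\kfoldV=\kfoldspezv}$, so among the $\kprodvsetsize\cdot\nprodzsetsize$ joint probabilities each value $\tfrac1{\kprodvsetsize}\tilde p^w_i(\nfoldZ|\kfoldV=\kfoldspezv)$ occurs exactly $\kprodvsetsize$ times; taking the $\nprodzsetsize$ largest joint entries (which upper-bounds $\sum_{\nfoldz}\max_\kfoldv\tilde P^w_{\kfoldV\nfoldZ}(\kfoldv,\nfoldz)$, i.e.\ the guessing probability) picks up at most $\kprodvsetsize$ copies of each of the $\round$ largest such values, giving
\begin{align*}
\mathbf{E}_{\nfoldZ}\max_{\kfoldv}\tilde P^w_{\kfoldV|\nfoldZ}(\kfoldv,\nfoldZ)\leq\sum_{i=0}^{\round-1}\kprodvsetsize\cdot\tfrac1{\kprodvsetsize}\tilde p^w_i(\nfoldZ|\kfoldV=\kfoldspezv)=\sum_{i=0}^{\round-1}\tilde p^w_i(\nfoldZ|\kfoldV=\kfoldspezv) .
\end{align*}
Combining this with the feasibility from the previous step yields \eqref{eq:pgeps}. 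I expect the only genuinely delicate point to be the construction in the first two paragraphs: using the symmetry hypothesis to smooth all inputs \emph{consistently} (via the relabellings $\pi_\kfoldv$) and checking that no variational distance is lost when passing from the per-input conditionals to the joint; the rest is the same sorted-vector bookkeeping already carried out for Lemmas~\ref{lemma:definition_of_secrecy} and~\ref{lemma:guessing_for_uniform_and_equal_prob}.
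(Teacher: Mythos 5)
Your proof is correct and is essentially the same argument as the paper's, only far more explicit: the paper's proof is a single sentence ("the distance of the joint distribution is bounded by the distance of the conditional given a certain input"), while you spell out the crucial transport of the smoothing from the reference input $\kfoldspezv$ to every $\kfoldv$ via the relabellings $\pi_\kfoldv$, verify feasibility by decomposing the joint variational distance over the uniform $\kfoldV$-marginal, and then re-run the counting argument of Lemma~\ref{lemma:guessing_for_uniform_and_equal_prob} on the smoothed joint. The one genuinely delicate point you flag --- that the per-input smoothings must be chosen \emph{consistently} via the relabellings so that all smoothed conditionals again share one sorted vector --- is precisely the substance the paper's terse proof leaves implicit, and you handle it correctly.
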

\begin{proof}
Since the inputs are uniform and all inputs lead to the same output probability distribution, the distance of the joint distribution $P^w_{\kfoldV \nfoldZ}(\kfoldv,\nfoldz)$ is bound by the distance of the conditional distribution given a certain input. 
\end{proof}

If a distribution only has a small probability to reach an outcome with high associated probability (i.e., the probability has a small peak), then this probability can be `cut' as stated in the following lemma and illustrated in Figure~\ref{fig:example-smoothing}. 
\begin{lemma}\label{lemma:eps_close_distr_exists}
Let $P_\nfoldZ(\nfoldz)$ be a probability distribution over $\nprodzset$ such that
$\sum_{\nfoldz:P_\nfoldZ(\nfoldz)>p}P_\nfoldZ(\nfoldz){\leq}\varepsilon$ for some $p{\geq}1/\nprodzsetsize$. Then, there exists a probability distribution $Q_\nfoldZ(\nfoldz)$ over $\nprodzset$ which is $\varepsilon$-close to $P_\nfoldZ(\nfoldz)$ such that 
\begin{align*}
\sum_{\nfoldz:Q_\nfoldZ(\nfoldz)>p}Q_\nfoldZ(\nfoldz)=0.   
\end{align*}
\end{lemma}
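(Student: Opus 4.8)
The plan is to construct $Q_{\nfoldZ}$ explicitly by "capping" $P_{\nfoldZ}$ at height $p$ and redistributing the excess mass onto the symbols that lie strictly below $p$. First I would define the set of \emph{heavy} outcomes $H=\{\nfoldz : P_\nfoldZ(\nfoldz)>p\}$ and the excess mass $\Delta=\sum_{\nfoldz\in H}\bigl(P_\nfoldZ(\nfoldz)-p\bigr)$. By hypothesis $\sum_{\nfoldz\in H}P_\nfoldZ(\nfoldz)\le\varepsilon$, so certainly $\Delta\le\varepsilon$. The idea is to set $Q_\nfoldZ(\nfoldz)=p$ for every $\nfoldz\in H$ (this lowers their mass), and then add the removed mass $\Delta$ back onto the \emph{light} outcomes $L=\{\nfoldz : P_\nfoldZ(\nfoldz)\le p\}$ in such a way that no light outcome is ever pushed above $p$.

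The key step is to check that there is enough "room" below $p$ among the light outcomes to absorb $\Delta$, i.e.\ that $\sum_{\nfoldz\in L}\bigl(p-P_\nfoldZ(\nfoldz)\bigr)\ge\Delta$. This is where the assumption $p\ge 1/\nprodzsetsize$ enters: the total capacity of all $\nprodzsetsize$ outcomes at height $p$ is $p\cdot\nprodzsetsize\ge 1$, and since $P_\nfoldZ$ is a probability distribution summing to $1$, we have
\begin{align*}
\sum_{\nfoldz\in L}\bigl(p-P_\nfoldZ(\nfoldz)\bigr)
&= p\cdot|L| - \sum_{\nfoldz\in L}P_\nfoldZ(\nfoldz)
= p\cdot\nprodzsetsize - p\cdot|H| - \Bigl(1-\sum_{\nfoldz\in H}P_\nfoldZ(\nfoldz)\Bigr)\\
&\ge 1 - p\cdot|H| - 1 + \sum_{\nfoldz\in H}P_\nfoldZ(\nfoldz)
= \sum_{\nfoldz\in H}\bigl(P_\nfoldZ(\nfoldz)-p\bigr)
= \Delta\ ,
\end{align*}
so the room is sufficient. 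Concretely one can distribute $\Delta$ greedily (or proportionally to the available room $p-P_\nfoldZ(\nfoldz)$) over $L$; either way every $\nfoldz\in L$ stays $\le p$, and combined with $Q_\nfoldZ(\nfoldz)=p$ on $H$ this gives $\sum_{\nfoldz:Q_\nfoldZ(\nfoldz)>p}Q_\nfoldZ(\nfoldz)=0$ — note in particular that an outcome sitting exactly at $p$ is \emph{not} counted, which is what the statement asks for.

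Finally I would bound the variational distance. Since $Q_\nfoldZ$ decreases exactly on $H$ (total decrease $\Delta$) and increases exactly on $L$ (total increase $\Delta$), we get
\begin{align*}
d(P_\nfoldZ,Q_\nfoldZ) = \frac{1}{2}\sum_{\nfoldz}\bigl|P_\nfoldZ(\nfoldz)-Q_\nfoldZ(\nfoldz)\bigr|
= \frac{1}{2}\bigl(\Delta + \Delta\bigr) = \Delta \le \sum_{\nfoldz\in H}P_\nfoldZ(\nfoldz) \le \varepsilon\ ,
\end{align*}
so $Q_\nfoldZ$ is $\varepsilon$-close to $P_\nfoldZ$ as required. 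The only genuinely delicate point — and the one I would be most careful about — is verifying the capacity inequality above, because it is precisely the hypothesis $p\ge 1/\nprodzsetsize$ that makes it work; without it there may simply be no way to lower the peaks without creating new ones, and the lemma would be false. Everything else is bookkeeping about where mass moves.
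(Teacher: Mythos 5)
Your proposal is correct and takes essentially the same approach as the paper's: cap every probability exceeding $p$ at $p$ and redistribute the excess onto symbols with smaller probability, then bound the variational distance by the removed mass. The only difference is presentational --- the paper sorts the probability vector and constructs $Q$ by explicitly defining cut indices $k_1,k_2$, while you work with the sets $H$ and $L$ directly --- and in fact you add a useful step the paper leaves implicit: you verify, using $p\ge 1/\nprodzsetsize$, that the light outcomes have enough room below $p$ to absorb $\Delta$ (equivalently, that the paper's index $k_2$ exists). That check is exactly where the hypothesis on $p$ is used, and making it explicit is a genuine improvement in clarity.
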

To achieve this, simply `cut' the large probabilities at $p$ and `distribute' the probability above the cutting line among the values with lower probability, as depicted in Figure~\ref{fig:example-smoothing}. 
\begin{proof}
Let $\vv{p}(\nfoldZ)$ be the probability vector associated with the probability distribution $P_\nfoldZ(\nfoldz)$. Define indices $k_1\leq k_2$ such that
\begin{align*}
    k_1 &= \max_j \Big\{j:p_j(\nfoldZ)>p\Big\}\,,\\
    k_2 &= \min_j\Big\{j:\sum_{i=k_1+1}^j\big(p-p_i(\nfoldZ)\big) \geq \sum_{i=0}^{k_1} \big(p_i(Z)-p\big) \Big\}\,.
\end{align*}
Define $Q_\nfoldZ(\nfoldz)$ by it's associated probability vector $\vv{q}(\nfoldZ)$
\begin{align*}
{q}_i(\nfoldZ)&= 
\begin{cases}
p & \text{for }i=0,\ldots,k_2-1\\
p-\eta & \text{for }i=k_2\\
{p_i}(\nfoldZ) & \text{for all remaining }i
\end{cases}   
\end{align*}
where $\eta{=}\sum\limits_{i=k_1+1}^{k_2}\big(p-{p_i}(\nfoldZ)\big)-\sum\limits_{i=0}^{k_1}\big({p_i}(\nfoldZ)-p\big)$.
By construction, the distance between $Q_\nfoldZ(\nfoldz)$ and $P_\nfoldZ(\nfoldZ)$ is 
\begin{align*}
    d(P,Q)&\leq \frac{1}{2}\sum_i \abs{{p_i}(\nfoldZ)-{q_i}(\nfoldZ)}=\sum_{i=0}^{k_2}({p_i}(\nfoldZ)-p)\leq \varepsilon\ .
\end{align*}
\end{proof}

\begin{mdframed}
\subsection{Example: Reaching Capacity for the Binary Symmetric Channel}\label{sec:ex_bsc_capacity}

Take as an example the binary symmetric channel\footnote{This example does not contain an adversarial input $w$.} with 
$\lprodmset{=}\{0,1\}^\ell$, $\kprodvset{=}\{0,1\}^k$ and $\mathcal{X}=\mathcal{Y}=\mathcal{Z}=\{0,1\}$. The output probabilities correspond to a Bernoulli trial,  
\begin{align*}
P_{\nfoldZ|\kfoldV}(\nfoldz,\kfoldspezv)=(1-p_A)^{n-d_H(\nfoldx,\nfoldz)}p_A^{d_H(\nfoldx),\nfoldz)}\ , 
\end{align*}
where $\nfoldx{=}\mathrm{ECC}(\kfoldspezv)$ is the codeword obtained from $\kfoldspezv$ and $d_H$ denotes the Hamming distance. W.l.o.g.\ we can assume that $p_A<1/2$ and the highest terms will be the ones with the lowest Hamming distance.

To `smooth' this distribution, we bound the sum of the highest $2^{n-k}$ probabilities of a Bernoulli trial. We `cap' all probabilities at $\tilde{p}=(2^{-h(p_A-\delta)})^n$. The $\varepsilon$ describing the distance to the original distribution then becomes 
\begin{align*}
\varepsilon&= \sum_{j=0}^t\binom{n}{j}[ p_A^j(1-p_A)^{n-j}-\tilde{p}]\\
&\leq \sum_{j=0}^t\binom{n}{j} p_A^j(1-p_A)^{n-j}\leq e^{-2n(p_A-t/n)^2}=e^{-2n\delta^2}\ ,
\end{align*}
where we have used a Chernoff bound and where $t$ is chosen such that $\sum_{j=0}^t\binom{n}{j}=2^{n-k}$. 
The $\varepsilon$-guessing probability is now given by 
\begin{align}
\nonumber
\tilde{P}^{\varepsilon}_\mathrm{guess}(\kfoldV|\nfoldZ)&\leq 
 \sum_{i=0}^{q\cdot n}\binom{n}{i}\tilde{p}=2^{n-k}\tilde{p}\ ,
\end{align}
and we obtain the security bound
\begin{align}
\nonumber
\epsilon_{sec}^{rm}& \leq
\frac{1}{2} \sqrt{2^{\ell}\cdot 2^{n-k}\cdot 2^{-n h(p_A-\delta)}}+
e^{-2n\delta^2}
\end{align}
for any value of $\delta$. If capacity can be reached on the receiver channel, i.e., $k \approx n\cdot (1- h(p_R))$, this bound vanishes for any $\ell<n\cdot\big( h(p_R)-h(p_A)\big)$ and an appropriately chosen $\delta$, therefore reaching secrecy capacity.
\end{mdframed}

\subsection{A General Bound}\label{sec:general_bound}

We can now apply this method of `cutting' the highest probabilities to any channel with the property that  the probability to obtain a high probability outcome is low. This is, in particular, the case when an AEP holds for the distribution. The following lemma can be seen as our main result allowing to bound random-message security for any channel with the same output distribution for all inputs and where an AEP holds for this conditional distribution.

\begin{lemma}\label{lemma:general_security_bound}
Let $ChA^{(n)}\circ \mathrm{ECC}:\kfoldV\rightarrow \nfoldZ$ be a channel such that for all adversarial strategies $w\in\mathcal{W}$ and all inputs $\kfoldV{=}\kfoldv$ the output distribution is the same upon relabelling of the symbols, i.e., fix any $\kfoldspezv \in \kfoldV$. Suppose for all $\kfoldv \in \kfoldV$,
$\vv{p^w}(\nfoldZ|\kfoldV{=}\kfoldv){=}\vv{p^w}(\nfoldZ|\kfoldV{=}\kfoldspezv)$ and for some $\kappa$ and $\varepsilon$,
\begin{align}
\label{eq:limit_high_prob} \sum_{\nfoldz: P^w_{\nfoldZ|\kfoldV=\kfoldspezv}(\nfoldz)>\kappa}{P^w_{\nfoldZ|\kfoldV=\kfoldspezv}}(\nfoldz)&\leq \varepsilon\ .
\end{align}
Then 
\begin{align*}
\epsilon_{sec}^{rm}&\leq \frac{1}{2}\sqrt{\lprodmsetsize\frac{\nprodzsetsize}{\kprodvsetsize} \kappa}+\varepsilon\ .
\end{align*}
\end{lemma}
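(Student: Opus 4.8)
The plan is to chain together the lemmas already established in this section. Since $\kappa \geq 1/\nprodzsetsize$ is implicit (the hypothesis $\eqref{eq:limit_high_prob}$ is only meaningful when $\kappa$ exceeds the uniform value, and one can always assume this as otherwise the bound is trivial), I would first invoke Lemma~\ref{lemma:eps_close_distr_exists} applied to the conditional distribution $P^w_{\nfoldZ|\kfoldV=\kfoldspezv}$ with the threshold $p=\kappa$. The hypothesis $\eqref{eq:limit_high_prob}$ is exactly the precondition of that lemma, so it yields a distribution $\tilde{P}^w_{\nfoldZ|\kfoldV=\kfoldspezv}$ that is $\varepsilon$-close to $P^w_{\nfoldZ|\kfoldV=\kfoldspezv}$ and whose probability vector entries are all at most $\kappa$.

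Next I would feed this smoothed conditional distribution into Lemma~\ref{lemma:varepsilon-guessing}, which tells us that the $\varepsilon$-guessing probability of $\kfoldV$ given $\nfoldZ$ (with uniform $\kfoldV$, under the relabelling-invariance hypothesis which we have assumed) is bounded by $\sum_{i=0}^{\round-1}\tilde{p}^w_i(\nfoldZ|\kfoldV=\kfoldspezv)$. Since every entry $\tilde{p}^w_i(\nfoldZ|\kfoldV=\kfoldspezv)\leq\kappa$, this sum is at most $\round\cdot\kappa\leq \frac{\nprodzsetsize}{\kprodvsetsize}\kappa + \kappa$; here one has to be slightly careful about whether the ceiling $\round$ is replaced by $\nprodzsetsize/\kprodvsetsize$ cleanly or whether an extra $+\kappa$ term is absorbed — I expect the paper either assumes $\kprodvsetsize$ divides $\nprodzsetsize$ or silently treats $\round\approx\nprodzsetsize/\kprodvsetsize$, so I would state $P^{\varepsilon\ w}_\mathrm{guess}(\kfoldV|\nfoldZ)\leq\frac{\nprodzsetsize}{\kprodvsetsize}\kappa$ matching the desired conclusion.

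Finally I would apply the $\varepsilon$-smoothed left-over hash lemma, i.e., Lemma~\ref{lemma:definition_of_secrecy} / equation~\eqref{eq:securityformula}, which gives $\epsilon_{sec}^{rm}\leq\frac{1}{2}\sqrt{\lprodmsetsize\, P^{\varepsilon\ w}_\mathrm{guess}(\kfoldV|\nfoldZ)}+\varepsilon$ for every adversarial strategy $w$. Substituting the bound on the $\varepsilon$-guessing probability and taking the maximum over $w\in\mathcal{W}$ (the hypotheses are assumed uniformly in $w$) yields $\epsilon_{sec}^{rm}\leq\frac{1}{2}\sqrt{\lprodmsetsize\frac{\nprodzsetsize}{\kprodvsetsize}\kappa}+\varepsilon$, as claimed.

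The conceptual content is entirely in the three prior lemmas, so the proof is essentially a bookkeeping chain; the only genuine subtlety — and hence the main obstacle — is handling the ceiling in $\round$ and confirming that the hypothesis $\kappa\geq 1/\nprodzsetsize$ needed for Lemma~\ref{lemma:eps_close_distr_exists} is harmless (when $\kappa<1/\nprodzsetsize$ no distribution can have all mass below $\kappa$, but then $\frac{\nprodzsetsize}{\kprodvsetsize}\kappa<\frac{1}{\kprodvsetsize}\leq 1$ is a weak bound that can be recovered differently, or the case is simply excluded). I would note this briefly and otherwise keep the proof to three short displayed inequalities.
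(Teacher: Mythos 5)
Your proposal follows exactly the paper's route: invoke the smoothing construction of Lemma~\ref{lemma:eps_close_distr_exists} with threshold $\kappa$, feed the resulting $\varepsilon$-close distribution into Lemma~\ref{lemma:varepsilon-guessing} to bound the $\varepsilon$-guessing probability by $\round\cdot\kappa$, and close with Lemma~\ref{lemma:definition_of_secrecy}/equation~\eqref{eq:securityformula}. The only difference is that you flag two technicalities the paper silently elides --- the ceiling in $\round$ and the implicit $\kappa\geq 1/\nprodzsetsize$ precondition --- both of which are harmless in the intended setting where $\kprodvsetsize$ divides $\nprodzsetsize$, so this is the same proof written with a bit more care.
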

\begin{proof}
By Lemma~\ref{lemma:varepsilon-guessing} the $\varepsilon$-guessing probability can be bounded as 
\begin{align*}
P^{\varepsilon\ w}_\mathrm{guess}(\kfoldV|\nfoldZ) &\leq \frac{\nprodzsetsize}{\kprodvsetsize} \kappa\ ,
\end{align*}
with $\varepsilon$ the error and $\kappa$ `cut-off level'  as in~\eqref{eq:limit_high_prob}. The security bound then follows by~\eqref{eq:securityformula}.
\end{proof}
For the case of a distribution which follows an AEP for any $w$ this bound amounts to 
\begin{align*}
\epsilon_{sec}^{rm}&\leq 
   \frac{1}{2}\sqrt{\lprodmsetsize\frac{\nprodzsetsize}{\kprodvsetsize} 2^{-n\big(\mathrm{H}(Z|X=\bar{x})-\delta\big)}}+ \frac{\mathrm{Var}[-\log_2 P(Z|\kfoldV=\kfoldspezv)]}{n\delta^2}\ ,
\end{align*}
where $\mathrm{H}(Z|\kfoldV{=}\kfoldspezv)$ and $\mathrm{Var}[-\log_2 P(\kfoldV)]$ are the values associates with an individual channel use. We are therefore able to show security whenever an AEP holds.

Note that we have not used that the individual channel uses need to be identical or independent, a bound obtained this way therefore applies to any order of the channels and this order can even be chosen by the adversary. Indeed, the adversary can even pick the \emph{exact} channel from any set of channels for which this AEP holds.  

\begin{figure}[ht]
    \centering
    \includegraphics[width=\textwidth]{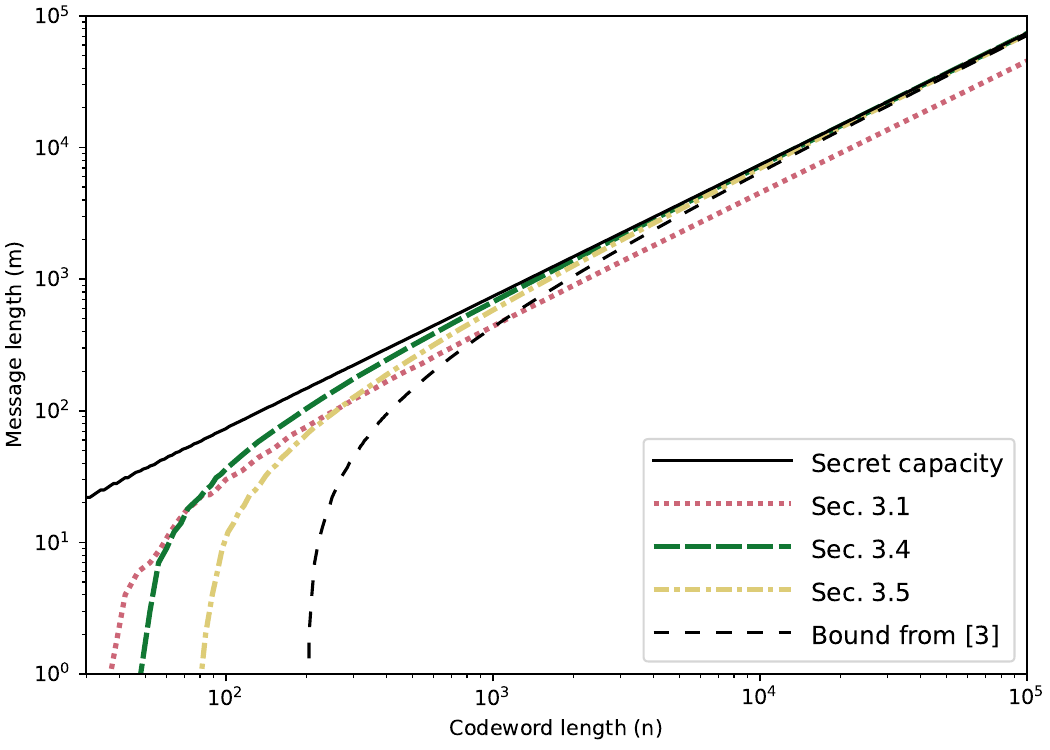}
    \caption{Secret message length vs codeword length using the bound from Section~\ref{sec:simplest_bound}, from Section~\ref{sec:ex_bsc_capacity} and Section~\ref{sec:general_bound}. For comparison, the secrecy capacity and the bound from~\cite{tessaroarxiv} are given.  The parameters used are $p_r=0.03$, $p_a=0.35$ and $\epsilon_\text{sec}\leq10^{-2}$. We state the bound for random messages, the bound for arbitrary message distributions is the above multiplied by a factor of $2$.}
    \label{fig:comparison-bounds}
\end{figure}

\begin{mdframed}
\subsection{Example: Security of a Type-Constrained Arbitrarily Varying Wiretap Channel}\label{sec:ex_arbitrarily_varying}

Consider the case where the $n$ different channel uses are memoryless, but not necessarily identical. The channel acting on each input is affected by a \emph{state} which only influences the current channel use. The state is denoted by $q_i$, i.e., each channel to the adversary is described by the transition matrix $W(z_i|x_i,q_i):=P_{Z_i|X_i,Q_i=q_i}(z_i,x_i,q_i)$. When the channel input $\nfoldx$ is a sequence of $n$ symbols, the probability of outcome $\nfoldz$ is therefore given by $P_{\nfoldZ|\nfoldX,\nfoldQ=\nfoldq}(\nfoldz,\nfoldx,\nfoldq)=\prod_i W(z_i|x_i,q_i)$. Assume that the adversary can choose the sequence of the states subject to the condition that the frequency of every possible state is predetermined, i.e., for every possible state $q$, the total proportion of positions where $q_i=q$ is fixed in advance; the adversary can however choose the order of the states.

It is easy to check that \eqref{eq:pgeps} still holds for such a channel and that the bound obtained from~\eqref{eq:pgeps} remains the same no matter the order of the state. Furthermore, as we shall see in Section~\ref{sec:achievable_rate} (see also Theorem~\ref{th:securityforavwtc}), this bound can achieve secrecy capacity when capacity of each individual channel is achieved by a uniform input distribution on that channel (this is the case, e.g., for strongly symmetric channels) and under the condition that an error-correcting code reaching Shannon capacity for the receiver's channel is known for the same distribution.
\end{mdframed}

\section{Achievable Rate}\label{sec:achievable_rate}

We now show that our approach to bound security is able to reach capacity with Protocol~\ref{prot:prot} for many channels. 

\begin{condition}\label{condition:for_capacity}We impose the following conditions on the channel:
\begin{enumerate}[label=(\alph*)]
    \item\label{item:cond_same_output_distribution} The channel between sender and eavesdropper is such that for any strategy $w\in \mathcal{W}$ all inputs lead to the same output distribution, upon relabelling of the values, i.e., fix any $\kfoldspezv \in \kfoldV$, then $\vv{p^w}(\nfoldZ|\kfoldV{=}\kfoldv){=}\vv{p^w}(\nfoldZ|\kfoldV{=}\kfoldspezv)$ for all $\kfoldv\in \kfoldV$ .
    \item\label{item:cond_capacity_for_receiver_channel} There is a capacity-achieving code for uniform $\kfoldV$ and for any strategy $w\in \mathcal{W}$ for the channel between sender and receiver, i.e., $\kprodvsetsize \rightarrow 2^{n C_R}$ and $\epsilon_{corr}\rightarrow 0$ as $n\rightarrow \infty$ with this code. 
    \item\label{item:cond_receiver_aep} For a fixed input and for any strategy $w\in \mathcal{W}$, the eavesdropper's output distribution $P^w_{\nfoldZ|\kfoldV{=}\kfoldv}(\nfoldz)$ follows an AEP, i.e.,
  \begin{align*}
\Pr[P^w_{\nfoldZ|\kfoldV{=}\kfoldv}(z_0 \ldots z_n)> 2^{-n\big(\mathrm{H}(Z|\kfoldV=\kfoldv)-\epsilon\big)}] &\leq \delta_n\ ,
\end{align*}
where $\delta_n{=} \frac{\mathrm{Var}[-\log_2 P(Z|\kfoldV{=}\kfoldv)]}{n\epsilon^2}$ and we require $\mathrm{Var}[-\log_2 P(Z|\kfoldV=\kfoldv)]$ to be bounded by a constant. 
\end{enumerate}
\end{condition}

\begin{lemma}\label{lemma:capacity_reaching}
    Under Condition~\ref{condition:for_capacity}, the scheme reaches an asymptotic secure message length of 
    \begin{align*}
        \ell&= \log_2 \lprodmsetsize= n \left( C_R - \log_2|\mathcal{Z}| + \mathrm{H}(Z|X)\right)\ ;
    \end{align*}
    i.e., an asymptotic rate of $C_R - \log_2|\mathcal{Z}| + \mathrm{H}(Z|X)$, where $C_R$ denotes the Shannon capacity of $ChR$.
\end{lemma}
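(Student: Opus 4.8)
The plan is to plug the AEP from Condition~\ref{condition:for_capacity}\ref{item:cond_receiver_aep} into the general security bound of Lemma~\ref{lemma:general_security_bound} and then optimise the resulting expression over the AEP slack. Condition~\ref{condition:for_capacity}\ref{item:cond_same_output_distribution} is exactly the hypothesis $\vv{p^w}(\nfoldZ|\kfoldV{=}\kfoldv){=}\vv{p^w}(\nfoldZ|\kfoldV{=}\kfoldspezv)$ required by Lemma~\ref{lemma:general_security_bound}, for every adversarial strategy $w$; note also that relabelling preserves entropy, so the per-use entropy of $P^w_{\nfoldZ|\kfoldV=\kfoldv}$ is the input-independent quantity $\mathrm{H}(Z|X)$ appearing in the statement. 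It therefore remains to supply a cut-off pair $(\kappa,\varepsilon)$ satisfying~\eqref{eq:limit_high_prob}. Fixing a slack $\delta>0$ and taking $\kappa = 2^{-n(\mathrm{H}(Z|X)-\delta)}$, the AEP says that the weight $P^w_{\nfoldZ|\kfoldV=\kfoldspezv}$ places on outcomes of probability exceeding $\kappa$ is at most $\delta_n = \mathrm{Var}[-\log_2 P(Z|\kfoldV{=}\kfoldspezv)]/(n\delta^2)$, uniformly in $w$, which is the required $\varepsilon$.

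Next I would substitute into Lemma~\ref{lemma:general_security_bound}, using $\log_2\kprodvsetsize = nC_R - o(n)$ from Condition~\ref{condition:for_capacity}\ref{item:cond_capacity_for_receiver_channel}, $\nprodzsetsize = 2^{n\log_2|\mathcal{Z}|}$, and $\lprodmsetsize = 2^{\ell}$, so that the secrecy bound becomes
\begin{align*}
\epsilon_{sec}^{rm} \leq \frac{1}{2}\,2^{\frac{1}{2}\left(\ell + n\log_2|\mathcal{Z}| - nC_R - n\mathrm{H}(Z|X) + n\delta + o(n)\right)} + \frac{\mathrm{Var}[-\log_2 P(Z|\kfoldV{=}\kfoldspezv)]}{n\delta^2}\,.
\end{align*}
Choosing $\ell = n\big(C_R - \log_2|\mathcal{Z}| + \mathrm{H}(Z|X) - 2\delta\big)$ makes the exponent equal to $\frac{1}{2}(-n\delta + o(n))$, so for fixed $\delta>0$ the first term decays as $n\to\infty$; since the variance is bounded by a constant, the second term decays as well. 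Hence every rate strictly below $C_R - \log_2|\mathcal{Z}| + \mathrm{H}(Z|X)$ is achievable with vanishing secrecy error, and letting $\delta = \delta_n \to 0$ slowly enough that $n\delta_n^2 \to \infty$ (e.g.\ $\delta_n = n^{-1/4}$) produces a single code family of message length $\ell = n\big(C_R - \log_2|\mathcal{Z}| + \mathrm{H}(Z|X) - o(1)\big)$. Combining with correctness — Condition~\ref{condition:for_capacity}\ref{item:cond_capacity_for_receiver_channel} gives $\epsilon_{corr}\to 0$ for this code, and $\epsilon = \epsilon_{cor} + \epsilon_{sec}$ — finishes the argument.

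The only delicate point is the joint scaling of $\delta$ and $n$: raising the achievable rate to the target value forces $\delta \to 0$, but the AEP error $\mathrm{Var}[\cdot]/(n\delta^2)$ diverges if $\delta$ shrinks too fast. The boundedness of the variance assumed in Condition~\ref{condition:for_capacity}\ref{item:cond_receiver_aep}, combined with $\delta_n$ decaying as a small negative power of $n$, is precisely what reconciles the two; one should also check that the $o(n)$ gap between $\log_2\kprodvsetsize$ and $nC_R$ coming from Condition~\ref{condition:for_capacity}\ref{item:cond_capacity_for_receiver_channel} is absorbed harmlessly into this same slack, and that the AEP applies uniformly over all adversarial strategies $w$ so that the $\max_w$ in the secrecy definition does not spoil the limit.
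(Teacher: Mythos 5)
Your argument is correct and follows the same route as the paper's proof: instantiate Lemma~\ref{lemma:general_security_bound} with the AEP threshold $\kappa = 2^{-n(\mathrm{H}(Z|X)-\delta)}$ guaranteed by Condition~\ref{condition:for_capacity}\ref{item:cond_receiver_aep}, plug in Conditions~\ref{item:cond_same_output_distribution} and \ref{item:cond_capacity_for_receiver_channel}, and take the asymptotic limit. If anything, you are more explicit than the paper about the joint scaling of $\delta$ and $n$ (the $\delta_n = n^{-1/4}$ choice) and about why the identification $\mathrm{H}(Z\mid\kfoldV=\kfoldspezv) = \mathrm{H}(Z\mid X)$ holds and why the bound is uniform in $w$, which is a welcome tightening rather than a deviation.
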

\begin{proof}
We have to show that both the correctness and the security parameter tend to zero for a large number of channel uses. Correctness is implied by assumption of Condition~\ref{condition:for_capacity}~\ref{item:cond_capacity_for_receiver_channel}. It remains to show that the secrecy parameter vanishes. The secrecy bound is given by
\begin{align*}
\epsilon_{sec}^{rm}&\leq \frac{1}{2}\max_{w\in \mathcal{W}}\sqrt{\lprodmsetsize P^{\delta_n,w}_\mathrm{guess}(\kfoldV|\nfoldZ)}+\delta_n
\leq \frac{1}{2}\sqrt{2^\ell\frac{\nprodzsetsize}{\kprodvsetsize}\kappa}+\delta_n\\
&=
\frac{1}{2}\sqrt{2^{\ell-n\big( \log_2\kprodvsetsize-\log_2|\mathcal{Z}|+\log_2(\kappa)\big)}}+\delta_n\ ,
\end{align*}
where $\kappa$ is the parameter in~\eqref{eq:limit_high_prob}. 
Asymptotically, by Condition~\ref{condition:for_capacity}~\ref{item:cond_capacity_for_receiver_channel} $\log_2\kprodvsetsize \rightarrow n C_R$; by Condition~\ref{condition:for_capacity}~\ref{item:cond_receiver_aep}
$\delta_n\rightarrow 0$ and $\log_2(\kappa)\rightarrow -n \mathrm{H}(Z|\kfoldV{=}\kfoldspezv)$. However, since $P(Z|\kfoldV{=}\kfoldspezv){=}P(Z|X=\mathrm{ECC}(\kfoldspezv))$ we can replace the input $\kfoldV$ by $X$. Furthermore, by Condition~\ref{condition:for_capacity}~\ref{item:cond_same_output_distribution} all output distributions are the same and therefore $\mathrm{H}(Z|X{=}\bar{x})=\mathrm{H}(Z|X)$ for any $\bar{x}\in \mathcal{X}$. 
\end{proof}

If a uniform input yields a uniform output $Z$ at the adversary, then the above equation becomes $C_R - I(X;Z)$, which equals the secrecy capacity. That is, the method achieves secrecy capacity when a uniform input yields a uniform $Z$ at the adversary.
This is, in particular, true when the individual channels are strongly symmetric.

\section{Secrecy for Arbitrary Message Distributions}\label{sec:distinguishing_security}

The ultimate goal is to obtain distinguishing security for arbitrary message distributions. In the following, we give conditions under which secrecy for random messages implies secrecy for arbitrary messages. While this is not the case in general, in~\cite{bellare2012cryptographic} it was shown to hold for  \emph{symmetric} discrete memoryless channels \emph{with binary inputs}. We generalize their proof to the case when the different symmetric channels are memoryless, but not necessarily identical and when the channels operate on non-binary inputs. 

A key ingredient to show the security for arbitrary message distributions is a lemma which states that for certain channels --- more precisely \emph{symmetric} channels --- all specific inputs lead to an output distribution with the same distance from the distribution when the inputs are chosen uniformly. The following is Lemma~5.8 from~\cite{tessaroarxiv} with minimally adapted notation.
\begin{lemma}[Lemma~5.8 from~\cite{tessaroarxiv}]\label{lemma:bound_by_twice_the_distance}
Let $Ch:\lprodmset \rightarrow \nprodzset$ be a symmetric channel. Let $U$ be uniformly distributed over $\lprodmset$. Then there exists a $\Delta$ such that for all $\lfoldm\in \lprodmset$
\begin{align*}
   \Delta&= d(Z(U);Z(\lfoldm))
\end{align*}
and for any $\lfoldm_0,\lfoldm_1\in \lprodmset$
\begin{align*}
   d(Z(\lfoldm_0);Z(\lfoldm_1))&\leq 2\Delta\ ,
\end{align*}
where we denoted by $Z(\lfoldm)$ and $Z(U)$ the output distribution of the adversarial channel upon input message $m$ and upon uniform input, respectively.
\end{lemma}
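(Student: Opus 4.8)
The plan is to exploit the block structure built into the definition of a symmetric channel and reduce the statement to two elementary multiset identities. Write the output partition promised by the symmetry of $Ch$ as $\nprodzset=\bigcup_v \mathcal{B}_v$, so that for each $v$ the submatrix of the transition matrix $W$ of $Ch$ formed by all inputs $m\in\lprodmset$ and only the outputs $\nfoldz\in\mathcal{B}_v$ is strongly symmetric. Unpacking ``strongly symmetric'' — all rows and all columns of that submatrix are permutations of one another — gives the two facts I will use: (a) for any $\nfoldz,\nfoldz'\in\mathcal{B}_v$ the multisets $\{W(\nfoldz|m):m\in\lprodmset\}$ and $\{W(\nfoldz'|m):m\in\lprodmset\}$ coincide; (b) for any $m,m'\in\lprodmset$ the multisets $\{W(\nfoldz|m):\nfoldz\in\mathcal{B}_v\}$ and $\{W(\nfoldz|m'):\nfoldz\in\mathcal{B}_v\}$ coincide.

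First I would use (a) to observe that the distribution of $Z(U)$ is constant on each block: since $P_{Z(U)}(\nfoldz)=\frac{1}{\lprodmsetsize}\sum_{m}W(\nfoldz|m)$ and the sum over $m$ has, by (a), the same value for all $\nfoldz$ in a given block, there is a constant $c_v$ with $P_{Z(U)}(\nfoldz)=c_v$ for every $\nfoldz\in\mathcal{B}_v$. Then I would evaluate the distance of interest block by block,
\begin{align*}
d\big(Z(U),Z(m)\big)&=\frac{1}{2}\sum_{\nfoldz\in\nprodzset}\big|W(\nfoldz|m)-P_{Z(U)}(\nfoldz)\big|=\frac{1}{2}\sum_{v}\sum_{\nfoldz\in\mathcal{B}_v}\big|W(\nfoldz|m)-c_v\big|\ .
\end{align*}
For fixed $v$ the inner sum is a function of the multiset $\{W(\nfoldz|m):\nfoldz\in\mathcal{B}_v\}$ and of the $m$-independent number $c_v$ only, and by (b) that multiset does not depend on $m$; hence $d(Z(U),Z(m))$ takes the same value for every $m$, which we call $\Delta$. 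The second claim then follows from the triangle inequality for the variational distance,
\begin{align*}
d\big(Z(m_0),Z(m_1)\big)&\leq d\big(Z(m_0),Z(U)\big)+d\big(Z(U),Z(m_1)\big)=2\Delta\ .
\end{align*}

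The only delicate point is the first step: one must check that ``each submatrix induced by a partition element is strongly symmetric'' really yields precisely the identities (a) and (b) — it does, that being the meaning of strong symmetry for the submatrix — and that this still makes sense for degenerate blocks; a singleton block $\mathcal{B}_v=\{\nfoldz\}$ merely forces $W(\nfoldz|\cdot)$ to be constant in the input, consistent with both (a) and (b). Apart from the uniformity of $U$, no hypothesis on $\lprodmset$ enters, and the argument uses nothing about the channel being memoryless or about $\nprodzset$ having a product structure; in particular it applies verbatim to the composed channel $ChA^{(n)}\circ\mathrm{INV}$ whenever that channel is symmetric.
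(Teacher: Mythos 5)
Your proof is correct and follows essentially the same route as the paper: partition the outputs into the strongly symmetric blocks, note that $Z(U)$ is constant on each block because the columns of the submatrix are permutations of one another, use the fact that the rows are permutations of one another to see that the per-block contribution to $d(Z(U),Z(m))$ is independent of $m$, and finish with the triangle inequality. The paper's version is terser, but the decomposition and the two multiset facts you isolate are precisely the reasoning it relies on.
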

\begin{proof}
The channel is symmetric, so there exists a partition of the outputs $\mathcal{Z}{=} \bigcup_v \mathcal{Z}_v$ such that each submatrix of $W$ induced by an element of the partition is strongly symmetric. Consider now the distance from uniform when restricting to one such element of the partition. All output values within one element of the partition have the same probability when the input is uniform. Since all rows of the transition matrix are permutations of each other, every input leads to the same distance from uniform. This holds for every element of the partition, therefore it also holds for all outputs. Finally, by  the triangle inequality, the distance between the output distribution of any two messages is bounded by twice the distance from uniform. 
\end{proof}

We have required for the channel $ChA^{(n)}$ to be symmetric. However, the message is not directly input into the channel. The application of the inverter and the error correcting code imply that only a subset of the possible inputs to the channel occur. We now show a condition, under which the `new' channel, restricting to certain inputs, is still symmetric and therefore the distance from uniform (over the restricted subset) is still the same for all specific inputs.
In the following, we restrict the channel inputs to be of the form $\nprodxset=F^n$ with $F$ of the form $\mathbb{Z}/p$ for some prime $p$. 

We first state some simple lemmas which we will use below. Namely, the product of two symmetric channels is symmetric.
\begin{lemma}\label{lemma:product_is_symmetric} 
Let $Ch:\mathcal{X}^2\rightarrow \mathcal{Z}^2$ be a channel such that $Ch=Ch_1\otimes Ch_2$ and each $Ch_i:\mathcal{X}_i\rightarrow \mathcal{Z}_i$ is symmetric. Then $Ch$ is symmetric.
\end{lemma}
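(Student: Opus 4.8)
The plan is to prove that the product of two symmetric channels is symmetric by working directly from the definition of a symmetric channel: we must exhibit a partition of the product output set $\mathcal{Z}_1\times\mathcal{Z}_2$ into blocks on which the induced submatrix of the product transition matrix is strongly symmetric. The natural candidate is the product partition. If $\mathcal{Z}_1=\bigcup_{v_1}\mathcal{Z}_{1,v_1}$ and $\mathcal{Z}_2=\bigcup_{v_2}\mathcal{Z}_{2,v_2}$ are the partitions witnessing symmetry of $Ch_1$ and $Ch_2$, then I would take the partition of $\mathcal{Z}_1\times\mathcal{Z}_2$ whose blocks are the sets $\mathcal{Z}_{1,v_1}\times\mathcal{Z}_{2,v_2}$ indexed by pairs $(v_1,v_2)$. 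The transition matrix of $Ch$ factorizes as $W\big((z_1,z_2)\mid(x_1,x_2)\big)=W_1(z_1\mid x_1)\,W_2(z_2\mid x_2)$, so the submatrix of $W$ restricted to rows $\mathcal{X}_1\times\mathcal{X}_2$ and columns in a block $\mathcal{Z}_{1,v_1}\times\mathcal{Z}_{2,v_2}$ is exactly the Kronecker product of the corresponding strongly symmetric submatrices of $W_1$ and $W_2$.

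The key remaining step is then the elementary fact that the Kronecker product of two strongly symmetric matrices is strongly symmetric: each row (resp.\ column) of $A\otimes B$ is obtained by concatenating scaled copies of rows (resp.\ columns) of $B$, with the scaling factors coming from a row (resp.\ column) of $A$; since all rows of $A$ are permutations of one another and all rows of $B$ are permutations of one another, any two rows of $A\otimes B$ are permutations of one another, and likewise for columns. Concretely, if $\sigma$ is the permutation of $\mathcal{X}_1$-indexed entries and $\tau$ the permutation of $\mathcal{Z}_1$-indexed entries realizing the strong symmetry of the block of $W_1$ — and similarly $\sigma',\tau'$ for $W_2$ — then the product permutations $\sigma\times\sigma'$ and $\tau\times\tau'$ realize the strong symmetry of the product block, because
\begin{align*}
W_1(z_1\mid x_1)W_2(z_2\mid x_2) &= W_1\big(\tau(z_1)\mid \sigma(x_1)\big)\,W_2\big(\tau'(z_2)\mid \sigma'(x_2)\big)\ .
\end{align*}

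I would finish by noting that the product partition is genuinely a partition of $\mathcal{Z}_1\times\mathcal{Z}_2$ (its blocks are disjoint and cover everything because the factor partitions do), so all conditions in the definition of a symmetric channel are met. I do not anticipate a serious obstacle here; the only thing requiring a little care is bookkeeping the index sets and the permutations so that the strong-symmetry identity is verified on each block with the correct product permutations, and making sure the case of strongly symmetric channels (a single-block partition) is included as a special case. An optional cleaner route is to first prove the Kronecker-product statement for strongly symmetric matrices as a standalone sub-claim and then derive the general symmetric case from it, which keeps the two layers of the argument — factorizing the partition and permuting within a block — logically separate.
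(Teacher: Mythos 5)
Your proof is correct and takes essentially the same approach as the paper: both use the product partition $\mathcal{Z}_{1,v_1}\times\mathcal{Z}_{2,v_2}$ of the output set and verify strong symmetry of each block via the factorization of the transition matrix, with product permutations $\sigma\times\sigma'$ and $\tau\times\tau'$ witnessing the required identities. Your packaging of the block-level step as a standalone fact about Kronecker products of strongly symmetric matrices is a cosmetic rather than substantive departure from the paper's more direct write-out of the permutation equalities.
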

\begin{proof}
Since $Ch_1$ and $Ch_2$ are symmetric, there exists a partition of their respective outputs $\mathcal{Z}_1{=} \bigcup_v \mathcal{Z}_{1v}$ and $\mathcal{Z}_2{=} \bigcup_w \mathcal{Z}_{2w}$ such that the channels induced by the partition are strongly symmetric. Partition the outputs of the joint channel according to $(\mathcal{Z}_{1v},\mathcal{Z}_{2w})$. Then 
\begin{align*}
    W(z_1z_2|x_1x_2)&= W(z_1|x_1)\cdot W(z_2|x_2)= W(\pi_1^{x_1\mapsto x'_1}(z_1)|x'_1)\cdot W(\pi_2^{x_2\mapsto x'_2}(z_2)|x'_2)= W(z'_1|x'_1)\cdot W(z'_2|x'_2)\\
     W(z_1z_2|x_1x_2)&= W(z_1|x_1)\cdot W(z_2|x_2)= W(z'_1|\pi_1^{z_1\mapsto z'_1}(x_1))\cdot W(z'_2|\pi_2^{z_2\mapsto z'_2}(x_2))= W(z'_1|x'_1)\cdot W(z'_2|x'_2)   
\end{align*}
for $x_1\neq x'_1$, $x_2\neq x'_2$ and $z_1\neq z'_1\in \mathcal{Z}_{1v}$ and $z_2\neq z'_2\in \mathcal{Z}_{1w}$ and therefore $(z_1,z_2)\neq (z'_1,z'_2)\in (\mathcal{Z}_{1v},\mathcal{Z}_{1w})$. 
\end{proof}

\begin{lemma}\label{lemma:single_input_is_symmetric} 
Let $Ch:\mathcal{X}\rightarrow \mathcal{Z}$ be a channel with only a single input element, i.e., $\|\mathcal{X}\|=1$. This channel is symmetric.
\end{lemma}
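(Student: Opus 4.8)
The plan is to exhibit an explicit partition of the output alphabet that witnesses symmetry. Write $x$ for the unique element of $\mathcal{X}$, so the transition matrix $W$ consists of the single row $\big(W(z\mid x)\big)_{z\in\mathcal{Z}}$. I would partition $\mathcal{Z}$ by the value of this row: for each probability value $p$ that occurs, set $\mathcal{Z}_p=\{z\in\mathcal{Z}: W(z\mid x)=p\}$. Since $\mathcal{Z}$ is finite, this is a finite partition into nonempty classes, and it is the partition I would use in the definition of a symmetric channel.

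The second step is to verify that the submatrix of $W$ induced by each class $\mathcal{Z}_p$ is strongly symmetric. Such a submatrix is again a single row, and by construction every one of its entries equals $p$. The requirement ``all rows are permutations of each other'' is trivial with a single row; the requirement ``all columns are permutations of each other'' holds because each column is the single number $p$. Reading the definition instead through its reformulation with the permutations $\pi^{x\mapsto x'}$, the quantifier ``for every $x\neq x'$'' is empty because $|\mathcal{X}|=1$, so the condition is satisfied vacuously for \emph{any} partition; the level-set partition merely makes the argument transparent while also covering the stricter matrix-permutation reading of ``strongly symmetric.'' Either way one concludes that $Ch$ is symmetric.

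I do not expect a genuine obstacle here — the statement is essentially a degenerate/base case (useful, e.g., when a product channel has a trivial factor, cf.\ Lemma~\ref{lemma:product_is_symmetric}). The only point requiring care is the slight mismatch between the paper's two phrasings of ``strongly symmetric'': the level-set partition is chosen precisely so that the argument is valid under the stricter, matrix-permutation phrasing, and hence a fortiori under the weaker one.
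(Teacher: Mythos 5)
Your proof is correct and essentially matches the paper's, which also just exhibits an explicit partition and observes that the induced submatrices are trivially strongly symmetric; the only difference is that the paper chooses the finest partition (each $z$ in its own singleton class, giving $1\times 1$ submatrices) whereas you partition $\mathcal{Z}$ by level sets of $W(\cdot\mid x)$, giving constant $1\times k$ rows. Both choices satisfy the row/column-permutation reading of ``strongly symmetric'' as well as the vacuous $\pi^{x\mapsto x'}$ reading, so the two arguments are interchangeable.
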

\begin{proof}
    Partition the set of output such that each element of $\mathcal{Z}$ is in a separate partition. The matrix induced when restricting to this output set consists of a single entry $W(z|\bar{x})$, which is clearly strongly symmetric.
\end{proof}

Furthermore, the combination of the channel with an additional channel which outputs a symbol that is a deterministic function of the \emph{previous} inputs is also symmetric.

\begin{lemma}\label{lemma:function_of_previous_is_symmetric} 
Let $Ch_1:\mathcal{X}_1\rightarrow \mathcal{Z}_1$ and $Ch_2:\mathcal{X}_2\rightarrow \mathcal{Z}_2$ be symmetric channels. 
Consider the channel $Ch:Ch_1\otimes Ch_2$ restricting to inputs $(x_1,x_2=f(x_1))$ for a surjective function $f$ such that all $x_2\in \mathcal{X}_2$ have the same number of preimages. Then $Ch$ is symmetric.
\end{lemma}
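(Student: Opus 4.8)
\noindent\emph{Proof strategy.} First I would make $Ch$ explicit: its input set is the graph $\Gamma=\{(x_1,f(x_1)):x_1\in\mathcal{X}_1\}$, which I identify with $\mathcal{X}_1$, its output set is $\mathcal{Z}_1\times\mathcal{Z}_2$, and its transition law is $W(z_1,z_2|x_1)=W_1(z_1|x_1)\,W_2(z_2|f(x_1))$; i.e.\ $Ch$ sends $x_1$ through $Ch_1$ and, independently, $f(x_1)$ through $Ch_2$. The first step is to reduce to the case that $Ch_1$ and $Ch_2$ are \emph{strongly} symmetric: if $\mathcal{Z}_1=\bigsqcup_a\mathcal{Z}_1^{(a)}$ and $\mathcal{Z}_2=\bigsqcup_b\mathcal{Z}_2^{(b)}$ witness the symmetry of $Ch_1,Ch_2$, it suffices to split each product block $\mathcal{Z}_1^{(a)}\times\mathcal{Z}_2^{(b)}$ into subsets on which the induced submatrix of $W$ is strongly symmetric and take the union of all these refinements; on each product block the submatrix again has product form, now with the (strongly symmetric, possibly substochastic) restrictions of $W_1,W_2$, and nothing below uses stochasticity. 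The second step uses the equal-fibre hypothesis: fixing a bijection $\mathcal{X}_1\cong\mathcal{X}_2\times[k]$ under which $f$ becomes the projection to the first coordinate, $Ch$ becomes the parallel shared-input use of $Ch_1$ (seeing the full input $(x_2,j)$) and of $Ch_2\circ f=Ch_2\otimes\id_k$, where $\id_k\colon[k]\to\{\ast\}$ is the trivial one-output channel. Since $\id_k$ is trivially strongly symmetric, Lemma~\ref{lemma:product_is_symmetric} gives that $Ch_2\circ f$ is symmetric, and in fact strongly symmetric (it repeats each row of $W_2$ $k$ times and inflates each column of $W_2$ by repeating every entry $k$ times). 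So the statement reduces to: \emph{the parallel shared-input use of two strongly symmetric channels on a common input alphabet is symmetric.}

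For that reduced claim, write $W(z_1,z_2|x)=W_1(z_1|x)\,W_2(z_2|x)$. Each row of $W$ is the tensor product of a row of $W_1$ (a permutation of a fixed vector $r_1$) with a row of $W_2$ (a permutation of $r_2$), hence a permutation of $r_1\otimes r_2$; so all rows of $W$ are permutations of each other. The partition of $\mathcal{Z}_1\times\mathcal{Z}_2$ I would use groups two outputs together exactly when the corresponding columns of $W$ are permutations of each other --- equivalently, the orbits of the diagonal automorphism group $H=\{(\alpha,\beta_1,\beta_2):(\alpha,\beta_1)\in\operatorname{Aut}(Ch_1),\ (\alpha,\beta_2)\in\operatorname{Aut}(Ch_2)\}$ acting on $\mathcal{Z}_1\times\mathcal{Z}_2$ through $(\beta_1,\beta_2)$. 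On each block the columns are permutations of each other by construction, so the one remaining point is the row condition: within each block the rows are still permutations of each other. I would get this from input-homogeneity of strongly symmetric channels: for any two inputs $x,x'$ one can pick permutations $\gamma_i$ of $\mathcal{Z}_i$ relating row $x$ to row $x'$ of $W_i$, and --- using that $\operatorname{Aut}(Ch_i)$ is transitive on inputs --- pick them so that $\gamma_i$ is the output part of a genuine automorphism $(\delta,\gamma_i)\in\operatorname{Aut}(Ch_i)$ with the \emph{same} $\delta$ for $i=1,2$; then $\gamma_1\times\gamma_2$ maps row $x$ of $W$ to row $x'$ and preserves the column-type partition, so each block contributes the same multiset to rows $x$ and $x'$.

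The main obstacle is precisely this last point --- showing that partitioning $\mathcal{Z}_1\times\mathcal{Z}_2$ by column type (equivalently, by $H$-orbit) also respects the rows, which is where both the strong symmetry of $Ch_1,Ch_2$ (entering through transitivity of their automorphism groups on inputs) and the equal-fibre hypothesis on $f$ (entering through the identification $\mathcal{X}_1\cong\mathcal{X}_2\times[k]$ that makes $Ch_2\circ f$ strongly symmetric) are genuinely used. That no coarser partition works, in particular not the product partition $\{\mathcal{Z}_1^{(a)}\times\mathcal{Z}_2^{(b)}\}$, is already visible when $Ch_1,Ch_2$ are binary symmetric and $f$ is two-to-one: the resulting four-output channel is symmetric but not strongly symmetric, and the correct partition pairs the outputs in a ``twisted'' way mixing the two factors. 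Everything else --- the two reduction steps and the degenerate blocks (handled via Lemma~\ref{lemma:single_input_is_symmetric}) --- is bookkeeping together with Lemma~\ref{lemma:product_is_symmetric}.
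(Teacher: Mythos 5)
Your approach is genuinely different from the paper's. The paper partitions $\mathcal{Z}_1\times\mathcal{Z}_2$ into the product blocks $\mathcal{Z}_{1v}\times\mathcal{Z}_{2w}$ coming from the partitions that witness symmetry of $Ch_1$ and $Ch_2$, and asserts each such block is strongly symmetric. You instead partition by ``column type'' (orbits of a diagonal automorphism group), and you correctly observe that the product partition is too coarse: already for $Ch_1,Ch_2$ binary symmetric and $f=\mathrm{id}$ the product partition is the whole of $\{0,1\}^2$, the columns indexed by $00$ and $01$ are not permutations of each other, and the right partition is the finer $\{00,11\},\{01,10\}$. So your criticism of the paper's column argument is on point, and your choice of a finer partition is the right instinct.

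But the step you yourself flag as the ``main obstacle'' is a genuine gap, and it cannot be filled under the stated hypotheses. After reducing to a common input alphabet $\mathcal{X}$ with $Ch_1$ and $Ch_2\circ f$ strongly symmetric, you need, for every $x,x'$, automorphisms $(\delta,\gamma_1)\in\operatorname{Aut}(Ch_1)$ and $(\delta,\gamma_2)\in\operatorname{Aut}(Ch_2\circ f)$ with $\delta(x)=x'$ and the \emph{same} $\delta$; that is, you need the intersection of the input projections of the two automorphism groups to be transitive. Transitivity of each group separately does not give this. Concretely, take $\mathcal{X}=\{0,1,2,3\}$, $f=\mathrm{id}$, $W_1(z_1|x)=g(z_1-x \bmod 4)$ (cyclic-group table) and $W_2(z_2|x)=h(z_2\oplus x)$ (Klein-four table) with $g,h$ injective. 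Both are strongly symmetric, but the input parts of $\operatorname{Aut}(Ch_1)$ and $\operatorname{Aut}(Ch_2)$ are $\{x\mapsto x+k\bmod 4\}$ and $\{x\mapsto x\oplus t\}$, whose intersection $\{\mathrm{id},\,x\mapsto x+2\}$ is not transitive. And indeed the conclusion fails: the only outputs whose column contains $g(0)h(0)$ are $(0,0),(1,1),(2,2),(3,3)$; their column multisets split them into $\{(0,0),(2,2)\}$ and $\{(1,1),(3,3)\}$; and inside $\{(0,0),(2,2)\}$ the row multiset is $\{g(0)h(0),g(2)h(2)\}$ for $x\in\{0,2\}$ but $\{g(3)h(1),g(1)h(3)\}$ for $x\in\{1,3\}$, so no admissible partition exists and $Ch$ is not symmetric. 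So the lemma as stated is in fact too strong; some extra hypothesis is needed, for instance that $Ch_1$ and $Ch_2\circ f$ are symmetric with respect to a \emph{common} transitive group of input permutations (e.g.\ translation-symmetric over the same $\mathbb{Z}/p$, which is what the later application to linear codes really wants), under which your orbit-partition argument does go through.
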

\begin{proof}
Partition the outputs of the joint channel according to $(\mathcal{Z}_{1v},\mathcal{Z}_{2w})$. 
We use the fact that $Ch_1$ and $Ch_2$ are symmetric and therefore 
\begin{align*}
    W(z_1z_2|x_1x_2)&= W(z_1z_2|x_1f(x_1))= W(z_1|x_1)\cdot W(z_2|f(x_1))\\
    &= W(\pi_1^{x_1\mapsto x'_1}(z_1)|x'_1)\cdot W(\pi_2^{f(x_1)\mapsto f(x'_1)}(z_2)|f(x'_1))= W(z'_1|x'_1)\cdot W(z'_2|x'_2)=  W(z'_1z'_2|x'_1x'_2)
\end{align*}
for $x'_2=f(x'_1)$ which is therefore still a valid input. Additionally, by the condition that the two channels are symmetric, when $z_1\in \mathcal{Z}_{1v}$ then  $z'_1\in \mathcal{Z}_{1v}$ and when $z_2\in \mathcal{Z}_{2w}$ then  $z'_2\in \mathcal{Z}_{2w}$, therefore, when $(z_1,z_2)\in (\mathcal{Z}_{1v},\mathcal{Z}_{2w})$ then $(z'_1,z'_2)\in (\mathcal{Z}_{1v},\mathcal{Z}_{2w})$. 
Furthermore, for every 
\begin{align*}
     W(z_1z_2|x_1x_2)&=  W(z_1z_2|x_1f(x_1))= W(z_1|x_1)\cdot W(z_2|f(x_1))= W(z'_1|\pi_1^{z_1\mapsto z'_1}(x_1))\cdot W(z'_2|\pi_2^{z_2\mapsto z'_2}(f(x_1)))\\
     &= W(z'_1|x'_1)\cdot W(z'_2|x'_2)   
\end{align*}
with $x'_2=\pi_2^{z_2\mapsto z'_2}(f(x_1))$. Since for every $f(x_1)$ and $f(x'_1)$ there exists a $z'_2$ such that $z_2,z'_2\in \mathcal{Z}_{2w}$ and $W(z'_2|f(x'_1)=W(z_2|f(x_1)$ we simply pick $z'_2$ such that this holds.
\end{proof}

Using the above lemmas, we can show that a channel with prime field input is still symmetric when restricting the inputs to the elements of a linear error-correcting code.

\begin{lemma}\label{lemma:condition_for_still_symmetric} 
Let $ChA:\nprodxset\rightarrow \nprodzset$ be a channel such that $ChA=\bigotimes_i ChA_i$ and each $ChA_i:\mathcal{X}_i\rightarrow \mathcal{Z}_i$ is symmetric. Let $\nprodxset=F^n$ with $F$ of the form $\mathbb{Z}/p$ for some prime $p$ and $C\subseteq F^n$ a linear error-correcting code on $F^n$. Then the channel when restricting the inputs to elements of $C$, $ChA:C\rightarrow \nprodzset$ is symmetric.
\end{lemma}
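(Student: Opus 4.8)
The plan is to build up the restricted channel $ChA\colon C\to\nprodzset$ from the factor channels one coordinate at a time, using the three preceding lemmas, with the structure of a linear code over a prime field supplying exactly the hypotheses they need. First I would isolate the combinatorial content of $C$: let $k=\dim_F C$ and fix a generator matrix $G\in F^{k\times n}$ of $C$; choosing $k$ linearly independent columns of $G$, indexed by a set $I$ with $\abs{I}=k$, makes the submatrix $G_I$ invertible, so the projection $\pi_I\colon C\to F^I$ is a bijection and every coordinate of a codeword is a linear functional of the ``information word'' $w\in F^k$ (the $I$-part of the codeword). Because $F=\mathbb{Z}/p$ is a field, such a functional is either identically $0$ or surjective with all fibres of the same size $p^{k-1}$. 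After relabelling the coordinates --- which merely permutes the tensor factors $ChA_i$ and so preserves symmetry --- I may assume $I=\{0,\dots,k-1\}$, so that $C$ consists of all $(x_0,\dots,x_{n-1})$ with $(x_0,\dots,x_{k-1})\in F^k$ arbitrary and $x_j=g_j(x_0,\dots,x_{k-1})$ for $k\le j<n$, each $g_j$ a fixed linear functional.

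Then I would induct on $m=k,\dots,n$, maintaining that $D_m:=\bigotimes_{i=0}^{m-1}ChA_i$, restricted to the valid inputs $C_m:=\{(x_0,\dots,x_{m-1}):x_j=g_j(x_0,\dots,x_{k-1})\text{ for }k\le j<m\}$, is symmetric. The base case $D_k$ on all of $F^k$ is symmetric by repeated application of Lemma~\ref{lemma:product_is_symmetric}. For the step $m\to m+1$ with $k\le m<n$: the set $C_m$ is in bijection with $F^k$ via its first $k$ coordinates, so the map $C_m\to\mathcal{X}_m=F$, $(x_0,\dots,x_{m-1})\mapsto g_m(x_0,\dots,x_{k-1})$, pulls back through that bijection to the linear functional $g_m$. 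If $g_m\not\equiv 0$, this map is surjective with equal-size preimages, $C_{m+1}$ is precisely the set of inputs $(x_1,f(x_1))$ with $x_1\in C_m$ and $f$ this map, and Lemma~\ref{lemma:function_of_previous_is_symmetric} (with $Ch_1=D_m$ and $Ch_2=ChA_m$) gives that $D_{m+1}$ is symmetric. If $g_m\equiv 0$, coordinate $m$ is constantly $0$ on $C$, so $ChA_m$ restricted to the single input $0$ is symmetric by Lemma~\ref{lemma:single_input_is_symmetric} and $D_{m+1}=D_m\otimes(ChA_m|_{\{0\}})$ is symmetric by Lemma~\ref{lemma:product_is_symmetric}. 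Since $C_n=C$, taking $m=n$ proves the claim; the degenerate case $k=0$ is handled entirely by the second branch.

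The part I expect to be most delicate is checking, at each inductive step, that the hypotheses of Lemma~\ref{lemma:function_of_previous_is_symmetric} genuinely hold on the restricted input set: that the newly appended coordinate is a deterministic function of the coordinates already present in the product --- which is exactly why the information coordinates must come first --- and, above all, that this function is surjective with uniformly many preimages. The latter is where the assumption $F=\mathbb{Z}/p$ is essential: for a non-prime modulus, or for a nonlinear code, a coordinate can be a function of the others whose fibres have unequal sizes, and then the submatrices induced on a block of the output partition fail to be strongly symmetric. The remaining ingredients --- invariance of symmetry under permuting the tensor factors of $\bigotimes_i ChA_i$, and the bookkeeping for constant coordinates --- are routine given Lemmas~\ref{lemma:single_input_is_symmetric} and~\ref{lemma:product_is_symmetric}.
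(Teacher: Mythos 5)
Your proof is correct and follows the same inductive strategy as the paper's: building up the restricted channel one coordinate at a time using Lemmas~\ref{lemma:product_is_symmetric}, \ref{lemma:single_input_is_symmetric}, and \ref{lemma:function_of_previous_is_symmetric}. The only organizational difference is that you permute coordinates so that the $k$ information symbols come first (making the base case all of $F^k$ and every subsequent coordinate a determined linear functional), whereas the paper keeps the original coordinate order and at each step branches between a free coordinate and a determined one --- equivalent reasoning, with yours making the surjectivity and equal-fibre hypotheses of Lemma~\ref{lemma:function_of_previous_is_symmetric} slightly more explicit.
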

\begin{proof}
Proceed inductively over the dimensions of the code. Since the code forms a linear subspace of the vector space $F^n$, the projection of this space onto the first $k$ dimensions also forms a subspace. For the first dimension, the projection of the code onto the first dimension either contains $1$ (the neutral) or $p$ elements. In the former case the channel acting on the first dimension is symmetric because of Lemma~\ref{lemma:single_input_is_symmetric}. In the latter case the restriction to the code as input is symmetric because of the original symmetry condition.\\
Assume now that the channel acting on the first $k-1$ dimensions is symmetric. Compare the projection of the code onto the first $k$ dimensions with the one onto $k-1$ dimensions. This projection on $k$ dimensions contains either the same number of (different) codewords as the projection on $k-1$ dimensions or $p$ times more codewords. If it contains $p$ times more codewords, the new channel is $Ch^{(k-1)}\otimes Ch_{k}$. $Ch^{(k-1)}$ is symmetric by assumption and the product of two symmetric channels is symmetric by Lemma~\ref{lemma:product_is_symmetric}.\\
If it contains the same number of codewords there are two possibilities: either $x_k=0$ (in which case the symbol $x_k$ is trivial and can be ignored) or $x_k=f(x_{(k-1)})$ for an surjective function $f$ and the channel is symmetric by Lemma~\ref{lemma:function_of_previous_is_symmetric}.
\end{proof}

We can now state and prove the main theorem relating random-message security to security for arbitrary message distribution for channels which are symmetric and memoryless (but not necessarily identical), generalizing Theorem 4.12 in~\cite{bellare2012cryptographic}.
\begin{theorem}\label{th:distinguishing_security}
Let $INV:\mathcal{M}\rightarrow \mathcal{V}$ be an inverter of a strong extractor and $ECC:\mathcal{V}\rightarrow \nfoldX$ be a linear error correcting code over $F^n$ with $F=\mathbb{Z}/p$. Let $ChA^{(n)}:\nprodxset\rightarrow\nprodzset$ be an $n$-fold memoryless but not necessarily identical channel, i.e., $ChA^{(n)}{=}\bigotimes_i ChA_i$. 
Let each $ChA_i$ be symmetric. Then 
\begin{align*}
    \epsilon_{sec}^{mt}&\leq 2\epsilon_{sec}^{rm}\ .
\end{align*}
\end{theorem}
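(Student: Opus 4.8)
The plan is to fix the public seed, show that the channel it induces from the message to the eavesdropper's observation is symmetric, invoke Lemma~\ref{lemma:bound_by_twice_the_distance}, and finish with a short averaging argument. Fix $s\in\mathcal{S}$ and let $Ch_s\colon\mathcal{M}\to\nprodzset$ be the channel that on input $m$ draws fresh $r\in_R\mathcal{R}$, forms the codeword $ECC(INV(m,s,r))$, and outputs the result of $ChA^{(n)}$; write $Z_s(m)$ for its output distribution and $Z_s(U)$ for the output under a uniform message. Because the extractor is $F$-linear in its first argument and $INV$ is its inverter, the set of codewords $\{ECC(INV(m,s,r))\colon r\in\mathcal{R}\}$ attached to $m$ is a coset $x_m+C_0$ of the fixed linear subcode $C_0=ECC(\ker Ext(\cdot,s))$ of $C=ECC(\mathcal{V})$, and $Ch_s(m)$ is the average over $x\in x_m+C_0$ of $ChA^{(n)}(x)$. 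Since $ChA^{(n)}=\bigotimes_i ChA_i$ with each $ChA_i$ symmetric, $\nprodxset=F^n$ with $F=\mathbb{Z}/p$, and $C$ is linear, Lemma~\ref{lemma:condition_for_still_symmetric} gives that $ChA^{(n)}$ restricted to inputs in $C$ is symmetric.

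The key step is then to show that $Ch_s$ is itself symmetric, i.e.\ that averaging the channel $ChA^{(n)}$ restricted to inputs in $C$ over the cosets of the subcode $C_0$ again yields a symmetric channel. I would prove this by re-running, coordinate by coordinate, the induction behind Lemma~\ref{lemma:condition_for_still_symmetric}: viewing the input of $Ch_s$ as a codeword of $C$ with the $C_0$-directions contracted, one handles each codeword coordinate that is a free new code dimension by Lemma~\ref{lemma:product_is_symmetric}, and each coordinate that is a deterministic $F$-linear function of the earlier ones by Lemma~\ref{lemma:function_of_previous_is_symmetric} (or Lemma~\ref{lemma:single_input_is_symmetric} in the degenerate case), keeping the output partitions compatible throughout; linearity of both $ECC$ and $INV$ together with $F$ being a prime field is what makes this go through.

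Granting that $Ch_s$ is symmetric, Lemma~\ref{lemma:bound_by_twice_the_distance} furnishes a $\Delta_s$ with $d(Z_s(m);Z_s(U))=\Delta_s$ for every $m$ and $d(Z_s(m_0);Z_s(m_1))\le 2\Delta_s$ for all $m_0,m_1$. Since the seed is uniform and independent of the message, on one side
\[
\epsilon_{sec}^{mt}=\max_{m_0,m_1}\frac{1}{\abs{\mathcal{S}}}\sum_{s}d\big(Z_s(m_0);Z_s(m_1)\big)\;\le\;\frac{2}{\abs{\mathcal{S}}}\sum_{s}\Delta_s\,,
\]
while on the other, for a uniform message the additional randomness $R$ makes the transmitted codeword uniform over $C$ for every fixed $s$ (the fact already used in Lemma~\ref{lemma:definition_of_secrecy}), so $\nfoldZ$ is independent of $S$ and a direct computation gives
\[
d_U\big((M,S)\mid\nfoldZ\big)=\frac{1}{\abs{\mathcal{S}}\,\abs{\mathcal{M}}}\sum_{s,m}d\big(Z_s(m);Z_s(U)\big)=\frac{1}{\abs{\mathcal{S}}}\sum_{s}\Delta_s\,.
\]
Since $d_U((M,S)\mid\nfoldZ)\le\epsilon_{sec}^{rm}$ by the definition of random-message secrecy, comparing the two displays yields $\epsilon_{sec}^{mt}\le 2\,\epsilon_{sec}^{rm}$.

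The main obstacle is the middle step — showing that averaging a symmetric channel over the cosets of a linear subcode preserves symmetry. This is precisely where linearity of the inverter \emph{and} of the code, and the restriction to prime-field alphabets, are genuinely used, and it is why the induction of Lemma~\ref{lemma:condition_for_still_symmetric} must be redone rather than quoted as a black box; the remaining ingredients — the triangle-inequality bound of Lemma~\ref{lemma:bound_by_twice_the_distance} and the identity $d_U(M\mid Z)=\mathbf{E}_m\,d(Z(m);Z(U))$ for uniform inputs — are routine.
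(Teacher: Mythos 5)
Your high-level approach matches the paper's: deduce symmetry of the relevant induced channel, invoke Lemma~\ref{lemma:bound_by_twice_the_distance}, and average over the public seed. What you do differently — and, I think, correctly — is to insist that the object to which Lemma~\ref{lemma:bound_by_twice_the_distance} must be applied is the message-level channel $Ch_s\colon\mathcal{M}\to\nprodzset$, $m\mapsto ChA^{(n)}\big(\mathrm{ECC}(\mathrm{INV}(m,s,R))\big)$, and not merely the codeword channel $ChA^{(n)}\circ\mathrm{ECC}\colon\mathcal{V}\to\nprodzset$. The paper's proof applies Lemma~\ref{lemma:condition_for_still_symmetric} only to $ChA^{(n)}\circ\mathrm{ECC}$ and then writes ``the claim follows from Lemma~\ref{lemma:bound_by_twice_the_distance}.'' Taken at face value, this gives $d(Z(v);Z(U_{\mathcal V}))=\Delta$ for all $v$ and $d(Z(v_0);Z(v_1))\le 2\Delta$, but the theorem's two $\epsilon$'s are defined in terms of the message $M$. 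Since $Z_s(m)$ is a mixture of the $Z(v)$ over the coset $\mathrm{INV}(m,s,\cdot)$, convexity of variational distance only gives $d(Z_s(m);Z_s(U))\le\Delta$, which is the wrong direction: one ends up with $\epsilon^{mt}_{sec}\le2\Delta$ and $\epsilon^{rm}_{sec}\le\Delta$ but no relation between the two. So the factor-$2$ bound genuinely requires that $\mathbf{E}_s\,d(Z_s(m);Z_s(U))$ be \emph{independent} of $m$ (equivalently, that $Ch_s$ inherits the symmetry), which is precisely the ``main obstacle'' you identify. In this sense your proposal is more careful than the paper's two-sentence proof, which elides the coset-averaging step.

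That said, your own treatment of that obstacle is only sketched. The induction of Lemma~\ref{lemma:condition_for_still_symmetric} handles \emph{restriction} of a product of symmetric channels to a linear subcode; what you need here is the different operation of \emph{averaging over cosets} of a linear subcode $C_0\subseteq C$ (or, equivalently, marginalizing a symmetric channel on $\mathcal{M}\times\mathcal{R}$ over a uniform $R$), and it is not a black-box corollary of that lemma. You say the induction ``must be redone rather than quoted,'' which is the right instinct, but you do not actually carry it out, and it is not automatic that marginalizing a symmetric channel over part of its input stays symmetric — one has to use that the symmetry comes from a product of per-coordinate symmetric channels over a prime field restricted to a linear code, so that the output permutations realizing $v\mapsto v+c$ are \emph{coherent} (independent of the base point $v$), making $Z_s(m_1)$ a fixed permutation of $Z_s(m_0)$. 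You also implicitly use $F$-linearity of $\mathrm{INV}(\cdot,s,\cdot)$ to obtain the coset structure $\{\mathrm{ECC}(\mathrm{INV}(m,s,r)):r\}=x_m+C_0$; that is true for the two extractors in the paper but is not stated as a hypothesis of the theorem, and you should flag it. Your averaging bookkeeping — the identity $d_U\big((M,S)\mid\nfoldZ\big)=\frac{1}{\abs{\mathcal S}\abs{\mathcal M}}\sum_{s,m}d\big(Z_s(m);Z_s(U)\big)$, using that $Z_s(U)$ is the same for every $s$, and the max-vs-average step giving the final $\le 2\epsilon^{rm}_{sec}$ — is correct and is exactly what the paper's argument needs to be made explicit.
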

\begin{proof}
By the property of the inverter, a uniform distribution over $\mathcal{M}$ leads to a uniform distribution over $\mathcal{V}$. By Lemma~\ref{lemma:condition_for_still_symmetric}, the channel $[ChA^{(n)}\circ ECC]$ is symmetric and the claim follows from Lemma~\ref{lemma:bound_by_twice_the_distance}.
\end{proof}

The above argument implies that we can achieve secrecy for arbitrary message distributions when the channel is a sequence of memoryless but not identical symmetric channels. In combination with Lemma~\ref{lemma:capacity_reaching}, this implies security for the arbitrarily varying wiretap channel with fixed frequency of types of individual symmetric channels, even if the adversary can choose the state sequence. We formally state this in the following theorem.
\begin{theorem}\label{th:securityforavwtc}
Let $\mathrm{INV}$ be an inverter of a two-universal hash function. 
Let $\mathrm{ECC}$ be a linear error-correcting code. Let the channel $ChA^{(n)}=\bigotimes_i ChA_i$ be such that each $ChA_i$ is 
symmetric and  described by the transition matrix $W(z_i|x_i,q_i)$. The frequency of every possible state $q$ is predetermined as $f_q$. The adversary's input $W$ corresponds to the state sequence. Then  
Protocol~\ref{prot:prot} reaches an asymptotic secure message length of 
    \begin{align*}
        \ell&= \log_2 \lprodmsetsize= \log_2\kprodvsetsize- n \left( \sum_q f_q I_q(X;Z)\right)\ .
    \end{align*}
\end{theorem}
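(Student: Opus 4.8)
The plan is to derive random-message secrecy from Lemma~\ref{lemma:general_security_bound} and then pass to arbitrary message distributions through Theorem~\ref{th:distinguishing_security}, making sure every step is uniform in the state sequence that the adversary is allowed to pick. Fix one admissible state sequence $w=(q_0,\dots,q_{n-1})$, meaning that each state $q$ occurs on exactly a fraction $f_q$ of the coordinates; the induced adversarial channel is then the product $\bigotimes_i W(\cdot|\cdot,q_i)$ of symmetric channels, and we look at the composed channel $ChA^{(n)}\circ\mathrm{ECC}\colon\kprodvset\to\nprodzset$.

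First I would verify that this composed channel satisfies the two hypotheses of Lemma~\ref{lemma:general_security_bound}. For the relabelling hypothesis $\vv{p^w}(\nfoldZ|\kfoldV{=}\kfoldv)=\vv{p^w}(\nfoldZ|\kfoldV{=}\kfoldspezv)$: in a symmetric channel the rows are, within each block of the output partition, permutations of one another, so the multiset of transition probabilities in any row --- and hence the mass placed on each block --- is the same for every input; the output law given an input codeword is the coordinatewise product of such laws, so its multiset of probabilities is independent of which codeword is sent. For the tail hypothesis~\eqref{eq:limit_high_prob}, fix the codeword $\mathrm{ECC}(\kfoldspezv)$; then $Z_0,\dots,Z_{n-1}$ are independent but not identical, the law of $Z_i$ depending only on $(\mathrm{ECC}(\kfoldspezv)_i,q_i)$, so $-\log_2 P^w_{\nfoldZ|\kfoldV=\kfoldspezv}(\nfoldZ)$ is a sum of $n$ independent uniformly bounded terms. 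A Chebyshev estimate --- the type-constrained variant of the AEP noted after~\eqref{eq:AEP} --- then shows that for any $\delta>0$ the bound~\eqref{eq:limit_high_prob} holds with $\kappa=2^{-n(\bar H-\delta)}$ and $\varepsilon=V/(n\delta^2)$, where $V$ bounds the per-letter variances and $\bar H=\frac{1}{n}\sum_i \mathrm{H}(Z_i|X_i{=}\mathrm{ECC}(\kfoldspezv)_i,Q_i{=}q_i)$. Since each per-state channel is symmetric, $\mathrm{H}(Z_i|X_i{=}x,Q_i{=}q)$ is independent of $x$, so $\bar H=\sum_q f_q\,\mathrm{H}_q(Z|X)$, which depends on the \emph{type} $\{f_q\}$ only, not on the order of the $q_i$.

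Lemma~\ref{lemma:general_security_bound} then gives, for this $w$,
\begin{align*}
\epsilon_{sec}^{rm}(w)\ \leq\ \frac{1}{2}\sqrt{\lprodmsetsize\,\frac{\nprodzsetsize}{\kprodvsetsize}\,2^{-n\left(\sum_q f_q\,\mathrm{H}_q(Z|X)-\delta\right)}}+\frac{V}{n\delta^2}\ ,
\end{align*}
and since the right-hand side depends on $w$ only through the fixed type, the same quantity bounds $\max_w\epsilon_{sec}^{rm}(w)=\epsilon_{sec}^{rm}$. Writing $\ell=\log_2\lprodmsetsize$ and $\log_2\nprodzsetsize=n\log_2\abs{\mathcal{Z}}$, the exponent inside the root equals $\ell-\log_2\kprodvsetsize+n\sum_q f_q\bigl(\log_2\abs{\mathcal{Z}}-\mathrm{H}_q(Z|X)\bigr)+n\delta$. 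Here strong symmetry of the individual channels is used a second time: a uniform input to a strongly symmetric channel produces a uniform output, so $\log_2\abs{\mathcal{Z}}-\mathrm{H}_q(Z|X)=I_q(X;Z)$ (with uniform $X$, which moreover achieves $C_q$). Taking $\delta=\delta_n\to 0$ with $n\delta_n\to\infty$ but $n\delta_n=o(n)$ (say $\delta_n=n^{-1/3}$), the exponent tends to $-\infty$ whenever $\ell\leq\log_2\kprodvsetsize-n\sum_q f_q I_q(X;Z)-o(n)$, so $\epsilon_{sec}^{rm}\to 0$ at the asserted message length.

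It remains to upgrade to arbitrary messages. For every admissible $w$ the adversarial channel $ChA^{(n)}$ is a product of the symmetric channels $W(\cdot|\cdot,q_i)$; $\mathrm{INV}$ is an inverter of a two-universal hash family, hence (by Theorem~\ref{thm:lohl}) of a strong extractor; and $\mathrm{ECC}$ is a linear code over a prime field $\mathbb{Z}/p$ (as Theorem~\ref{th:distinguishing_security} requires, via Lemma~\ref{lemma:condition_for_still_symmetric}). Thus Theorem~\ref{th:distinguishing_security} applies for each fixed $w$ and gives $\epsilon_{sec}^{mt}(w)\leq 2\,\epsilon_{sec}^{rm}(w)$; maximising over $w$ yields $\epsilon_{sec}^{mt}\leq 2\,\epsilon_{sec}^{rm}\to 0$ at the same message length. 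I expect the main obstacle to be exactly this \emph{order-independence}: one has to argue that both the AEP tail bound (through the entropy rate $\bar H$) and the ``still symmetric after restricting the inputs to the code'' property used inside Theorem~\ref{th:distinguishing_security} hold for \emph{every} state sequence of the prescribed type, so that the outer $\max_w$ is free --- this is precisely what lets the adversary choose the ordering of the states. A lesser point is the identification $\log_2\abs{\mathcal{Z}}-\mathrm{H}_q(Z|X)=I_q(X;Z)$, which genuinely uses strong (not merely weak) symmetry of the per-state channels.
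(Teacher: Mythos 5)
Your argument is correct and follows essentially the same route as the paper's: the paper's one-line proof invokes Lemma~\ref{lemma:capacity_reaching} together with Theorem~\ref{th:distinguishing_security} and the strongly-symmetric identity $\log_2\abs{\mathcal{Z}}-\mathrm{H}(Z|X)=I(X;Z)$, and you have simply unpacked Lemma~\ref{lemma:capacity_reaching} down to Lemma~\ref{lemma:general_security_bound}, the AEP tail bound, and the type-only dependence of the exponent --- which is exactly what the proof of Lemma~\ref{lemma:capacity_reaching} does internally. Your version is in fact slightly cleaner on one point: the theorem's conclusion is stated in terms of $\log_2\kprodvsetsize$ rather than $nC_R$, so going straight through Lemma~\ref{lemma:general_security_bound} (without invoking Condition~\ref{condition:for_capacity}~\ref{item:cond_capacity_for_receiver_channel}) matches the statement more faithfully; the paper's citation of Lemma~\ref{lemma:capacity_reaching} should be read as citing its proof technique rather than its stated conclusion. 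You also correctly flag two imprecisions in the theorem statement that the paper's terse proof glosses over: the identification $\log_2\abs{\mathcal{Z}}-\mathrm{H}_q(Z|X)=I_q(X;Z)$ genuinely requires each per-state channel to be \emph{strongly} symmetric (as the paper's own proof sentence and the contribution list acknowledge, though the theorem statement only says ``symmetric''), and Theorem~\ref{th:distinguishing_security}, via Lemma~\ref{lemma:condition_for_still_symmetric}, requires the linear code to live over a prime field $\mathbb{Z}/p$, a hypothesis the theorem statement omits.
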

\begin{proof}
Combining Lemma~\ref{lemma:capacity_reaching} with Theorem~\ref{th:distinguishing_security} and using the fact that for strongly symmetric channels and uniform input  $\log_2|\mathcal{Z}| - \mathrm{H}(Z|X)=I(X;Z)$.
\end{proof}

The condition on the inverter holds for both the modified Toeplitz hashing and the multiplication with a field element, we can pick either of them. Furthermore, when the error-correcting code reaches Shannon capacity on the receiver channel, then $\log_2\kprodvsetsize\approx nC_R$ and the complete scheme reaches secrecy capacity for arbitrary message distributions.

\section{Conclusion and Outlook}\label{sec:conclusion}

In this paper, we have shown the security of an explicit efficient scheme for wiretap coding based on two-universal hashing combined with any error-correcting code. The security bound applies to finite-length messages and at the same time reaches capacity asymptotically for a large class of channels which may not be memoryless and which can be influenced by the adversary. With the exception of the special case of wiretap channel II, explicit schemes were previously  unknown for channels that are not memoryless, let alone for channels which can be influenced by the adversary.

Our approach uses certain symmetry conditions to simplify the analysis, e.g.\ all inputs lead to the same output distribution on the adversary's side (upon relabelling of the values). It would be interesting to see which meaningful bounds can be obtained when this condition is relaxed. 

We are able to show security for essentially any distribution for which the high probabilities vanish, i.e., some sort of an AEP holds. Since our security bound only considers the total output probability distribution, it does not need the adversarial channel to be identical and the adversary can, in particular, always pick the order in which the channels are applied. 

\section*{Acknowledgements}
We thank Stefano Tessaro for helpful discussions. This work was supported by the Swiss National Science Foundation Practice-to-Science Grant No 199084.

\bibliographystyle{splncs04}
\bibliography{bib}

\end{document}